\DeclareMathAlphabet\mathbfcal{OMS}{cmsy}{b}{n}
\newcommand{\rms}{{\rm S}}
\newcommand{\rmt}{{\rm T}}
\newcommand{\rmx}{{\rm X}}
\newcommand{\rhos}{\rho_\rms}
\newcommand{\rhox}{\rho_{\rm X}}
\newcommand{\Scr}{{\rm Score}}
\newcommand{\bml}{{\bm{\Lambda}}}
\newcommand{\CRTL}{\mathfrak{C}_{R|\rmt,\bml}}
\newcommand{\CL}{\mathfrak{C}_{\bml}}
\newcommand{\RT}{{R|\rmt}}
\newcommand{\ket}[1]{\ensuremath{|#1\rangle}}
\newcommand{\bra}[1]{\ensuremath{\langle #1|}}
\newcommand{\proj}[1]{\ket{#1}\!\bra{#1}}
\newcommand{\be}{\begin{equation}}
\newcommand{\ee}{\end{equation}}
\newcommand{\ba}{\begin{eqnarray}}
\newcommand{\ea}{\end{eqnarray}}
\newcommand{\oge}{\succeq}
\newcommand{\ole}{\preceq}
\newcommand{\tr}{{\rm tr}}
\newcommand{\ttr}{{\rm Tr}}
\newcommand{\norm}[1]{\left\|#1\right\|}
\newcommand{\id}{\mathbb{I}}
\newcommand{\bmsl}{\bm{\sigma}_{\bml}}
\newcommand{\Free}{\mathcal{F}_{R}} 
\newcommand{\Fr}[1]{\mathcal{F}_{R|#1}} 
\newcommand{\FRS}{\Fr{\rm S}} 
\newcommand{\FRT}{\Fr{\rm T}} 
\newcommand{\SRT}{\mathcal{S}_{\RT}} 
\newcommand{\etaT}{\eta_{\rm T}}
\newcommand{\RRT}{\mathcal{R}_{R|\rmt}}
\newcommand{\calB}{\mathcal{B}}
\newcommand{\mE}{\mathcal{E}}
\newcommand{\E}{\mathcal{E}}
\newcommand{\C}{\mathcal{C}}
\renewcommand{\H}{\mathcal{H}}
\newcommand{\calL}{\mathcal{L}}
\newcommand{\U}{\mathcal{U}}
\newcommand{\rT}{{\rm T}}
\newcommand{\rX}{{\rm X}}
\newcommand{\EIL}{\mathbfcal{E}_{\bm{I},\bm{\Lambda}}}
\newtheorem{theorem}{Theorem}
\newtheorem{corollary}[theorem]{Corollary}
\newtheorem{proposition}[theorem]{Proposition}
\newtheorem{alemma}{Lemma}[section]
\newtheorem{atheorem}[alemma]{Theorem}
\newtheorem{afact}[alemma]{Fact}
\newtheorem{question}{Question}
\newtheorem{assumption}{Assumptions}
\definecolor{nred}{rgb}{0.9,0.1,0.1}
\definecolor{nblack}{rgb}{0,0,0}
\definecolor{nblue}{rgb}{0.2,0.2,0.8}
\definecolor{ngreen}{rgb}{0.3,0.7,0.3}
\definecolor{ublue}{rgb}{0,0,0.5}
\definecolor{pur}{rgb}{0.75,0,0.75}
\definecolor{nngrn}{rgb}{0,0.5,0.5}
\definecolor{CitingColor}{rgb}{0,0.3,1}
\newcommand{\blu}{\color{nblue}}
\begin{document}
\title{Resource Marginal Problems}

\author{Chung-Yun Hsieh}
\email{chung-yun.hsieh@bristol.ac.uk}
\affiliation{H. H. Wills Physics Laboratory, University of Bristol, Tyndall Avenue, Bristol, BS8 1TL, UK}
\affiliation{ICFO - Institut de Ci\`encies Fot\`oniques, The Barcelona Institute of Science and Technology, 08860 Castelldefels, Spain}

\author{Gelo Noel M. Tabia}
\affiliation{Foxconn Quantum Computing Research Center, Taipei 114, Taiwan}
\affiliation{Department of Physics and Center for Quantum Frontiers of Research \& Technology (QFort), National Cheng Kung University, Tainan 701, Taiwan}
\affiliation{Physics Division, National Center for Theoretical Sciences, Taipei 106319, Taiwan}
\affiliation{Center for Quantum Technology, National Tsing Hua University, Hsinchu 300, Taiwan}

\author{Yu-Chun Yin}
\affiliation{Institute of Communications Engineering, National Yang Ming Chiao Tung University, Hsinchu 300093, Taiwan}
\affiliation{Department of Physics and Center for Quantum Frontiers of Research \& Technology (QFort), National Cheng Kung University, Tainan 701, Taiwan}

\author{Yeong-Cherng Liang}
\email{ycliang@mail.ncku.edu.tw}
\affiliation{Department of Physics and Center for Quantum Frontiers of Research \& Technology (QFort), National Cheng Kung University, Tainan 701, Taiwan}
\affiliation{Physics Division, National Center for Theoretical Sciences, Taipei 106319, Taiwan}


\begin{abstract}
We introduce the {\em resource marginal problems}, which concern the possibility of having a resource-free target subsystem compatible with a {\em given} collection of marginal density matrices. 
By identifying an appropriate choice of resource $R$ and target subsystem $\rmt$, our problems reduce, respectively, to the well-known {\em marginal problems} for quantum states and the problem of determining if a given quantum system is a resource.
More generally, we say that a set of marginal states is {\em resource-free incompatible} with a target subsystem $\rmt$ if all global states compatible with this set must result in a resourceful state in $\rmt$ of type $R$. 
We show that this incompatibility {\em induces} a resource theory that can be quantified by a monotone 
and obtain necessary and sufficient conditions for this monotone to be computable as a conic program with finite optimum.
We further show, via the corresponding witnesses, that (1) resource-free incompatibility is equivalent to an operational advantage in some channel-discrimination tasks, and (2) some specific cases of such tasks fully characterize the convertibility between marginal density matrices exhibiting resource-free incompatibility.
Through our framework, one sees a clear connection between any marginal problem---which implicitly involves some notion of incompatibility---for quantum states and a resource theory for quantum states. 
We also establish a close connection between the physical relevance of resource marginal problems and the ground state properties of certain many-body Hamiltonians.
In terms of application, the universality of our framework leads, for example, to a further quantitative understanding of the incompatibility associated with the recently-proposed entanglement marginal problems and entanglement transitivity problems. 
\end{abstract}

\maketitle

\section{Introduction}\label{Sec:Intro}
In quantum information theory, a question that has drawn wide interest is determining if a given set of density matrices is compatible, i.e., whether some global state gives this collection of density matrices as its marginals.
Such problems are known collectively as  {\em marginal problems}~\cite{Higuchi2003,Bravyi2004,Klyachko-Review,Schilling-Thesis} for quantum states. 
Originally, they were motivated by the computation of the ground states of 2-body, usually local, Hamiltonians. 
This gives the well-known $2$-body $N$-representability problem, which asks whether the given $2$-body states can be the marginals of a single $N$-body state (see, e.g., Refs.~\cite{Klyachko-Review,bookchapter}).
A more refined version of such problems, which further demands that the global state is entangled, was comprehensively discussed in the recent work of Ref.~\cite{Navascues2021} (see also Refs.~\cite{Toth2007,Navascues2017}).

Indeed, quantum entanglement~\cite{Ent-RMP} has long been recognized as a resource under the paradigm of local operations assisted by classical communications. 
Over the years, this resource-theoretic viewpoint has further sparked the development of other {\em resource theories}~\cite{RT-RMP,Coherence-RMP,Brandao2013,Gour2008,deVicente2014,Gallego2015,Lostaglio2018,Wolfe2019}, aiming, e.g., for the quantification ~\cite{Vedral1997,Skrzypczyk2014,Baumgratz2014,Chruscinski2014,CGL15,Marvian2016} of resources and their inter-convertibility~\cite{Marvian2013,Brandao2015}.
For quantum states (and correlations), examples of such resources include, but not limited to, entanglement~\cite{Ent-RMP,Vedral1997}, coherence~\cite{Coherence-RMP,Baumgratz2014}, athermality~\cite{Horodecki2013,Serafini2019,Narasimhachar2019}, asymmetry~\cite{QRF-RMP,MarvianThesis}, nonlocality~\cite{Bell-RMP}, and steering~\cite{Steering-RMP,steering-review}. 
Thanks to the generality of the resource-theoretic framework, several structural features shared by many resources of states~\cite{Regula2018,Liu2017,Anshu2018,Liu2019,Fang2019,Takagi2019,Takagi2019-2,Regula2020,Sparaciari2020,Uola2019} have been made evident.

Although marginal problems and resource theories are seemingly unrelated, several recent developments have made clear that it is fruitful to consider them {\em simultaneously}.
For instance, the entanglement properties of a global pure state can be deduced from the spectrum of its single-party density matrices~\cite{Walter2013}. 
Even without the pure-state assumption, the nonlocality (and hence entanglement) of certain $N$-body systems can be certified using {\em only} its two-body marginals~\cite{Tura2014}. 
In fact, certain marginal information may already be sufficient to certify the entanglement~\cite{Tabia} and nonlocality of some {\em other} subsystems~\cite{Bancal2012,Barnea2013} (see also Ref.~\cite{Coretti2011}). 

As an example, consider the recently proposed {\em entanglement transitivity problems}~\cite{Tabia}.
In a tripartite setting ABC, an entanglement transitivity problem asks whether there exist entangled bipartite states $\rho_{\rm AB}$ in AB and $\sigma_{\rm BC}$ in BC such that 
\begin{itemize}
\item they are compatible; that is, they are the reduced states of some tripartite global states;
\item {\em all} these tripartite states {\em must} have their AC marginal as some {\em entangled} states.
\end{itemize}
Whenever the answer to this problem is positive, the marginal information in AB and BC {\em necessarily implies} the entanglement in AC. An explicit example~\cite{Tabia} of such entanglement-implying marginals are the bipartite marginals of the $W$-state~\cite{Dur2000}.
Surprisingly, there also exist solutions in which both $\rho_{\rm AB},\sigma_{\rm BC}$ are {\em separable} while AC is again forced to be entangled
(dubbed {\em meta-transitivity} in Ref.~\cite{Tabia}).

All these recent advances suggest the importance and need to explore various variants of marginal problems, say, involving different quantum state resources, within a single theoretical framework.
In particular, could one arrive at some general conclusions without first specifying the resource of interest? 
Here, we answer these questions in the positive by providing the first unified framework that naturally incorporates state resource theories into the state marginal problems.

\section{Preliminary Notions}\label{Sec:Preliminary}

\subsection{Quantum Resource Theories: a Brief Introduction}\label{Sec:QRT}
In the study of quantum information science, once a new resource has been discovered, such as entanglement, coherence, etc., it is important to ask the following three questions:
\begin{enumerate}[label = (Q\arabic*)]
	\item\label{Q:Characterisation} {\em How to characterize it?} That is, can we find necessary and sufficient conditions for the existence of this resource so that we can somehow detect it? 
	\item\label{Q:Comparison} {\em How to compare its contents?} Sometimes, merely knowing the existence of the resource is not enough. 
 For instance, a perfect performance in teleportation~\cite{BBC+93} not only needs entanglement, but also a maximally entangled one.
Hence, when we have two states both equipped with the resource, it is important to know how to compare them.
	\item\label{Q:Quantification} {\em How to quantify it?} Also, it is useful to know whether the strength of this resource can be evaluated in a quantitative way.
\end{enumerate}
It turns out that {\em quantum resource theories} provide a systematic, general way to do these.
We now recall the main ingredients of a quantum resource theory, or simply a resource theory. 
For further details, see, e.g., Ref.~\cite{RT-RMP}.
Formally, a resource theory for quantum states is specified by a triplet $(R,\mathcal{F}_R,\mathcal{O}_R)$, where $R$ represents the given resource. 
$\mathcal{F}_R$ is the set of {\em (resource-) free states}, which contains all states that {\em do not} possess the given resource $R$.
Finally, $\mathcal{O}_R$ is the set of {\em free operations}, which are physical transformations of states that {\em cannot} generate the given resource $R$.
For instance, when we aim to study entanglement in a given bipartition (e.g., specified by two agents A and B), we can consider
\begin{align}
	R& = \text{entanglement}\nonumber\\
	\mathcal{F}_R& = \text{separable states}\nonumber\\
	\mathcal{O}_R& = \text{local operations and classical communications (LOCC)}\nonumber
\end{align} 
Note that, throughout, we allow $\mathcal{F}_R$ and $\mathcal{O}_R$ to be of {\em any finite} dimension.
For entanglement, $\mathcal{F}_R$ contains separable states with {\em all} possible system sizes in the given bipartition AB.
They can be two-qubit separable states shared by A and B, an $(N+1)$-qubit separable states where A holds $N$ qubits, B holds one, etc.
Similarly, $\mathcal{O}_R$ contains LOCC acting separately on all the subsystems held by each agent.

Of course, resource theories are not only for entanglement.
Let us consider another example, which is the resource theory of non-equilibrium thermodynamics:
\begin{align}
	R& = \text{athermality}\nonumber\\
	\mathcal{F}_R& = \text{thermal states}\nonumber\\
	\mathcal{O}_R& = \text{thermal operations}\nonumber
\end{align} 
In this case, the resource is the status of being out-of-equilibrium (i.e., {\em athermality}), and the free states are thermal equilibrium states associated with the given system Hamiltonian and background temperature.
Finally, the allowed dynamics are thermal operations that cannot drive such states out of thermal equilibrium (see, e.g., Refs.~\cite{Horodecki2013,Serafini2019,Narasimhachar2019} for details).

In general, once we are able to define $\mathcal{F}_R$ and $\mathcal{O}_R$, a resource theory can be formulated to study the corresponding resource $R$.
If we start by setting $\mathcal{F}_R$, then we identify all states in $\mathcal{F}_R$ as {\em free}.
Hence, a suitable characterization of this mathematical set can allow us to characterize states {\em outside} this set, which can further answer question~\ref{Q:Characterisation}.

Note that once we choose $\mathcal{F}_R$, the structure of $\mathcal{O}_R$ cannot be arbitrary.
Internal consistency of a resource theory demands that any free operation $\mE$ acting on a free state $\eta$ {\em cannot} generate a resource state~\cite{RT-RMP}, hence 
\begin{align}
\mE(\eta)\in\mathcal{F}_R\quad\;\forall\; \eta\in\mathcal{F}_R\;\&\;\forall\;\mE\in\mathcal{O}_R.
\end{align}
This is sometimes called the {\em golden rule}~\cite{RT-RMP}, which is a central necessary condition in every resource theory.
With this golden rule, channels in $\mathcal{O}_R$ are understood to have {\em no} ability to generate $R$.
Now, if we can use a free operation $\mathcal{E}\in\mathcal{O}_R$ to map a state $\rho$ to another $\sigma$; i.e., $\mathcal{E}(\rho) = \sigma$, then {\em by definition} the output {\em cannot} be more resourceful than the input, since free operation cannot generate $R$.
Hence, a suitable characterization of convertibility under free operations can help us to {\em compare} the resource content of two different states, thereby answering question~\ref{Q:Comparison}.

Finally, for the quantification of $R$ [i.e., question~\ref{Q:Quantification}], one makes use of  a {\em resource monotone} $Q_R$, which satisfies 
\begin{itemize}
\item$Q_R(\rho)\ge0$ where equality holds if $\rho\in\mathcal{F}_R$. 
\item$Q_R[\mE(\rho)]\le Q_R(\rho)\;\forall\;\rho$ and $\forall\;\mE\in\mathcal{O}_R$.
\end{itemize}
Every $Q_R$ satisfying these two conditions can be understood as an appropriate quantifier of the resource $R$, thereby answering question~\ref{Q:Quantification}.
While one can impose more axioms on the definition of $Q_R$, we shall keep only the minimal requirements in this work.

Finally, note that although we only mention resource theories of quantum states, this general approach is not restricted to states.
One may also consider resource theories for measurements~\cite{Paul,Paul2022}, channels~\cite{Rosset2018,Bauml2019,LiuWinter2019,LiuYuan2019,Bu2020,Ku2022PRXQ,Vieira2024,Stratton2023}, state assemblages~\cite{Gallego2015,Skrzypczyk2014,Hsieh2016,Quintino2016,Hsieh2023,Ku2022,Ku2023,Hsieh2024,Hsieh2024-2}, measurement assemblages~\cite{Ku2022,Hsieh2023,Ku2023,Buscemi2020,Buscemi2023,Skrzypczyk2019}, etc.---as long as the mathematical properties can be well-defined and analyzed.

\subsection{Marginal Problems}
Next, let us recall the marginal problem for quantum states. 
Consider a finite-dimensional $n$-partite global system $\rms$ and let ${\bf S}$ be the collection of all $2^n-1$ nontrivial combinations of the subsystems of $\rms$. 
Moreover, let $\bm{\Lambda}$ be a collection of such subsystems, i.e., $\bm{\Lambda}\subseteq {\bf S}$. Obviously, any combination ${\rm X\in}\bm{\Lambda}$ of subsystems satisfies ${\rm X\in {\bf S}}$.
We say that a set of marginal states $\bmsl\coloneqq\{\sigma_{\rm X}\}_{{\rm X\in}\bm{\Lambda}}$  indexed by $\bm{\Lambda}$ is {\em compatible} if there exists a global state $\rhos$ such that each $\sigma_{\rm X}$ is recovered by performing the respective partial trace on $\rhos$:
\begin{align}
\tr_{\rm S\setminus X}(\rhos) = \sigma_{\rm X}\quad\;\forall\;{\rm X\in}\bm{\Lambda};
\end{align}
the corresponding $\rhos$ is then said to be compatible with $\bmsl$. When there is no such $\rhos$, $\bmsl$ is said to be {\em incompatible}.
For the above setting, the corresponding {\em marginal problem} consists in answering the following question:
\begin{center}
{\em
Is $\bmsl$ compatible?
}
\end{center}
As mentioned in \cref{Sec:Intro}, this type of problems are relevant to different aspects of quantum information science, many-body physics, and quantum chemistry.
An important motivation for considering such problems lies on understanding when the parts indeed determine the whole, see, e.g.,~\cite{LPW02,LW02,Yu2021}.

One can also consider marginal problems for probability distributions (which can be viewed as classical marginal problems). Importantly, there are fundamental differences between classical marginal problems and the marginal problems for quantum states. 
Consider, for example, the tripartite setting ABC. 
In this setting, if two bipartite probability distributions $\{P_{\rm AB}(ab)\},\{P_{\rm BC}(bc)\}$ are {\em locally compatible}, in the sense that $\sum_aP_{\rm AB}(ab) = P_{\rm B}(b) = \sum_cP_{\rm BC}(bc)$, then they {\em must} be compatible, since (see, e.g., Ref.~\cite{Hsieh2021})
\begin{align}
	P_{\rm ABC}(abc)\coloneqq\frac{P_{\rm AB}(ab)P_{\rm BC}(bc)}{P_{\rm B}(b)},
\end{align}
is a valid probability distribution satisfying
$\sum_aP_{\rm ABC}(abc) = P_{\rm BC}(bc)$
and 
$\sum_cP_{\rm ABC}(abc) = P_{\rm AB}(ab)$.
Hence, in this setting, a classical marginal problem is always trivial since local compatibility implies global compatibility.
Nevertheless, in the quantum case, this simplest setting is already {\em non-trivial}---one can consider maximally entangled states in AB and BC, which obviously have the same marginal state in B.
However, since entanglement is monogamous, it is impossible to find a tripartite state to realize these reduced states at the same time.
This demonstrates the gap between classical and quantum marginal problems, which further shows that the involvement of quantum resources can considerably change the structure of marginal problems.

\subsection{Conic Programming} 
We now briefly review {\em conic programming}.
Rather than being mathematically general, we focus on a specific form relevant to this work's physical setting (similar to the one used in, e.g., Ref.~\cite{Uola2019}).

To start with, consider a vector space $\mathcal{H}$ describing matrices with a given finite dimension.
For instance, it can be the space of all $2\times 2$ matrices, which includes all qubit density matrices.
In this vector space, a {\em proper cone} denoted by $\C\subseteq\mathcal{H}$ is a nonempty, convex and closed subset of $\H$ such that 
\begin{itemize}
\item $\alpha x\in\C$ for every $\alpha\ge0$ if $x\in\C$.
\item $x\in\C$ and $-x\in\C$ imply that $x$ is the null vector in $\H$.
\end{itemize}
Roughly speaking, the first condition means that $\C$ is a collection of ``rays'' emitted from the origin, which can be visualized as a cone.
On the other hand, the second condition means that this cone has a preferred direction, since it cannot contain a line passing through the origin.
With a given proper cone $\C$, we can define a partial order---we can {\em compare} $x,y$ via $\C$ by defining $x\preceq_{\C} y$ whenever $y-x\in\C$.
In this work, $x\preceq y$ means that $y-x$ is a positive semi-definite matrix, which can be viewed as the partial order defined by the positive semi-definite cone.

Now, consider two vector spaces $\H,\H'$ of matrices with possibly different finite dimensions.
Suppose that they are equipped with some inner products, which we write as $\langle \cdot,\cdot\rangle$ for simplicity.
In this work, we consider the Hilbert-Schmidt inner product or some other forms induced by it.
Then, following Refs.~\cite{Uola2019,Uola2020,Takagi2019,Gartner2012,Boyd:Book}, the {\em primal problem} of a conic program can be written as~\cite{Gartner2012}
\begin{equation}\label{Eq:CPPrimal}
\begin{split}
\max_{x}\quad&\langle A,x\rangle\\
{\rm s.t.}\quad& x\in\C;\;\mathfrak{L}(x)\preceq B,
\end{split}
\end{equation}
where $\mathfrak{L}:\mathcal{H}\to\mathcal{H}'$ is a linear map, and $A\in\H,B\in\H'$ are some constants.
The (Lagrange) {\em dual problem} of Eq.~\eqref{Eq:CPPrimal} may then be written as~\cite{Gartner2012}:
\begin{equation}\label{Eq:CPDual}
\begin{split}
\min_{z}\quad&\langle B,z\rangle\\
{\rm s.t.}\quad&z\succeq 0;\;\left\langle \mathfrak{L}^\dagger(z)-A,x\right\rangle\ge0\;\forall\;x\in\C,
\end{split}
\end{equation}
where $\mathfrak{L}^\dagger$ is the map dual to $\mathfrak{L}$. 
By construction, the optimum of \cref{Eq:CPDual} always upper bounds that of \cref{Eq:CPPrimal}.
When these optimum values coincide, one says that {\em strong duality} holds~\cite{Boyd:Book}.
This happens, for example, when the primal problem is convex and Slater's conditions hold. 
In our setting, it is equivalent to checking whether there exists a point $x$ in the relative interior of $\C$ such that $B-\mathfrak{L}(x)$ is strictly positive (see Refs.~\cite{Uola2019,Gartner2012}, Appendices~\ref{App:MathPreliminary}, and~\ref{App:CPIntro} for further details).

\section{Results}   
\subsection{Framework: Resource-Free Compatibility} 

Importantly, as will become evident from our framework, the problem of whether a given physical system $\rmt$ is a resource of type $R$ can also be phrased as a compatibility problem. 
Our interest is to provide a unified framework for addressing both types of compatibility problems at the same time. 
Henceforth, we adopt the shorthand $\rho_{\rm X}\coloneqq\tr_{\rm S\setminus X}(\rhos)$ and denote by $\FRT$ the set of all free states with respect to the resource $R$ in the subsystem $\rmt$, i.e., $\FRT = \mathcal{F}_R\cap\mathcal{S}_{\rm T}$, where $\mathcal{S}_{\rm T}$ is the set of all states on $\rmt$. 
A central notion capturing both compatibilities is then given as follows.
For any given subsystem $\rmt$ of $\rms$ and resource of type $R$, the collection of density matrices $\bmsl$ is said to be {\em $R$-free compatible in $\rmt$} (or simply {\em $R$-free compatible} when there is no risk of confusion) if 
\begin{align}
\exists\;\rhos\;\;{\rm compatible\;with}\;\;\bmsl\;{\rm s.t.}\;\rho_\rmt\in\FRT.
\end{align}
Namely, if $\bmsl$ is $R$-free compatible, one can find a global state $\rhos$ such that all density matrices in $\bmsl$ are its marginals and, in particular, its marginal in $\rmt$ is {\em free}.
In other words, density matrices in $\bmsl$ are not only compatible among themselves, but also with a target system T that is {\em not} a quantum resource of type $R$.

Conversely, $\bmsl$ is called {\em $R$-free incompatible in $\rmt$} (or simply {\em $R$-free incompatible}) if it does not satisfy the above condition, namely, either $\bmsl$ is incompatible or
\begin{align}
\forall\;\rhos\;\;{\rm compatible\;with}\;\;\bmsl\;\Rightarrow\;\rho_\rmt\notin\FRT.
\end{align}
When $\bmsl$ is compatible {\em and} $R$-free incompatible at the same time, one can use these density matrices to {\em certify} the resourceful nature of the target system---since {\em every} global state having $\bmsl$ as its marginals {\em must} possess the quantum resource of type $R$ in T.
This is the notion of resource that we aim to study in this work.

Our central question is then defined as follows:
For any given triplet ($R$, $\bmsl$, $\rmt$), the corresponding {\em resource marginal problem} consists in answering the following question:
\begin{center}
{\em
Is $\bmsl$ $R$-free compatible in $\rmt$?
}
\end{center}
To answer this question, consider now, for the given triplet ($R$, $\bm{\Lambda}$, $\rmt$), the set of density matrices indexed by $\bm{\Lambda}$  that are $R$-free compatible in $\rmt$:
\begin{align}\label{Eq:CTR}
	\CRTL\coloneqq\{\bm{\tau}_{\bm{\Lambda}}\,|\,\exists\,\rhos\;{\rm compatible\;with}\;\bm{\tau}_{\bm{\Lambda}}\;{\rm s.t.}\;\rho_\rmt\in\FRT\}.
\end{align}
Clearly, the set $\CL$ of {\em all} compatible marginal density matrices associated with subsystems specified by $\bm{\Lambda}$ is a superset of $\CRTL$.
Then, the complement of $\CRTL$ in $\CL$, i.e., $\CL\setminus\CRTL$, is simply the set of compatible $\bmsl$ that must necessarily result in a resourceful marginal in $\rmt$.

Among others,  resource marginal problems contain as special cases the usual marginal problem for quantum states and the problem of deciding whether a given system is a resourceful state. 
Of course, our unified framework allows us to obtain a general, quantitative analysis beyond these special cases, see~\cref{Fig:Summary}.
Note further that assuming $\bmsl\in\CL$ may be very natural, e.g., in certifying the resource nature of $\rmt$. 
However, this assumption does not have to be imposed {\em a priori} since the characterization of $\CL$ represents a specific instance of our general problem. Indeed, checking the membership of $\CL\setminus\CRTL$ is equivalent to determining the membership of both $\CL$ and $\CRTL$. Hence, we shall stick to the most general setting given by the definition of $R$-free compatibility in subsequent discussions.

\begin{center}
\begin{figure*}
\scalebox{0.85}{\includegraphics{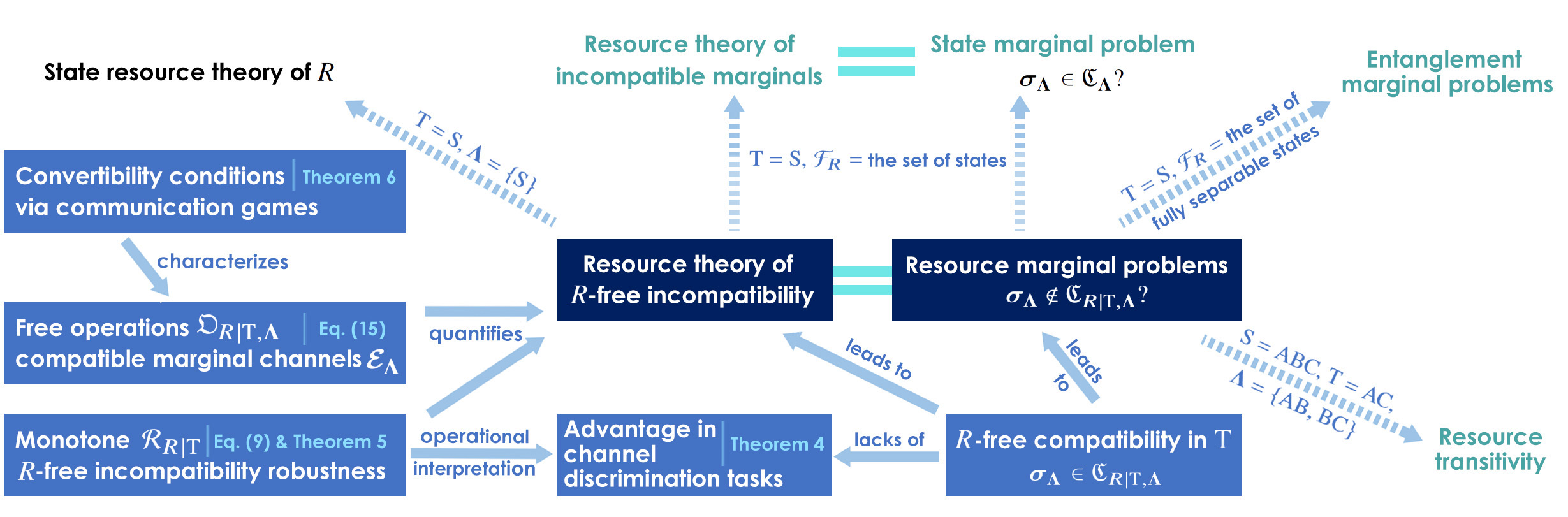}}
\caption{
{\bf Summary of the key notions introduced (boxed) in this work and how they are related to each other.}
For a given collection of density matrices $\bmsl$ for subsystems $\bm{\Lambda}$, the set of compatible $\bmsl$ is denoted by $\CL$. 
When a resource $R$ and a target subsystem $\rmt$ are also specified, one can further define $\CRTL\subseteq\CL$ such that all members of $\CRTL$ are $R$-free compatible in $\rmt$. 
This primitive notion immediately leads to the resource marginal problem corresponding to a given $R$, $\rmt$, and $\bmsl$. 
When $\bmsl\in\CRTL$ is taken as a free resource and the set $\mathfrak{O}_{R|\rmt,\bm{\Lambda}}$ of compatible marginal channels $\mathbfcal{E}_{\bm{\Lambda}}$, \cref{Eq:FreeOperation}, is taken as the set of free operations, then we also obtain the resource theory of $R$-free incompatibility for the collection of density matrices $\bmsl$. 
A dashed line originating from a node leads to a special case of this node. 
Double solid lines without any arrow connecting two nodes means that a one-to-one correspondence can be established between the two.
}
\label{Fig:Summary} 
\end{figure*}
\end{center}

\subsection{$R$-Free Incompatibility as a Resource}
Inspired by the approach of Refs.~\cite{Uola2019,Uola2020}, we now introduce the {\em $R$-free incompatibility robustness} $\RRT(\bmsl)$
to quantify resource-free incompatibility. Formally, for every set of states $\bmsl$, define
\begin{align}\label{Eq:Robustness}
	\RRT(\bmsl)\coloneqq\log_2\inf\left\{\lambda\ge0\,|\,\bmsl\preceq\lambda\bm{\kappa}_{\bm{\Lambda}}, \bm{\kappa}_{\bm{\Lambda}}\in\CRTL\right\},
\end{align}
where the inequality $\bmsl\preceq\lambda\bm{\kappa}_{\bm{\Lambda}}$ means $\sigma_{\rm X}\preceq\lambda\kappa_{\rm X}$ for every ${\rm X}\in{\bm\Lambda}$.
To understand the meaning of this figure-of-merit, note that it can be {\em equivalently} defined as the minimization over a real number $\mu\ge0$ such that
\begin{align}\label{Eq:Robustness-v2}
	2^{-\mu}\bmsl + \left(1-2^{-\mu}\right)\bm{\tau}_{\bm{\Lambda}}\in\CRTL,
\end{align}
where $\bm{\tau}_{\bm{\Lambda}}$ is any collection of density matrices, playing the role of {\em noise} in the above convex mixture.
In this form, one can see that $\RRT$ measures ``how robust'' the $R$-free incompatibility of the given collection $\bmsl$ is against noise described by the term $\bm{\tau}_{\bm{\Lambda}}$.
Importantly, there is no restriction on the noise $\bm{\tau}_{\bm{\Lambda}}$ here, and for this reason this type of robustness is sometimes called {\em generalized} robustness.

Evidently, $\RRT(\bmsl)\ge0$, and equality holds if $\bmsl~\in~\CRTL$.
This means that $\RRT>0$ certifies $R$-free incompatibility.
When the set $\CRTL$ is closed (in a suitably chosen topology), $\RRT(\bmsl)=0$ {\em only if} $\bmsl\in\CRTL$; in this case, the robustness measure becomes {\em faithful}.
Now, for a given set of operators ${\bf O}_{\bm{\Lambda}}\coloneqq\{O_{\rm X}\}_{\rm X\in{\bm\Lambda}}$, we define $\langle\bmsl,{\bf O}_{\bm{\Lambda}}\rangle\coloneqq\sum_{{\rm X\in}\bm{\Lambda}}\tr(\sigma_{\rm X}O_{\rm X})$.
Following the methodologies of Refs.~\cite{Uola2019,Uola2020,Takagi2019}, some general results can be obtained by imposing the following assumptions on $R$, $\rmt$, and $\bm{\Lambda}$:

\newpage

\renewcommand*{\theassumption}{\Alph{assumption}}

\begin{assumption}\label{Assumptions}
{\em\bf (Working Hypothesis)}
\begin{enumerate}[label = (A\arabic*)]
\item\label{Assumption:Convex} $\FRT$ is convex and compact.
\item\label{Assumption:Static} There exists $\rhos$ such that $\rho_\rmt\in\FRT$ and $\rho_{\rm X}$ is full-rank for every ${\rm X\in}\bm{\Lambda}$.
\end{enumerate}
\end{assumption}
Specifically, we prove the following in~\cref{App:Proof-Result:RTWitness}.

\begin{theorem}\label{Result:RTWitness}
{\em\bf ($R$-Free Incompatibility Witness)}
Let $R$, {\rm T}, and $\bm{\Lambda}$ satisfy Assumptions~\ref{Assumptions}, then $\bmsl\notin\CRTL$ if and only if there exists ${\bf W}_{\bm{\Lambda}}\coloneqq\{W_{\rm X}\succeq0\}_{{\rm X\in}\bm{\Lambda}}$ such that
\begin{align}
\sup_{\bm{\tau}_{\bm{\Lambda}}\in\CRTL}\langle\bm{\tau}_{\bm{\Lambda}},{\bf W}_{\bm{\Lambda}}\rangle<\langle\bmsl,{\bf W}_{\bm{\Lambda}}\rangle.
\end{align}
\end{theorem}
Theorem~\ref{Result:RTWitness} can be understood as the witness of $R$-free incompatibility, i.e., the $R$-free incompatibility of any given $\bmsl$ can always be detected by a separating hyperplane formed by finitely many positive semi-definite $W_{\rm X}$.
It is also worth mentioning that Theorem~\ref{Result:RTWitness} answers the question~\ref{Q:Characterisation}.
As we shall demonstrate in the later section, questions~\ref{Q:Comparison} and~\ref{Q:Quantification} can also be answered as desired for $R$-free incompatibility.

\subsection{Robustness Measure as a Conic Program and the Minimal Assumptions for Its Regularity}
In Appendix~\ref{App:RConeProgram} we show that $2^{\RRT(\bmsl)}$ is the solution of the following optimization problem:
\begin{subequations}\label{Eq:RConicProgram}
\begin{eqnarray}\label{Eq:trV}
\begin{aligned}
	\min_{V}\quad&\tr(V)\\
	{\rm s.t.}\quad& V\in\mathbfcal{C}_{R|\rmt};\;\sigma_{\rm X}\preceq\tr_{\rm S\setminus X}(V)\quad\forall\;{\rm X\in}{\bm{\Lambda}},
\end{aligned}
\end{eqnarray}
where 
\begin{align}\label{Eq:C-Cone}
	\mathbfcal{C}_{R|\rmt}\coloneqq\{\alpha\eta_\rms\,|\,\alpha\ge0,\eta_\rms:{\rm state\;s.t.}\;\tr_{\rm S\setminus T}(\eta_\rms)\in\FRT\}.
\end{align}
\end{subequations}
Thus, as long as $\mathbfcal{C}_{R|\rmt}$ is a cone, we may cast the computation of $\RRT$ as a conic program, cf.~\cref{Eq:CPPrimal}. 
However, even then, strong duality may not be guaranteed to hold, or $\RRT$ could be unbounded. Next, we present the {\em minimal} assumptions required to meet these desirable features.

In Appendix~\ref{App:Proof-Result:Equivalence} we prove the following result:
\begin{proposition}\label{Result:Equivalence}
{\em\bf (Minimal Assumptions)}
Given $R$, $\rmt$, and $\bm{\Lambda}$, the following statements are equivalent:
\begin{enumerate}
\item\label{Condition1} Assumptions~\ref{Assumptions} hold.
\item\label{Condition2} (i) $\mathbfcal{C}_{R|\rmt}$ is a proper cone. (ii) For every $\bmsl$, Eq.~\eqref{Eq:RConicProgram} is a finite, feasible conic program with strong duality.
\end{enumerate}
\end{proposition}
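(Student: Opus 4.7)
The plan is a direct two-sided proof. For (1)$\Rightarrow$(2), I would show in turn that $\mathbfcal{C}_{R|\rmt}$ is a proper cone, that the primal problem \eqref{Eq:RConeProgram} is feasible and bounded for every $\bmsl$, and that Slater's condition holds, so that strong duality applies. For the converse (2)$\Rightarrow$(1), I would recover (A1) from the cone structure alone and (A2) from the requirement that the program be finite for every input.

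For (1)$\Rightarrow$(2), the proper-cone axioms are either immediate or descend directly from (A1) and (A2): pointedness and closure under nonnegative scaling follow from the representation $V=\alpha\eta$ with $\alpha\geq 0$ and $\eta\succeq 0$; nonemptiness follows from (A2), which supplies a state whose $\rmt$-marginal lies in $\FRT$; convexity descends from convexity of $\FRT$ by averaging the $\rmt$-marginals of representatives; and closedness descends from compactness of $\FRT$ together with continuity of the partial trace. Finiteness is then easy: the objective $\tr(V)\geq 0$ bounds below, and the choice $V=c\rhos$ with $\rhos$ from (A2) and $c$ large enough that $c\rho_{\rm X}\succeq\sigma_{\rm X}$ for every ${\rm X}\in\bml$ exhibits a feasible point---possible because every $\rho_{\rm X}$ is full rank.

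For Slater's condition I must produce a $V^\ast\in\mathrm{relint}(\mathbfcal{C}_{R|\rmt})$ satisfying the strict matrix inequalities $\sigma_{\rm X}\prec\tr_{\rm S\setminus X}(V^\ast)$ for every ${\rm X}\in\bml$. The construction combines (A1) and (A2): pick $\tau^\ast\in\mathrm{relint}(\FRT)$ (nonempty because $\FRT$ is convex and nonempty), extend it to some global state $\widetilde{\rho}_\rms$ with $\tr_{\rm S\setminus T}(\widetilde{\rho}_\rms)=\tau^\ast$, and mix with the (A2) state via $\rho'_\rms=(1-\lambda)\widetilde{\rho}_\rms+\lambda\rhos$ for some $\lambda\in(0,1)$. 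Then $\rho'_\rmt$ is a convex combination of a relative-interior point of $\FRT$ with a point of $\FRT$, hence lies in $\mathrm{relint}(\FRT)$, while every $\rho'_{\rm X}\succeq\lambda\rho_{\rm X}\succ 0$ remains full rank. Setting $V^\ast=c\rho'_\rms$ for $c$ sufficiently large yields the desired strict inequalities, and the standard relative-interior characterization of a cone over a compact convex base---here $\mathcal{F}_R^\rms=\{\eta\in\mathcal{S}_\rms:\eta_\rmt\in\FRT\}$---places $V^\ast$ in $\mathrm{relint}(\mathbfcal{C}_{R|\rmt})$.

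For the converse (2)$\Rightarrow$(1), convexity and closedness of $\FRT$ are pulled back from those of $\mathbfcal{C}_{R|\rmt}$ via the affine embedding $\tau\mapsto\tau\otimes(\id/d_{\rm S\setminus T})$ composed with continuity of the partial trace; compactness then follows since $\FRT\subseteq\mathcal{S}_\rmt$ is a closed subset of the compact state space, yielding (A1). For (A2), nonemptiness of $\mathbfcal{C}_{R|\rmt}$ already provides some $\rhos$ with $\rho_\rmt\in\FRT$; the full-rank requirement is then forced by feasibility for \emph{every} $\bmsl$---e.g., taking $\sigma_{\rm X}=\id/d_{\rm X}$ on each ${\rm X}\in\bml$ forces the existence of a single $V\in\mathbfcal{C}_{R|\rmt}$ with all ${\rm X}$-marginals full rank, which supplies the desired $\rhos$. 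The principal obstacle is the Slater step: pinning down $\mathrm{relint}(\mathbfcal{C}_{R|\rmt})$ precisely enough to certify the constructed $V^\ast$ while simultaneously arranging strict dominance of every $\sigma_{\rm X}$ is where (A1) and (A2) must be used in tandem---(A1) provides relative-interior slack on the $\rmt$-slot, (A2) provides full-rank slack on the remaining slots, and neither assumption alone suffices for both.
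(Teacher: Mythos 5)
Your overall architecture matches the paper's (cone properties from (A1), feasibility from (A2), Slater's condition for strong duality, and the $\sigma_{\rm X}=\id_{\rm X}/d_{\rm X}$ trick for the converse), and the converse direction is essentially correct as written. However, there is a genuine gap in the Slater step of (1)$\Rightarrow$(2). The cone $\mathbfcal{C}_{R|\rmt}$ is the cone over the \emph{global} base $\SRT=\{\eta_\rms\in\mathcal{S}_\rms:\eta_\rmt\in\FRT\}$, so the cone-over-a-base characterization you invoke requires $\rho'_\rms\in{\rm relint}(\SRT)$. What you actually establish is only that $\rho'_\rmt\in{\rm relint}(\FRT)$ together with full-rank marginals on each ${\rm X}\in\bml$, and this does \emph{not} imply $\rho'_\rms\in{\rm relint}(\SRT)$. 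Concretely, take ${\rm S=AB}$ with ${\rm T=A}$ and $\FRT=\mathcal{S}_{\rm A}$, so that $\SRT=\mathcal{S}_{\rm AB}$ and ${\rm relint}(\SRT)$ is the set of full-rank two-qubit states. Choosing $\widetilde{\rho}_\rms=\proj{\Phi^+}$ (whose ${\rm A}$-marginal $\id/2$ lies in ${\rm relint}(\FRT)$) and the (A2) state $\rhos=\tfrac{\id_{\rm A}}{2}\otimes\proj{0}_{\rm B}$, the mixture $\rho'_\rms$ is supported on the three-dimensional subspace spanned by $\ket{00},\ket{10},\ket{11}$, hence is rank-deficient and lies outside ${\rm relint}(\SRT)$; consequently $c\rho'_\rms\notin{\rm relint}(\mathbfcal{C}_{R|\rmt})$ and your candidate point does not verify Slater's condition.

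The fix is the one the paper uses: instead of lifting a relative-interior point of $\FRT$, pick $\tau_\rms\in{\rm relint}(\SRT)$ directly (nonempty since $\SRT$ is a nonempty convex set in finite dimension), and mix it with the (A2) state, $\kappa_\rms=(1-p)\tau_\rms+p\eta_\rms$ with $p$ small. The mixture stays in ${\rm relint}(\SRT)$ because one endpoint is already a relative-interior point of the convex set $\SRT$, while the other endpoint contributes full-rank marginals $\kappa_{\rm X}\succeq p\,\eta_{\rm X}\succ0$ for every ${\rm X}\in\bml$. One then needs the separate lemma (the paper's Lemma~\ref{Lemma:RelativeInterior}) that $\alpha\kappa_\rms\in{\rm relint}(\C_{\SRT})$ for $\alpha>0$, after which scaling gives the strict inequalities. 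Your remaining steps (properness of the cone, finiteness via $V=c\rhos$, and the recovery of (A1) and (A2) from Statement~2) are sound and coincide with the paper's argument.
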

Note that Statement~\ref{Condition2} guarantees the feasibility of \cref{Eq:RConicProgram}, and hence \cref{Eq:Robustness} as an optimization problem. Moreover, Proposition~\ref{Result:Equivalence} illustrates that Assumptions~\ref{Assumptions} are both necessary and sufficient for Eq.~\eqref{Eq:RConicProgram} [and hence \cref{Eq:Robustness}] to be a finite conic program with strong duality.
In particular, Assumptions~\ref{Assumptions} are shared by several common state resource theories (see discussions below), thus making evident the generality of our approach. 

Let us now comment on the significance of these assumptions. 
The convexity of $\FRT$ in Assumption~\ref{Assumption:Convex} implies that probabilistic mixtures of free states are again free.
The compactness of $\FRT$, on the other hand, means that when a state $\rho$ can be approximated by free states with arbitrary precision, then $\rho$ is also free. 
These features are shared by many resources, such as entanglement, coherence, athermality, asymmetry, steerability, and nonlocality (see Appendix~\ref{App:footnote4}).

However, it is important to remark that not every state resource theory satisfies Assumption~\ref{Assumption:Convex}.
First, convexity implicitly allows shared randomness for free.
This is no longer true, for instance, when $\mathcal{F}_R$ is the set of multipartite states that are separable with respect to {\em some} bipartition.
Since a non-trivial convex combination of two states separable in different bipartitions will generally {\em not} result in a state separable with respect to {\em any} bipartition, $\mathcal{F}_R$, and hence $\FRT$ can be non-convex.
Likewise, if one identifies all pure states as the resource, then $\mathcal{F}_R$ will be the set of all non-pure states, which is not closed.

Assumption~\ref{Assumption:Static} implies that there exists a free state $\etaT$ in $\rmt$ that may be extended to $\rms$ as $\rhos$ such that all the corresponding $\rhox$ are full-rank. 
By considering an extension of the kind $\rhos = \frac{\id_{S\setminus T}}{d_{S\setminus T}}\otimes\etaT$, we see that Assumption~\ref{Assumption:Static} holds whenever the following sufficient condition holds:
\begin{enumerate}[label = (A2*)]
\item\label{Asssumption:SufficientCondition} There exists a full-rank $\eta_\rmt\in\FRT$.
\end{enumerate}
This is, however, not necessary, and a counterexample is given in Appendix~\ref{App:RemarkAssumption2}.
From here, we learn that Assumption~\ref{Assumption:Static} is satisfied when a maximally mixed state is a free state.
Hence, all but athermality of the above-mentioned resources fulfill this assumption.
In the case of athermality, when the given thermal state is full-rank, then this property is again satisfied.
This also provides an example where Assumption~\ref{Assumption:Static} becomes invalid, e.g., when the given thermal state is a {\em product} pure state, which can be understood as the zero temperature limit without entanglement.
We also learn that:
\begin{corollary}
If Assumptions~\ref{Assumption:Convex} and~\ref{Asssumption:SufficientCondition} hold for  a given state resource $R$, then $\mathbfcal{C}_{R|\rmt}$ is a proper cone. 
Also, for every $\bmsl$, Eq.~\eqref{Eq:RConicProgram} is a finite conic program with strong duality.
\end{corollary}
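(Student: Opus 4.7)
The plan is to reduce the corollary directly to Proposition~\ref{Result:Equivalence} by showing that Assumption~\ref{Asssumption:SufficientCondition} implies Assumption~\ref{Assumption:Static}. Since Assumption~\ref{Assumption:Convex} is assumed, once this implication is established the full hypothesis of Proposition~\ref{Result:Equivalence} holds, and its conclusion---that $\mathbfcal{C}_{R|\rmt}$ is a proper cone and that Eq.~\eqref{Eq:RConeProgram} is a finite conic program with strong duality for every $\bmsl$---is precisely the desired conclusion.

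To carry out the reduction, I would follow the construction already suggested in the paragraph preceding the corollary: given a full-rank $\eta_\rmt\in\FRT$ guaranteed by \ref{Asssumption:SufficientCondition}, define the extension
\begin{equation}
\rhos \defeq \frac{\id_{\rms\setminus\rmt}}{d_{\rms\setminus\rmt}}\otimes\etaT.
\end{equation}
Then $\rho_\rmt=\etaT\in\FRT$ by construction, settling the first clause of \ref{Assumption:Static}. The remaining task is to verify that every $\rho_\rmx=\tr_{\rms\setminus\rmx}(\rhos)$ with $\rmx\in\bm{\Lambda}$ is full rank. Since $\etaT$ is full rank and $\id_{\rms\setminus\rmt}/d_{\rms\setminus\rmt}$ is full rank, their tensor product $\rhos$ is strictly positive on $\H_\rms$. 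Partial traces preserve strict positivity: for any nonzero $\ket{\psi}$ on $\H_\rmx$,
\begin{equation}
\bra{\psi}\rho_\rmx\ket{\psi}=\tr\!\left[(\proj{\psi}\otimes\id_{\rms\setminus\rmx})\rhos\right]>0,
\end{equation}
so $\rho_\rmx\succ 0$ for every $\rmx\in\bm{\Lambda}$, completing the verification of \ref{Assumption:Static}.

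With \ref{Assumption:Convex} and \ref{Assumption:Static} now both in force, applying Proposition~\ref{Result:Equivalence} yields the conclusion immediately. I expect no genuine obstacle here: the only technical content is the elementary observation that partial traces of a full-rank state remain full rank, which is a one-line positivity argument, and the rest is a direct invocation of the previously established equivalence.
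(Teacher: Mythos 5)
Your proposal is correct and matches the paper's (implicit) argument exactly: the paper establishes the corollary by noting that the extension $\rhos=\frac{\id_{\rms\setminus\rmt}}{d_{\rms\setminus\rmt}}\otimes\etaT$ shows \ref{Asssumption:SufficientCondition} implies \ref{Assumption:Static}, and then invokes Proposition~\ref{Result:Equivalence}. Your one-line verification that partial traces of a strictly positive state are strictly positive is the only detail the paper leaves unstated, and it is sound.
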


\subsection{Operational Interpretation of $R$-Free Incompatibility}\label{Sec:OperationalMeaning}

Interestingly,~\cref{Result:RTWitness} can be used to give an operational interpretation showing how $R$-free incompatibility leads to an advantage in {\em channel discrimination tasks} analogous to those discussed in  Refs.~\cite{Takagi2019-2,Takagi2019}. 
For any given $\bmsl$, let $\EIL\coloneqq\{\mE_{i|{\rm X}}\}_{i,{\rm X}}$ be the set of channels to be distinguished, then the task consists in the following steps:
\begin{enumerate}
\item With probability $p_\rX$, the agent at $\rX\in\bml$ is chosen.
\item With probability $p_{i|{\rm X}}$, the channel $\mE_{i|{\rm X}}\in\EIL$ is chosen to be implemented at $\rX$.
\item The agent at ${\rm X}$ inputs the quantum state $\sigma_\rX$ through the channel and applies the measurement associated with the {\em positive operator-valued measures} (POVMs)~\cite{QCI-book} $\{E_{i|{\rm X}}\}_i$ ($\sum_iE_{i|{\rm X}} = \id_{\rm X}$ and $E_{i|{\rm X}}\succeq 0$) to the channel's output.
\item If the measurement outcome is $j$, the agent guesses $\mE_{j|{\rm X}}$ as the channel implemented. 
\end{enumerate}
See also Fig.~\ref{Fig:ChannelD} for a schematic illustration of this task.
Together, $D\coloneqq \left(\{p_{\rm X}\}_{\rm X},\{p_{i|{\rm X}}\}_{i,{\rm X}},\{E_{i|{\rm X}}\}_{i,{\rm X}}\right)$ and $\EIL$ define the discrimination task. 
For any chosen input states $\bmsl=\{\sigma_\rX\}_{{\rm X}\in\bml}$, the probability of successfully distinguishing members of $\EIL$ in this task, when averaged over multiple rounds, is evidently
\begin{align}\label{Eq:P_D}
	P_D(\bmsl,\EIL)\coloneqq\sum_{{\rm X\in}\bm{\Lambda}}\sum_{i}p_{\rm X}p_{i|{\rm X}}\tr\left[E_{i|{\rm X}}\mE_{i|{\rm X}}(\sigma_{\rm X})\right].
\end{align}
Hereafter, we focus on tasks $D$ that are {\em strictly positive}, i.e.,  $p_{\rm X},p_{i|{\rm X}}>0,E_{i|{\rm X}}\succ0\;\forall\;i\;\&\;{\rm X}$ and where the ensemble of channels is unitary, i.e.,  $\EIL=\mathbfcal{U}:=\{\mathcal{U}_{i|{\rm X}}\}_{i,\rX}$ and each $\mathcal{U}_{i|{\rm X}}$ is unitary.
Then in Appendix~\ref{App:Proof-Result:DiscriminationTask} we show the following:

\begin{theorem}\label{Result:DiscriminationTask}
{\em\bf (Advantage in Channel Discrimination)}
If $R$, {\rm T}, and $\bm{\Lambda}$ satisfy Assumptions~\ref{Assumptions}, then $\bmsl\notin\CRTL$ if and only if for every unitary $\mathbfcal{U} = \{\mathcal{U}_{i|{\rm X}}\}_{i=1;{\rm X\in\bm{\Lambda}}}^{d_{\rm X}+1}$ there exists a strictly positive channel-discrimination task $D$ such that
$\sup_{\bm{\tau_\Lambda}\in\CRTL}P_D(\bm{\tau}_{\bm{\Lambda}},\mathbfcal{U})<P_D(\bmsl,\mathbfcal{U}).$
\end{theorem}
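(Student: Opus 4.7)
The plan is to derive both directions from the witness characterization in~\cref{Result:RTWitness}. The $(\Leftarrow)$ direction is immediate: if $\bmsl\in\CRTL$, then $\bmsl$ itself is admissible in the supremum, so $\sup_{\bm{\tau}_{\bm{\Lambda}}\in\CRTL}P_D(\bm{\tau}_{\bm{\Lambda}},\mathbfcal{U})\ge P_D(\bmsl,\mathbfcal{U})$ for every task $D$, contradicting the strict advantage. Hence, the existence of a single strictly positive task with strict advantage already forces $\bmsl\notin\CRTL$.

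For the $(\Rightarrow)$ direction, I would invoke~\cref{Result:RTWitness} to obtain witnesses $\{W_{\rm X}\succeq 0\}_{{\rm X}\in\bm{\Lambda}}$ with $\langle\bmsl,{\bf W}_{\bm{\Lambda}}\rangle>\sup_{\bm{\tau}_{\bm{\Lambda}}\in\CRTL}\langle\bm{\tau}_{\bm{\Lambda}},{\bf W}_{\bm{\Lambda}}\rangle$, and then convert them into a discrimination task via the rewriting $P_D(\bm{\tau}_{\bm{\Lambda}},\mathbfcal{U})=\sum_{\rm X}p_{\rm X}\tr[\tau_{\rm X}F_{\rm X}]$, where $F_{\rm X}\coloneqq\sum_{i}p_{i|{\rm X}}\mathcal{U}_{i|{\rm X}}^\dagger(E_{i|{\rm X}})$. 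If one can engineer $D$ so that $p_{\rm X}F_{\rm X}=\alpha W_{\rm X}+c_{\rm X}\mathbb{I}_{\rm X}$ for a common $\alpha>0$ and X-dependent constants $c_{\rm X}$, then $P_D(\bm{\tau}_{\bm{\Lambda}},\mathbfcal{U})=\alpha\langle\bm{\tau}_{\bm{\Lambda}},{\bf W}_{\bm{\Lambda}}\rangle+\beta$ with $\beta=\sum_{\rm X}c_{\rm X}$ independent of $\bm{\tau}_{\bm{\Lambda}}$ (since each $\tau_{\rm X}$ has unit trace), so the desired strict advantage follows directly from the witness inequality.

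For the actual construction of the POVMs I would use the ansatz $E_{i|{\rm X}}=\mathbb{I}_{\rm X}/(d_{\rm X}+1)+\delta K_{i|{\rm X}}$ with traceless Hermitian corrections satisfying $\sum_i K_{i|{\rm X}}=0$, which automatically enforces POVM completeness and preserves $E_{i|{\rm X}}\succ 0$ for $\delta>0$ small. The remaining freedom in $\{K_{i|{\rm X}}\}$ is used to match the target $F_{\rm X}$ up to the identity shift; a dimension count using the $d_{\rm X}+1$ available indices shows there is more than enough freedom. Probabilities $p_{\rm X}$ and $p_{i|{\rm X}}$ can then be pushed strictly into $(0,1)$ by small convex combination with a uniform distribution, and the strict witness inequality from~\cref{Result:RTWitness} guarantees that the discrimination advantage survives such small perturbations.

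The hard part will be the POVM construction itself, which must work for \emph{every} unitary ensemble (not just generic ones) while simultaneously maintaining strict positivity and POVM completeness. In degenerate configurations (e.g., when several $\mathcal{U}_{i|{\rm X}}$ coincide or share large invariant subspaces), the linear map $\{K_{i|{\rm X}}\}\mapsto\sum_ip_{i|{\rm X}}\mathcal{U}_{i|{\rm X}}^\dagger(K_{i|{\rm X}})$ can lose rank, and the identity shift $c_{\rm X}\mathbb{I}_{\rm X}$ must absorb the missing direction. The $d_{\rm X}+1$ count is essentially what guarantees, via this dimension argument, that after reserving one direction for the identity shift there is still enough residual freedom to encode any positive $W_{\rm X}$ and to keep all POVM elements strictly positive.
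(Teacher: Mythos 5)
Your overall route is the same as the paper's: the $(\Leftarrow)$ direction is the trivial observation that $\bmsl\in\CRTL$ would make it admissible in the supremum, and the $(\Rightarrow)$ direction converts the witness ${\bf W}_{\bm{\Lambda}}$ from \cref{Result:RTWitness} into a task whose effective observable $F_{\rm X}=\sum_i p_{i|{\rm X}}\,\mathcal{U}_{i|{\rm X}}^\dagger(E_{i|{\rm X}})$ reproduces $W_{\rm X}$ up to an affine shift, so the strict separation transfers to $P_D$. That skeleton is sound. The problem is that you have not actually carried out the one step that constitutes the proof: the construction of a strictly positive POVM realizing $p_{\rm X}F_{\rm X}=\alpha W_{\rm X}+c_{\rm X}\id_{\rm X}$ for an \emph{arbitrary} given unitary ensemble. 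Your perturbative ansatz $E_{i|{\rm X}}=\id_{\rm X}/(d_{\rm X}+1)+\delta K_{i|{\rm X}}$ reduces this to the surjectivity of $\{K_{i|{\rm X}}\}\mapsto\sum_i p_{i|{\rm X}}\mathcal{U}_{i|{\rm X}}^\dagger(K_{i|{\rm X}})$ on traceless Hermitian operators subject to $\sum_iK_{i|{\rm X}}=0$, and you yourself flag that this map degenerates (e.g., when all $\mathcal{U}_{i|{\rm X}}$ coincide and the $p_{i|{\rm X}}$ are uniform it is identically zero), deferring the resolution to an unproven ``dimension count.'' As written, the proposal is a plan with the central lemma missing, not a proof; moreover you would still need to coordinate a common $\alpha>0$ across all ${\rm X}\in\bm{\Lambda}$, which your sketch does not address.

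The gap has a much simpler resolution than the one you are attempting, and it is the heart of the paper's argument: first shift $W_{\rm X}\to W_{\rm X}+\Delta\id_{\rm X}$ so that $W_{\rm X}\succ0$ (this only adds a constant to both sides of the witness inequality), take its spectral decomposition $W_{\rm X}=\sum_{i=1}^{d_{\rm X}}\omega_{i|{\rm X}}\proj{\psi_{i|{\rm X}}}$ with all $\omega_{i|{\rm X}}>0$, and set $E_{i|{\rm X}}\propto \omega_{i|{\rm X}}\,\mathcal{U}_{i|{\rm X}}(\proj{\psi_{i|{\rm X}}})$ with uniform $p_{i|{\rm X}}=1/d_{\rm X}$. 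Then $\sum_i p_{i|{\rm X}}\mathcal{U}_{i|{\rm X}}^\dagger(E_{i|{\rm X}})\propto\sum_i\omega_{i|{\rm X}}\proj{\psi_{i|{\rm X}}}=W_{\rm X}$ \emph{identically}, for every unitary ensemble including fully degenerate ones --- the conjugations cancel term by term, so no rank or dimension-counting argument is needed. A further affine rescaling $M_{i|{\rm X}}\to\alpha(M_{i|{\rm X}}+\Delta_0\id)$ makes these elements strictly positive with $\sum_iM_{i|{\rm X}}\prec\id_{\rm X}$, and the POVM is completed by a $(d_{\rm X}+1)$-th element carrying probability $\epsilon$; since the resulting success probability decomposes as $P_D=P_{D_-}+\epsilon\,\Gamma$ with $\Gamma$ bounded, choosing $\epsilon$ small enough preserves the strict gap $\Delta_1$ inherited from the witness. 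If you replace your ansatz by this explicit construction, your argument closes.
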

Hence, $R$-free incompatibility implies an advantage in distinguishing reversible channels. A few remarks are now in order. Firstly, as the discrimination task is derived from the witness ${\bf W}_{\bm{\Lambda}}$ of Theorem~\ref{Result:RTWitness}, it is specific to the {\em given} $\bmsl$. In general, we should think of the agents somehow having access to $\bmsl$ and, upon chosen, use the corresponding $\sigma_\rX$ to perform the discrimination task. If $\bmsl$ is compatible, then these $\sigma_\rX$'s are simply the reduced states of some global state $\rho_S$. The resourceful nature of $\rho_\rT$ then guarantees an advantage in the aforementioned discrimination task. Note also that an operational advantage of a wide range of resources in the form of a discrimination task has been discussed, e.g., in Refs.~\cite{Takagi2019,Uola2019,Uola2020}.
Here, we show that this advantage extends to $R$-free incompatibility and can be manifested by considering a discrimination task that is strictly positive and that involves {\em only} unitaries.

In fact, \cref{Result:DiscriminationTask} holds even when one considers the unusual situation where $\mathcal{U}_{i|{\rm X}}=\mathcal{U}_{{\rm X}}$, i.e., where these channels to be ``distinguished'' are the same for all $i$. Evidently, in this case, $D$ and $\EIL=\{\mathbfcal{U}_\rmx\}$ do not carry the meaning of a channel discrimination task. Still, one can see it as a kind of {\em communication game} where the player, knowing $D$, tries to maximize the score given in \cref{Eq:P_D} by sending in the best possible, possibly $R$-free incompatible $\bmsl$. The relevance of such a game will become clear when we discuss the convertibility among different $\bmsl$'s in \cref{Sec:Convertibility}.

\begin{figure}
\begin{center}
\scalebox{0.9}{\includegraphics{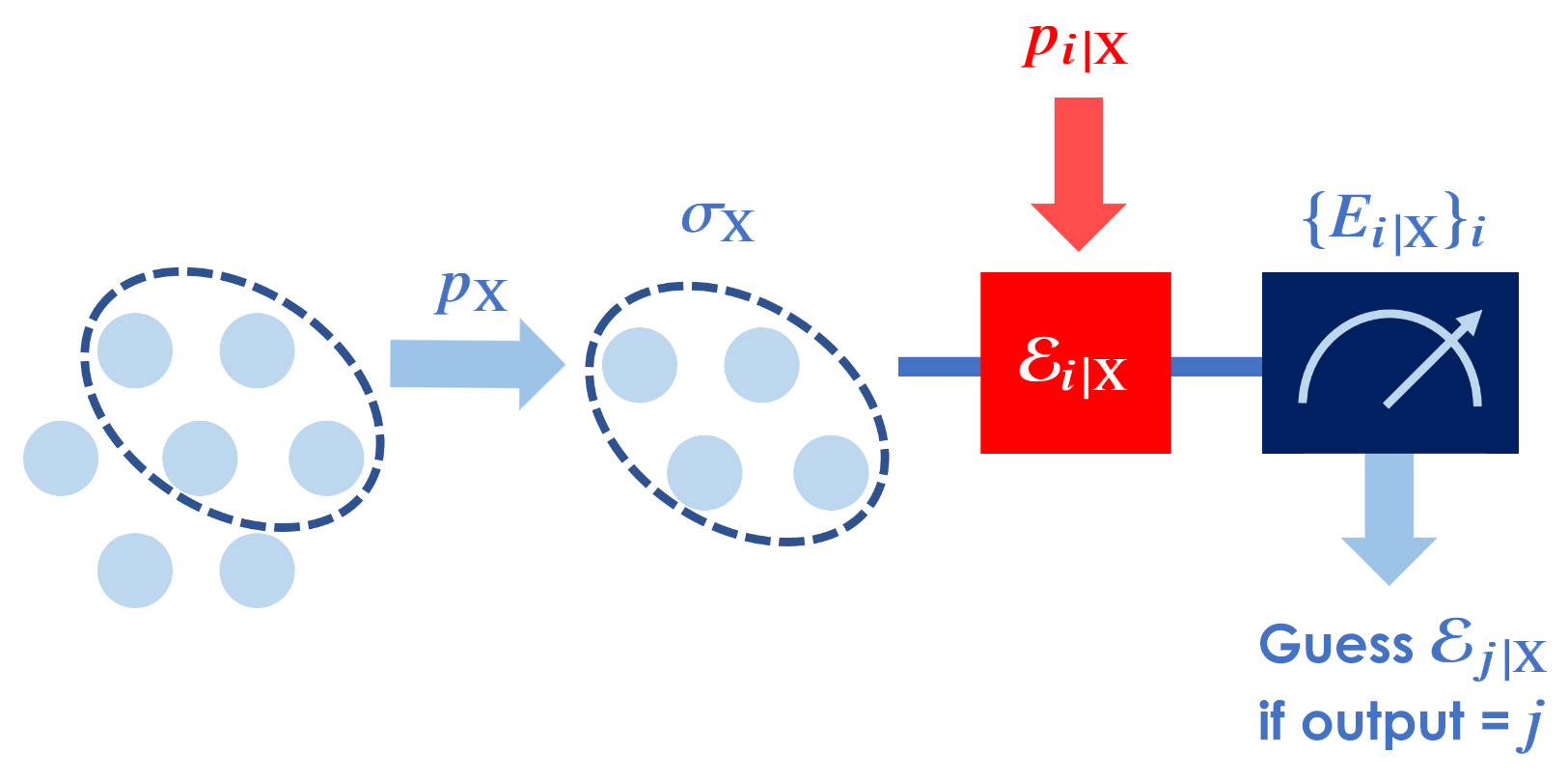}}
\caption{
{\bf Schematic illustration of the channel discrimination task.}
With probability $p_{\rm X}$, the subsystem ${\rm X}\in{\bm\Lambda}$ is selected, and one needs to distinguish channels $\{\mathcal{E}_{i|{\rm X}}\}_i$ implemented with the probability $\{p_{i|{\rm X}}\}_i$.
To do so, the local agent prepares the local state $\sigma_{\rm X}$ as the input of the unknown selected channel, and uses the POVM $\{E_{i|{\rm X}}\}_i$ to measure channel's output.
The agent then guesses the channel is $\mathcal{E}_{j|{\rm X}}$ if the measurement outcome is $j$.
}
\label{Fig:ChannelD} 
\end{center}
\end{figure}

\subsection{Completing the Resource Theory of $R$-Free Incompatibility}
The above observation suggests that $R$-free incompatibility {\em itself} is a resource since it provides advantages in a nontrivial discrimination task that is otherwise absent.
Accordingly, the free quantities are simply members of $\CRTL$. To define the free operations, let us first recall the notion of channel compatibility recently introduced in Ref.~\cite{Hsieh2021}. 
Throughout, we use $\E_{\rm A\to B}$ to represent a channel mapping from systems ${\rm A}$ to ${\rm B}$.
If the input and output Hilbert space are the same, say, $\rmx$, we simply write the map as $\E_\rmx$.
Then, a global channel  $\E_{\rm S'\to S}$ mapping from ${\rm S'}$ to ${\rm S}$ is said to have a {\em well-defined marginal} in the input-output pair ${\rm X'\to X}$  (with ${\rm X'\subseteq S'}$ and ${\rm X\subseteq S}$) if there exists a {\em marginal channel}, denoted by $\E_{\rm X'\to X}$, from ${\rm X'}$ to ${\rm X}$ such that the following commutation relation holds~\cite{Hsieh2021}:
\begin{align}\label{Eq:Tr}
\tr_{\rm S\setminus X}\circ\E_{\rm S'\to S} = \E_{\rm X'\to X}\circ\tr_{\rm S'\setminus X'}.
\end{align}
Once such a marginal exists, it is provably unique~\cite{Hsieh2021}, and we use 
$\ttr_{\rm S'\setminus X'\to S\setminus X}\E_{\rm S'\to S}\coloneqq\E_{\rm X'\to X}$
to denote the marginal channel of $\E_{\rm S'\to S}$ from ${\rm X'\to X}$. Notice that, apart from the capitalization in $\ttr(\cdot)$, the notation for this marginalization operation also differs from the usual partial trace operation $\tr(\cdot)$ in that the subscripts associated with $\ttr$ always take the form of ``$\cdot\to\cdot$''.
For $\E_{\rm S'\to S}$, the existence of $\ttr_{\rm S'\setminus X'\to S\setminus X}\E_{\rm S'\to S}$ is equivalent to being semi-causal in $\rX$~\cite{Beckman2001,Piani2006,Eggeling2002}, semi-localizable in $\rX$~\cite{Eggeling2002}, and no-signaling from ${\rm S'\setminus X'}$ to $\rX$~\cite{Duan2016}.

In these terms, we say that a global channel $\mE_{\rms}$ on ${\rm S}$ is {\em compatible} with a set of channels $\mathbfcal{E}_{\bm{\Lambda}} = \{\mE_{\rm X}\}_{{\rm X\in}\bm{\Lambda}}$ acting on subsystem(s) $\rX$ if $\ttr_{\rm S\setminus X\to S\setminus X}\mE_\rms = \mE_{\rm X}\;\forall\;{{\rm X\in}\bm{\Lambda}}$.
We are now ready to define the free operation of the resource theory associated with $R$-free incompatibility as:
\begin{align}\label{Eq:FreeOperation}
	\mathfrak{O}_{R|\rmt,\bm{\Lambda}}\coloneqq\{\mathbfcal{E}_{\bm{\Lambda}}\;|\;\exists\;\E_\rms\;{\rm compatible\;with}\;\mathbfcal{E}_{\bm{\Lambda}}\;{\rm s.t.}\;\ttr_{\rm 	S\setminus T\to S\setminus T}\E_\rms\in\mathcal{O}_R\},
\end{align}
where $\mathcal{O}_R$ denotes the set of free operations of the given state resource $R$.
Indeed, the legitimacy of this choice follows directly (see \cref{Proof-Eq:GoldenRuleTest} for details) from the definition of $\mathfrak{O}_{R|\rmt,\bm{\Lambda}}$ given in~\cref{Eq:FreeOperation} and that of $\CRTL$ given in \cref{Eq:CTR}: 
\begin{align}\label{Eq:GoldenRuleTest}
	\mathbfcal{E}_{\bm{\Lambda}}(\bm{\tau}_{\bm{\Lambda}})\in\CRTL\;\forall\;\bm{\tau}_{\bm{\Lambda}}\in\CRTL \text{ and }\forall\;	\mathbfcal{E}_{\bm{\Lambda}}\in\mathfrak{O}_{R|\rmt,\bm{\Lambda}}.
\end{align}
Moreover, as we show in \cref{Proof-Result:R-CompatibilityMonotone}, $\RRT$ is a monotone with respect to this choice of free operations.
\begin{theorem}\label{Result:R-CompatibilityMonotone}
{\em\bf ($R$-Free Incompatibility Monotone)}
If $R$, {\rm T}, and $\bm{\Lambda}$ satisfy Assumptions~\ref{Assumptions}, then
\begin{enumerate}
	\item$\RRT(\bm{\tau}_{\bm{\Lambda}}) = 0$ if and only if $\bm{\tau}_{\bm{\Lambda}}\in\CRTL$.
	\item$\RRT\left[\mathbfcal{E}_{\bm{\Lambda}}(\bmsl)\right]\le\RRT(\bmsl)$ $\forall\;\bmsl,\forall\;\mathbfcal{E}_{\bm{\Lambda}}	\in\mathfrak{O}_{R|\rmt,\bm{\Lambda}}$.
\end{enumerate}
\end{theorem}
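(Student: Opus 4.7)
The plan is to follow the standard robustness-measure recipe, treating the two claims separately.

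For the faithfulness claim (part 1), the direction $\bmsl\in\CRTL\Rightarrow\RRT(\bmsl)=0$ is immediate from the definition~\eqref{Eq:Robustness}: choosing $p=1$ (with any $\bm{\tau}_{\bm{\Lambda}}$) makes the convex combination equal $\bmsl$ itself, so that $-\log_2 p=0$ is attained. For the converse, given $\RRT(\bmsl)=0$ I would extract a minimising sequence $(p_n,\bm{\tau}^{(n)}_{\bm{\Lambda}})$ with $p_n\to 1$ and $p_n\bmsl+(1-p_n)\bm{\tau}^{(n)}_{\bm{\Lambda}}\in\CRTL$. Because each $\bm{\tau}^{(n)}_{\bm{\Lambda}}$ is a tuple of density matrices (uniformly bounded in trace norm), the combination converges in norm to $\bmsl$. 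The conclusion $\bmsl\in\CRTL$ then follows from closedness of $\CRTL$: Assumption~\ref{Assumption:Convex} ensures $\FRT$ is closed, so $\{\rhos\,:\,\rho_\rmt\in\FRT\}$ is compact in the state space on $\rms$, and $\CRTL$ is its image under the continuous marginalisation map $\rhos\mapsto\{\rho_{\rm X}\}_{{\rm X}\in\bm{\Lambda}}$, hence closed.

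For monotonicity (part 2), I would fix any $\mathbfcal{E}_{\bm{\Lambda}}\in\mathfrak{O}_{R|\rmt,\bm{\Lambda}}$ and, for each $\epsilon>0$, pick an admissible pair $(p,\bm{\tau}_{\bm{\Lambda}})$ for $\RRT(\bmsl)$ with $-\log_2 p\le\RRT(\bmsl)+\epsilon$. Componentwise linearity of the constituent CPTP maps yields
\begin{equation*}
\mathbfcal{E}_{\bm{\Lambda}}\bigl(p\bmsl+(1-p)\bm{\tau}_{\bm{\Lambda}}\bigr)=p\,\mathbfcal{E}_{\bm{\Lambda}}(\bmsl)+(1-p)\,\mathbfcal{E}_{\bm{\Lambda}}(\bm{\tau}_{\bm{\Lambda}}),
\end{equation*}
and the golden rule~\eqref{Eq:GoldenRuleTest}, applied to $p\bmsl+(1-p)\bm{\tau}_{\bm{\Lambda}}\in\CRTL$, places the left-hand side back inside $\CRTL$. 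Since $\mathbfcal{E}_{\bm{\Lambda}}(\bm{\tau}_{\bm{\Lambda}})$ is again a tuple of density matrices, the right-hand side is a valid robustness decomposition for $\mathbfcal{E}_{\bm{\Lambda}}(\bmsl)$ with the same weight $p$, so $\RRT[\mathbfcal{E}_{\bm{\Lambda}}(\bmsl)]\le-\log_2 p\le\RRT(\bmsl)+\epsilon$. Sending $\epsilon\to 0$ gives the claim.

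Neither step requires any non-trivial estimates once the definitions of Section~\ref{Sec:RT} are unfolded, so I expect the main obstacle to be confined to establishing the closedness of $\CRTL$ used in the converse of part 1. This is where Assumption~\ref{Assumption:Convex} does real work; Assumption~\ref{Assumption:Static} enters only indirectly, ensuring that $\CRTL$ is non-empty so that the faithfulness statement is non-vacuous and $\RRT$ is finite on a sufficiently rich set of inputs.
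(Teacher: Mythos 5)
Your proposal is correct and takes essentially the same route as the paper: your part 2 is the direct feasible-point version of the paper's argument, which instead rewrites $\RRT$ as an infimum over $\lambda$ subject to $\sigma_{\rm X}\preceq\lambda\kappa_{\rm X}$ with $\bm{\kappa}_{\bm{\Lambda}}\in\CRTL$ and relaxes the constraints step by step, but both hinge on exactly the same two ingredients (linearity and positivity of the $\mE_{\rm X}$, and the golden rule \eqref{Eq:GoldenRuleTest}). The only place you go beyond the paper is part 1, which the paper dismisses as ``follows directly from the definition''; your explicit appeal to the closedness of $\CRTL$ (established in the paper as Lemma~\ref{Coro:CTR}, via compactness of $\FRT$ under Assumption~\ref{Assumption:Convex}) is indeed what the ``only if'' direction requires, since the infimum in \eqref{Eq:Robustness} need not be attained.
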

Hence, not only can $\RRT$ quantify $R$-free incompatibility, but it is also a monotone for the judicious choice of free operations given in \cref{Eq:FreeOperation}.
Together with Theorem~\ref{Result:R-CompatibilityMonotone}, we thus complete specifying the resource theory associated with $R$-free incompatibility.
Note that Theorem~\ref{Result:R-CompatibilityMonotone} also helps us to answer question~\ref{Q:Quantification} for $R$-free incompatibility.

\subsection{Convertibility Conditions}\label{Sec:Convertibility}
As mentioned in~\cref{Sec:QRT}, a key question in the study of resource theory is to characterize the convertibility of resources under free operations [question~\ref{Q:Comparison}].
It turns out that we can achieve this by using a specific kind of communication game, c.f.,~\cref{Sec:OperationalMeaning}.
For a given resource $R$ and a game specified by $D\coloneqq \left(\{p_{\rm X}\}_{\rm X},\{p_{i|{\rm X}}\}_{i,{\rm X}},\{E_{i|{\rm X}}\}_{i,{\rm X}}\right)$, we consider the following figure-of-merit:
\begin{align}
\Scr(\bmsl,D)\coloneqq\sup_{\mathbfcal{E}_{\bm{\Lambda}}\in\mathfrak{O}_{R|\rmt,\bm{\Lambda}}}P_D\left(\bmsl,\mathbfcal{E}_{\bm{\Lambda}}\right),
\end{align}
where $P_D\left(\bmsl,\mathbfcal{E}_{\bm{\Lambda}}\right)$ has been defined in Eq.~\eqref{Eq:P_D}, and $\mathfrak{O}_{R|\rmt,\bm{\Lambda}}$ is the set of free operations associated with the resource theory of $R$-free incompatibility, as defined in Eq.~\eqref{Eq:FreeOperation}.
In other words, ``$\Scr$'' is the highest ``success probability'' obtained by optimizing over all channels coming from the free operations of $R$-free incompatibility.
Note that $\Scr(\bmsl,D)$ is also dependent on $\mathfrak{O}_{R|\rmt,\bm{\Lambda}}$ but we keep this dependency implicit.

Unlike in~\cref{Sec:OperationalMeaning}, we now allow subnormalized probabilities such that $\sum_{\rm X}p_{\rm X}\le1$ and $\sum_i p_{i|{\rm X}}\le1$. 
We call such tasks $D$ {\em non-deterministic} communication games.
Then we show in~\cref{App:Proof-Result:Convertibility} that non-deterministic communication games can fully characterize the convertibility of $R$-free incompatibility:
\begin{theorem}\label{Result:Convertibility}
{\em\bf(Convertibility of $R$-Free Incompatibility)}
Suppose $\mathfrak{O}_{R|\rmt,\bm{\Lambda}}$ has the following three properties: 
\begin{itemize}
	\item It is convex and compact.
	\item $\mathbfcal{E}_{\bm{\Lambda}}'\circ\mathbfcal{E}_{\bm{\Lambda}}\in\mathfrak{O}_{R|\rmt,\bm{\Lambda}}$ if $\mathbfcal{E}_{\bm{\Lambda}}',\mathbfcal{E}_{\bm{\Lambda}}\in\mathfrak{O}_{R|\rmt,\bm{\Lambda}}$.
	\item Identity is a free operation.
\end{itemize}
Then the following two statements are equivalent:
\begin{enumerate}
	\item\label{Free_op_transformation} There exists $\mathbfcal{E}_{\bm{\Lambda}}\in\mathfrak{O}_{R|\rmt,\bm{\Lambda}}$ such that $\mathbfcal{E}_{\bm{\Lambda}}(\bm{\tau}_{\bm{\Lambda}}) = \bm{\sigma}_{\bm{\Lambda}}$.
	\item\label{Psucc_allD} $\Scr({\bm\tau}_{\bm\Lambda},D)\ge \Scr(\bmsl,D)$ for every non-deterministic communication game $D$.
\end{enumerate}
\end{theorem}
Theorem~\ref{Result:Convertibility} provides a complete characterization of the partial ordering in the resource theory of $R$-free incompatibility; that is, being able to convert an $R$-free incompatible $\bm{\tau}_{\bm{\Lambda}}$ to $\bmsl$ is equivalent to the former achieving $\Scr$ at least as high as the latter in {\em all} non-deterministic communication games.

As a summary, we have provided a general and thorough analysis of the resource-theoretic aspects of $R$-free incompatibility.
Especially, we have fully addressed the three important questions mentioned at the beginning: questions~\ref{Q:Characterisation},~\ref{Q:Comparison}, and~\ref{Q:Quantification} are addressed by Theorem~\ref{Result:RTWitness}, Theorem~\ref{Result:R-CompatibilityMonotone}, and Theorem~\ref{Result:Convertibility}, respectively.

\section{Applications and Implications}\label{Sec:Applications}

\subsection{Characterizing Resource Marginal Problems by Many-Body Physics}
As mentioned previously, our general framework does not require $\bmsl\in\CL$ a priori. 
To this end, we say that a resource marginal problem has {\em non-trivial} solutions if $\CL\setminus\CRTL$ is {\em non-empty}. 
In other words, a non-trivial solution involves a $\bmsl$ that is compatible and can demonstrate $R$-free incompatibility.
Clearly, for any given $R$, $\rmt$, and ${\bm\Lambda}$, understanding when non-trivial solutions exist is both physically relevant and important.
Thanks to Theorem~\ref{Result:RTWitness}, we can achieve this by analyzing the ground state behavior in a many-body system.

We argue as follows.
Let ${\bm\Lambda},{\rm T}$ be given, and suppose $R$ satisfies Assumptions~\ref{Assumptions} so that Theorem~\ref{Result:RTWitness} is applicable.
Let us further suppose that there exists non-trivial examples of $R$-free incompatibility.
From Theorem~\ref{Result:RTWitness}, we can find operators $\{W_{\rm X}\}_{{\rm X}\in{\bm\Lambda}}$ such that
\begin{align}
\sum_{{\rm X}\in{\bm\Lambda}}{\rm tr}(\sigma_{\rm X}W_{\rm X})<\inf_{\bm{\tau}_{\bm{\Lambda}}\in\CRTL}\sum_{{\rm X}\in{\bm\Lambda}}{\rm tr}(\tau_{\rm X}W_{\rm X}).
\end{align}
Consider now the (dimensionless) Hamiltonian 
\begin{align}
H_{\rm total}\coloneqq\sum_{{\rm X}\in{\bm\Lambda}}W_{\rm X}\otimes\id_{{\rm S\setminus X}}
\end{align}
and let $\eta_{\rm S}$ be a ground state of this Hamiltonian.
Then we {\em must} have 
\begin{align}\label{Eq:Comp000001}
{\rm tr}\left(\eta_{\rm S}H_{\rm total}\right)\le{\rm tr}\left(\rho_{\rm S}H_{\rm total}\right)<\inf_{\bm{\tau}_{\bm{\Lambda}}\in\CRTL}\sum_{{\rm X}\in{\bm\Lambda}}{\rm tr}(\tau_{\rm X}W_{\rm X}),
\end{align}
where $\rho_{\rm S}$ is any state compatible with $\bm{\sigma}_{\bm{\Lambda}}$ (and, importantly, there exists at least one such state since $\bm{\sigma}_{\bm{\Lambda}}$ is a non-trivial example).
Eq.~\eqref{Eq:Comp000001} implies that {\em all} ground states of $H_{\rm total}$ must have their marginals demonstrating $R$-free incompatibility.
In other words, 
{\em 
every state from the ground energy subspace of $H_{\rm total}$ must be $R$-resourceful in $T$.
}

Conversely, suppose that {\em all} ground states of a given Hamiltonian $H_{\rm total} = \sum_{{\rm X}\in{\bm\Lambda}}H_{\rm X}\otimes\id_{{\rm S\setminus X}}$ are $R$-resourceful in $T$.
Then, we want to know whether this implies any non-trivial example of $R$-free incompatibility for any given $R$, T, and $\bm{\Lambda}$.
Let us pick a ground state $\eta_{\rm S}$ and compute its marginal states $\bm{\sigma}_{\bm{\Lambda}}=\{\sigma_{\rm X}\coloneqq\eta_{\rm X}\}_{{\rm X}\in{\bm\Lambda}}$.
Then this set is, by definition, compatible.
Moreover, for every $\rho_{\rm S}$ compatible with $\bm{\sigma}_{\bm{\Lambda}}$, we have
\begin{align}
	{\rm tr}\left(\rho_{\rm S}H_{\rm total}\right) = \sum_{{\rm X}\in{\bm\Lambda}}{\rm tr}\left(\sigma_{\rm X}H_{\rm X}\right) = {\rm tr}\left(\eta_{\rm S}H_{\rm total}\right) =\text{ground energy}.
\end{align}
Hence, $\rho_{\rm S}$ is also a ground state, which, by assumption, {\em must} be $R$-resourceful in $T$.
Consequently, $\bm{\sigma}_{\bm{\Lambda}}$ is a non-trivial example of $R$-free incompatibility.
Therefore, if there exists a many-body Hamiltonian $H_{\rm total} = \sum_{{\rm X}\in{\bm\Lambda}}H_{\rm X}\otimes\id_{{\rm S\setminus X}}$ whose ground states are all $R$-resourceful in some subsystem $T$, then 
{\em 
there must exist at least one non-trivial example of $R$-free incompatibility.
}

What we have argued can be formally summarized as the following theorem:
\begin{theorem}\label{Result:Htotal}
{\em\bf(Hamiltonian Interpretation)}
Let ${\bm\Lambda}$ and ${\rm T}$ be the given set of subsystems and target system.
Suppose $R$, {\rm T}, and $\bm{\Lambda}$ satisfy Assumptions~\ref{Assumptions}.
Then the following two statements are equivalent:
\begin{itemize}
\item There exists at least one non-trivial example of $R$-free incompatibility. 
\item There exists a many-body Hamiltonian \mbox{$H_{\rm total} = \sum_{{\rm X}\in{\bm\Lambda}}H_{\rm X}\otimes\id_{{\rm S\setminus X}}$} such that all its ground states are $R$-resourceful in $T$.
\end{itemize}
\end{theorem}

This link provides a physical interpretation of the abstract notion of resource marginal problems in a many-body setting.
Namely, if $R$-free incompatibility is non-trivial, then we can design a Hamiltonian whose ground state is resourceful in T, even if the ground state energy is degenerate. In that case, a randomly chosen ground state is guaranteed to demonstrate $R$ in the target system.
For instance, {\em all} non-trivial solutions of entanglement transitivity problems~\cite{Tabia} give rise to Hamiltonians whose ground states must have an entangled marginal in the desired subsystem.

Note that the same argument used to prove Theorem~\ref{Result:Htotal} can also show the following result:
\begin{corollary}\label{Result:Htotal-coro}
Let ${\bm\Lambda}$ and ${\rm T}$ be the given set of subsystems and target system.
Suppose $R$, {\rm T}, and $\bm{\Lambda}$ satisfy Assumptions~\ref{Assumptions}.
Then the following two statements are equivalent:
\begin{itemize}
	\item There exists at least one non-trivial example of $R$-free incompatibility. 
	\item There exists a many-body Hamiltonian $H_{\rm total} = \sum_{{\rm X}\in{\bm\Lambda}}H_{\rm X}\otimes\id_{{\rm S\setminus X}}$ such that all its highest energy excited states are $R$-resourceful in $T$.
\end{itemize}
\end{corollary}
Hence, interestingly, non-trivial $R$-free incompatibility implies the existence of Hamiltonians whose highest energy excited states are useful in {\em two} ways: they are the energy eigenstates carrying the highest amount of energy in the system, and they must have the resource $R$ in a given target (sub)system $T$.
Hence, 
{\em
non-trivial $R$-free incompatibility can be translated into a many-body resource that carries energy and quantum information resources simultaneously.
}

\subsection{Applications to State Resource Theories and Marginal Problems}
We now illustrate the versatility of our framework by considering several explicit examples.
By choosing ${\rm T = S}$ and $\bm{\Lambda} = \{\rms\}$, $\CRTL$ and $\mathfrak{O}_{R|\rmt,\bm{\Lambda}}$ reduce, respectively, to  $\mathcal{F}_R$ (more precisely, $\FRS$) and $\mathcal{O}_R$. 
Then, the notion of ``$R$-free incompatibility'' is exactly the requirement of being $R$-free. Hence, as we illustrate in~\cref{Fig:Summary}, the resource theory of $R$-free incompatibility reduces to the resource theory of $R$, and the robustness measure $\RRT$ becomes the one induced by the max-relative entropy~\cite{Datta2009}.
Since these observations hold regardless of the choice of S, we obtain the following application of Theorem~\ref{Result:DiscriminationTask}:
\begin{corollary}\label{Cor:RT}
Let $\mathcal{F}_{R}$ be convex and compact, and $d$ be the dimension of the state space of interest.
Suppose there exists a full-rank free state.
Then $\rho\notin\mathcal{F}_{R}$ if and only if for every unitary $\mathbfcal{U} = \{\mathcal{U}_{i}\}_{i=1}^{d+1}$ there exists a strictly positive $D$ such that $\max_{\eta\in\mathcal{F}_R}P_D(\eta,\mathbfcal{U})<P_D(\rho,\mathbfcal{U}).$
\end{corollary}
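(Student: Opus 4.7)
The plan is to derive Corollary~\ref{Cor:RT} as an immediate consequence of Theorem~\ref{Result:DiscriminationTask} under the specialization ${\rm T}={\rm S}$ and $\bm{\Lambda}=\{{\rm S}\}$, as already anticipated in the paragraph introducing the corollary. First, I would verify that both Assumptions~\ref{Assumption:Convex} and~\ref{Assumption:Static} hold under the hypotheses of the corollary. Assumption~\ref{Assumption:Convex} is immediate, because in this setting $\FRT=\mathcal{F}_R\cap\mathcal{S}_{\rm S}=\mathcal{F}_R$, which is convex and compact by hypothesis. For Assumption~\ref{Assumption:Static} I would invoke the stronger sufficient condition~\ref{Asssumption:SufficientCondition}: the assumed existence of a full-rank free state $\eta_{\rm S}\in\mathcal{F}_R=\FRT$ supplies a valid $\rhos$ in the statement of~\ref{Assumption:Static}, since the unique marginal to consider is $\rho_{\rm S}=\eta_{\rm S}$ itself, which is already full rank by construction.

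Next, I would unpack the set $\CRTL$ in this specialization. Because $\bm{\Lambda}=\{{\rm S}\}$, any global state compatible with a ``marginal'' $\bmsl=\{\rho\}$ must coincide with $\rho$, so the defining constraint in~\eqref{Eq:CTR} collapses to $\rho\in\FRT=\mathcal{F}_R$; thus $\CRTL=\mathcal{F}_R$, matching the claim already stated at the start of the subsection. Similarly, the family of unitaries appearing in Theorem~\ref{Result:DiscriminationTask}, namely $\{\mathcal{U}_{i|{\rm X}}\}_{i;{\rm X}\in\bm{\Lambda}}^{d_{\rm X}+1}$, reduces to a single tuple $\{\mathcal{U}_i\}_{i=1}^{d+1}$ since $d_{\rm X}=d_{\rm S}=d$, and the discrimination-task data $D=(\{p_{\rm X}\},\{p_{i|{\rm X}}\},\{E_{i|{\rm X}}\})$ collapse to $(\{p_i\},\{E_i\})$ with the trivial weight $p_{\rm S}=1$; strict positivity of $D$ then reads $p_i>0$ and $E_i\succ 0$ for all $i$.

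Finally, substituting these identifications directly into Theorem~\ref{Result:DiscriminationTask} yields that $\rho\notin\mathcal{F}_R$ if and only if, for every unitary family $\mathbfcal{U}=\{\mathcal{U}_i\}_{i=1}^{d+1}$, there exists a strictly positive $D$ with $\sup_{\eta\in\mathcal{F}_R}P_D(\eta,\mathbfcal{U})<P_D(\rho,\mathbfcal{U})$. To convert the supremum into the $\max$ that appears in the corollary, I would note that $P_D(\cdot,\mathbfcal{U})$ is linear (hence continuous) in the state argument, so compactness of $\mathcal{F}_R$ guarantees the supremum is attained. I do not anticipate any genuine obstacle: the whole argument is bookkeeping that reduces the general multipartite framework to the single-subsystem case. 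The only subtlety worth flagging is the dimension count $d_{\rm X}+1\mapsto d+1$ and the need to route the ``full-rank free state'' hypothesis through~\ref{Asssumption:SufficientCondition} rather than attempting to verify~\ref{Assumption:Static} directly.
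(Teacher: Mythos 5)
Your proposal is correct and follows essentially the same route as the paper, which obtains \cref{Cor:RT} precisely by specializing \cref{Result:DiscriminationTask} to ${\rm T}={\rm S}$ and $\bm{\Lambda}=\{\rms\}$, where $\CRTL$ reduces to $\mathcal{F}_R$ and the hypotheses of the corollary directly supply Assumptions~\ref{Assumption:Convex} and~\ref{Assumption:Static}. Your additional remarks on routing the full-rank hypothesis through~\ref{Asssumption:SufficientCondition} and on replacing the supremum by a maximum via compactness are sound bookkeeping consistent with the paper's (terser) presentation.
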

In contrast with the results previously derived in Refs.~\cite{Takagi2019-2,Takagi2019}, Corollary~\ref{Cor:RT} dictates that an operational advantage in distinguishing channels even holds for {\em every} combination of {\em unitary} channels.

For ${\rm T = S}$ and $\FRS$ being the set of all states on $\rms$, we have $\CRTL = \CL$, and our resource marginal problem becomes the usual quantum state marginal problem for a given $\bmsl$.
Since all states are ``free,'' the requirement of ``$R$-free compatibility'' is simply the requirement of compatible marginal states.
Then, $R$-free incompatibility reduces to the usual marginal state incompatibility\footnote{When ${\rm T = S}$ and $\FRS = $ the set of all states on $\rms$, we have that $\bm{\tau_\Lambda}$ is incompatible if and only if it is $R$-free incompatible.
To show this, suppose the contrary, namely, there exists an $R$-free incompatible $\bm{\sigma_\Lambda}$ that is compatible.
Then there exists a state $\rhos$ compatible with ${\bm{\sigma_\Lambda}}$.
However, according to the definition, we must have $\rhos\notin\FRS$, i.e., it cannot be a state.
This results in a contradiction and hence shows the desired claim.},
and the resource theory of $R$-free incompatibility becomes the resource theory of state incompatibility (see~\cref{Fig:Summary}). Accordingly, $\mathcal{O}_R$ is  the set of all channels on $\rms$, $\CRTL = \CL$ is the set of all compatible $\bm{\tau}_{\bm{\Lambda}}$, and $\mathfrak{O}_{R|\rmt,\bm{\Lambda}}$ is the set of all compatible channels $\mathbfcal{E}_{\bm{\Lambda}}$ acting on $\rX\in\bml$~\cite{Hsieh2021}.
Moreover, Assumptions~\ref{Assumption:Convex} and~\ref{Assumption:Static} are easily verified, thereby giving the following corollary from Theorem~\ref{Result:DiscriminationTask}:
\begin{corollary}
$\bmsl$ is incompatible if and only if for every unitary $\mathbfcal{U} = \{\mathcal{U}_{i|{\rm X}}\}_{i=1;{\rm X\in\bm{\Lambda}}}^{d_{\rm X}+1}$ there exists a strictly positive $D$ such that
$\max_{\bm{\tau}_{\bm{\Lambda}}\in\CL}P_D(\bm{\tau}_{\bm{\Lambda}},\mathbfcal{U})<P_D(\bmsl,\mathbfcal{U}).$
\end{corollary}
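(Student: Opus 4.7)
The plan is to show that this corollary is a direct specialization of Theorem~\ref{Result:DiscriminationTask} to the setting where the target subsystem is the full system and every state is declared free. So the proof reduces to two things: verifying that Assumptions~\ref{Assumption:Convex} and~\ref{Assumption:Static} hold in this specialization, and checking that the abstract notions $\CRTL$ and ``$R$-free incompatible'' collapse to their classical counterparts $\CL$ and ``incompatible.''

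First I would set ${\rm T=S}$ and take $\FRS$ to be the set of all density matrices on $\rms$. For Assumption~\ref{Assumption:Convex}, the full state space is convex (mixtures of states are states) and compact (it is a closed bounded subset of a finite-dimensional operator space). For Assumption~\ref{Assumption:Static}, the maximally mixed state $\rhos = \id_\rms/d_\rms$ is a free state whose marginals $\rho_\rmx = \id_\rmx/d_\rmx$ are all full rank, so the assumption is satisfied trivially. Hence Theorem~\ref{Result:DiscriminationTask} applies.

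Next I would argue $\CRTL = \CL$ for this choice: by \cref{Eq:CTR}, any $\bm{\tau}_{\bm{\Lambda}}\in\CL$ admits some compatible global $\rhos$, and the additional condition $\rho_\rmt\in\FRT$ is automatic because every state lies in $\FRS$. So membership in $\CRTL$ is equivalent to standard marginal compatibility. Correspondingly, as noted in the footnote just before the corollary, ``$R$-free incompatible'' reduces to ordinary marginal incompatibility: if $\bmsl\notin\CRTL$, then no global state is compatible with $\bmsl$, which is the usual notion of incompatibility, and the converse is immediate from the same argument.

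Substituting these identifications into Theorem~\ref{Result:DiscriminationTask}—with $\bmsl\notin\CRTL$ rewritten as $\bmsl$ incompatible, and $\sup_{\bm{\tau}_{\bm{\Lambda}}\in\CRTL}$ rewritten as $\max_{\bm{\tau}_{\bm{\Lambda}}\in\CL}$ (the maximum being attained because $\CL$ is a compact set and $P_D$ is continuous in its first argument)—yields exactly the displayed biconditional. The only subtle point worth double-checking is that replacing $\sup$ by $\max$ is legitimate here; this follows from compactness of $\CL$, which in turn follows from compactness of the global state space and continuity of the partial trace. Beyond that, no new argument is required: the corollary is essentially a dictionary translation of Theorem~\ref{Result:DiscriminationTask} into the language of the standard marginal problem.
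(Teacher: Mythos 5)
Your proposal is correct and follows essentially the same route as the paper, which likewise obtains this corollary by specializing \cref{Result:DiscriminationTask} to ${\rm T=S}$ with $\FRS$ the full state space, verifying Assumptions~\ref{Assumption:Convex} and~\ref{Assumption:Static}, and invoking the identification $\CRTL=\CL$ together with the footnote's argument that $R$-free incompatibility collapses to ordinary incompatibility. Your additional remark that the supremum is attained because $\CL$ is compact (which the paper establishes in Lemma~\ref{Fact:CompactC}) is a welcome explicit justification of the $\sup\to\max$ replacement that the paper leaves implicit.
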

This can be seen as an approach alternative to that provided in Ref.~\cite{Hall2007} for witnessing the incompatibility of a given $\bmsl$.

\subsection{Application: Resource Theory Associated with the Entanglement Marginal Problems}
As a third application, consider the case of ${\rm T = S}$ and $\Free = $ the set of fully separable states in some given multipartite system $\rms$. For $\bmsl\in\CL$, this gives exactly the {\em entanglement marginal problem} recently proposed in Ref.~\cite{Navascues2021}, which aims to characterize when a given set of separable marginal density matrices $\bmsl$ necessarily implies that the multipartite global state is entangled. By definition, all such entanglement-implying $\bmsl$ are $R$-free incompatible in the global system $\rms$. 

By \cref{Result:R-CompatibilityMonotone}, cf.~\cref{Fig:Summary}, this incompatibility therefore gives rise to a resource theory defined by the relevant free operations. Since the set $\Free$ is convex and compact, Assumption~\ref{Assumption:Convex} holds. Evidently, the maximally mixed state is a member of $\Free$, thus Assumption~\ref{Assumption:Static} is satisfied too. Hence, \cref{Result:RTWitness} guarantees that the incompatibility of any given $\bmsl$ can always be certified with the help of a certain witness ${\bf W}_{\bm{\Lambda}}$, which admits an operational interpretation (\cref{Result:DiscriminationTask}). The robustness measure of \cref{Eq:Robustness} can then be used to quantify the resourceful nature of $\bmsl$ within this resource theory of incompatibility.

One may also choose $R = $ genuinely multipartite entanglement, meaning that $\mathcal{F}_R$ is the {\em convex hull} of the union of all biseparable states. Then, as with the original entanglement marginal problems~\cite{Navascues2021},  marginal states that are only compatible with a genuinely multipartite entangled global state can be treated as a resource, and our framework immediately provides the corresponding resource theory, resource monotone, and its operational interpretation in terms of a channel-discrimination task.

\subsection{Application to the Transitivity of Quantum Resources}
As another example of application, note that our framework provides a natural starting point for studying the transitivity problem of {\em any} given state resource $R$.
For simplicity, we illustrate this in a tripartite setting with ${\rm S = ABC}$, $\bm{\Lambda} = \{{\rm AB, BC}\}$, and ${\rm T = AC}$.
Inspired by the work of Ref.~\cite{Coretti2011}, we say that the given resource $R$ is {\em transitive} if there exists compatible $\bmsl = \{\sigma_{\rm AB},\sigma_{\rm BC}\}$ such that $\sigma_{\rm AB},\sigma_{\rm BC}\notin\mathcal{F}_R$ and for every $\rhos$ compatible with $\bmsl$, we have $\rho_{\rm AC}\notin\mathcal{F}_R$. In other words, the transitivity of $R$ can be certified by identifying $\bmsl\in\CL$ such that $\bmsl\not\in\CRTL$.

An in-depth analysis focusing on entanglement transitivity and related problems can be found in Ref.~\cite{Tabia} (see also Refs.~\cite{HTYL,HsiehRT} for a discussion involving other resources). 
Here, we focus on using this specific choice of $\rms$, $\rmt$, and $\bm{\Lambda}$ to illustrate the broad applicability of our framework. In particular, for a resource $R$ such that Assumption~\ref{Assumption:Convex} and Assumption~\ref{Assumption:Static} hold, a resource theory, in view of~\cref{Result:R-CompatibilityMonotone}, can again be defined for marginals $\bmsl$ that exhibit resource transitivity. Likewise, a collection of operators ${\bf W}_{\bm{\Lambda}}$ can be used to witness this fact and to demonstrate an advantage in an operational task. The robustness measure of~\cref{Eq:Robustness} can also be used to quantify the resourcefulness of the given $\bmsl$.

As a concrete example,  consider $\bmsl=\bmsl^W=\{\sigma_{\rm AB},\sigma_{\rm AC}\}$ where $\sigma_{\rm AB}=\sigma_{\rm AC}=\sigma^W$, and $\sigma^W$ is the bipartite marginal of the three-qubit $W$-state~\cite{Dur2000} $\ket{W_{\rm ABC}}\coloneqq\frac{1}{\sqrt{3}}\left(\ket{001} +\ket{010}+\ket{100}\right)_{\rm ABC}$. It is known~\cite{wu2014} that the only three-qubit state compatible with this given $\bmsl$ is the $W$-state itself, and hence the $AC$ marginal must also be $\sigma^W$, thereby showing the transitivity of entanglement.

To illustrate the advantage alluded to in~\cref{Result:DiscriminationTask}, we consider a discrimination task $D^{\mathbfcal{U}}\coloneqq\left(\{p_{\rm X},p_{i|{\rm X}}\},\{E^\mathbfcal{U}_{i|{\rm X}}\}\right)$ with $p_{\rm X} = \frac{1}{2}$, $p_{i|{\rm X}} = \frac{0.99}{4}, i=1,2,3,4$ and $p_{5|{\rm X}} = 0.01$ for both ${\rm X=AB,BC}$,\footnote{The strong bias in $\{p_{i|{\rm X}}\}$ originates from our intention to amplify the advantage.} and the POVM elements $E_{i|{\rm X}}$ specified in Appendix~\ref{App:ExamplePOVMs}.
Then, for $N = 10^5$ sets of five unitary matrices $\mathbfcal{U}=\{ U_{i|{\rm X}} \}_{i=1;{\rm X}\in\bm{\Lambda}}^5$, each randomly generated according to the Haar measure, we compute the operational advantage
$
\Delta P = P_D(\bmsl,\mathbfcal{U}) - \max_{\bm{\tau}_{\bm{\Lambda}}\in\CRTL}P_D(\bm{\tau}_{\bm{\Lambda}},\mathbfcal{U})
$
where $\CRTL$ is the set of $\bm{\tau}_{\bm{\Lambda}}$ giving rise to a separable two-qubit state in AC. 
A histogram of the results obtained, see~\cref{FigUnitary}, clearly demonstrates the said advantage mentioned in \cref{Result:DiscriminationTask}.

\begin{figure}
\begin{center}
\includegraphics[width=0.45\textwidth]{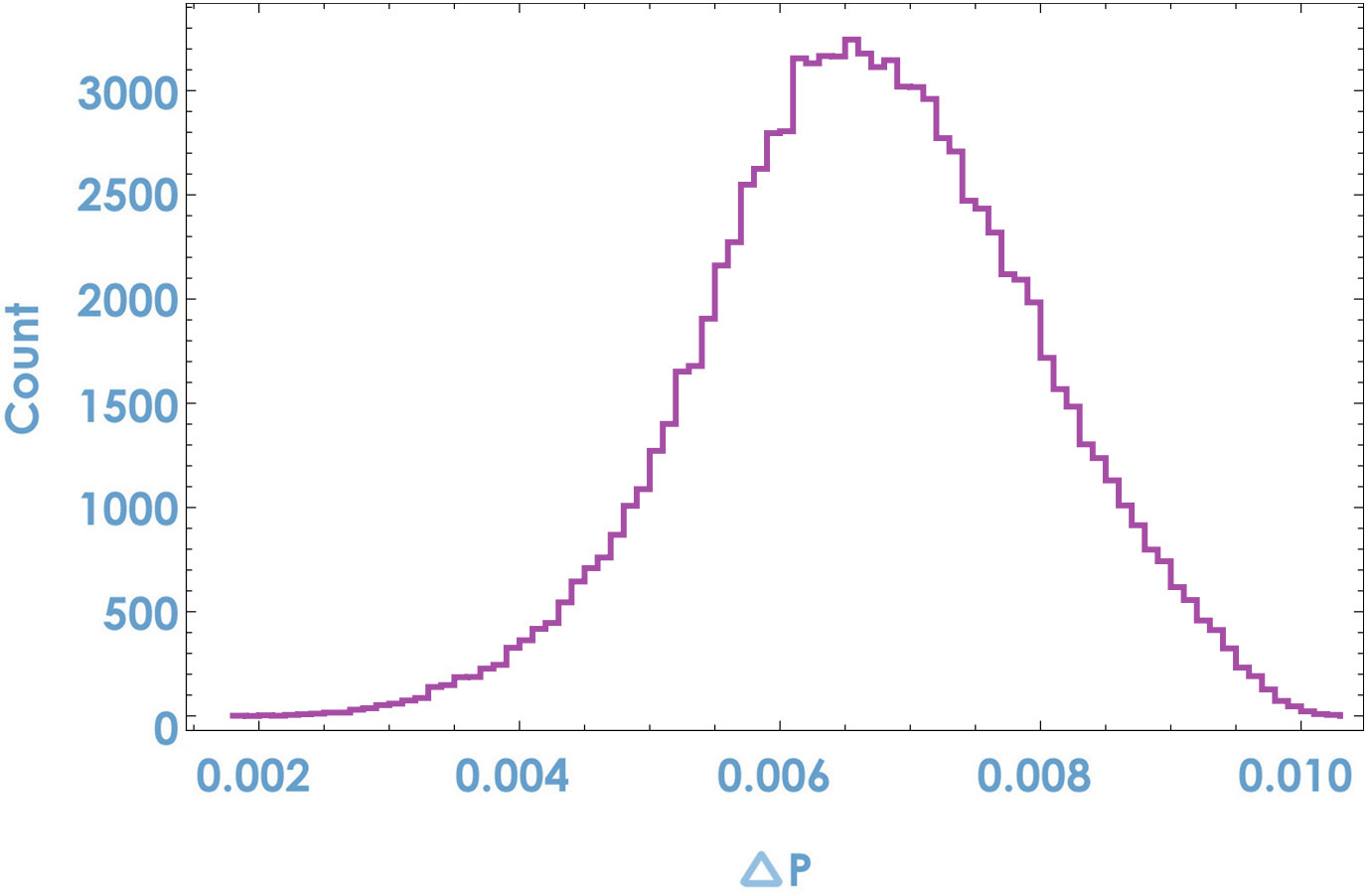}
\caption{
 Histogram of the operational advantage $\Delta P$ derived from $\bmsl^W$ in a series of $N=10^5$ channel-discrimination tasks involving unitary channels randomly chosen according to the Haar measure. 
With bin widths of size $10^{-4}$, our data has the smallest and the largest $\Delta P$ of 
$0.001884$ and $0.010274$, respectively. 
It also has a mean and standard deviation of $0.0066818 \pm 0.0012439$.}
\label{FigUnitary}
\end{center} 
\end{figure}

\section{Discussion}\label{Sec:Conclusion}
Motivated by the importance and the generality of quantum states marginal problems as well as resource theories, we provide an overarching framework that includes not only these two important topics in quantum information theory, but also a variety of other unexplored possibilities. 
The key notion underlying our framework is resource-free ($R$-free) incompatibility in a target subsystem $\rmt$, i.e., the {\em impossibility} of having a resource-free subsystem $\rmt$ for some given set of marginal states $\bmsl$. 
The question of whether this impossibility holds for any given $\bmsl$ leads naturally to what we dub the {\em resource marginal problems}, which includes the quantum state marginal problems as a special case.

To quantify this incompatibility, we introduce a robustness measure $\RRT(\bmsl)$ and show that, provided two necessary and sufficient conditions are satisfied, $\RRT(\bmsl)$ can be evaluated as a finite-valued conic program where strong duality holds. 
Whenever $\bmsl$ is not $R$-free compatible, we demonstrate how a witness can be extracted from the conic program to manifest this fact. Moreover, a channel-discrimination task involving {\em arbitrary} unitary channels can be defined to illustrate the operational advantage of $R$-free incompatible $\bmsl$ over those compatible ones in this task. 
By identifying appropriate free operations, we further prove that a resource theory of $R$-free incompatibility can be formulated with the robustness measure $\RRT(\bmsl)$ serving as the corresponding monotone. 
The corresponding {\em resource theory} for $R$ is then recovered as a special case of our resource theory.
To complete the story, we provide a complete characterization of convertibility of the $R$-free incompatibility of marginals via a communication game.

Apart from recovering the known results, our framework makes evident the fact that an incompatibility problem can be defined for {\em any} resource theory for quantum states, and {\em vice versa}. 
For example, a resource theory can be defined for the usual quantum states marginal problems, an incompatibility problem can be defined for the resource theory of entanglement, and so on. In particular, since a resource theory of entanglement-free incompatibility can be defined for the recently introduced entanglement marginal problems~\cite{Navascues2021}, our robustness measure, etc., can be applied to this incompatibility. 
More generally, if a resource theory or a marginal problem can be cast in a form that fits our framework, the results that we have derived are readily applicable. 
Finally, as an unexpected finding, we report a correspondence between resource marginal problems and ground state properties of many-body Hamiltonians.
More precisely, we have found that the existence of non-trivial solutions to a resource marginal problem is equivalent to the existence of
a many-body Hamiltonian having specific ground state properties.
This gives resource marginal problems a necessary and sufficient description in a physical setting.

Let us conclude by naming some further possibilities for future work. 
First, as is now well known, not only can resource theories be defined for quantum states, but also for quantum channels (see, e.g., Refs.~\cite{Uola2020,Pirandola2017,Hsieh2017,Bu2020,Diaz2018,Rosset2018,Wilde2018,Bauml2019,Seddon2019,LiuWinter2019,LiuYuan2019,
Gour2019-3,Gour2019,Gour2019-2,Gour2020-1,Takagi2019-3,Theurer2019,Saxena2019,Hsieh2020-2,Hsieh2020-3,Hsieh2021,Hsieh2022}). 
By following a treatment very similar to that of this work, we also establish the {\em dynamical} analogs of many of the results mentioned above.  
In the dynamical regime, it would be interesting to see how our framework can be used to obtain further insights into related problems such as channel broadcasting, measurement incompatibility, causal structures, channel extendibility, etc.
We refer the reader to the follow-up paper Ref.~\cite{HsiehDRMP} (see also Ref.~\cite{Hsieh2021,HaapasaloQuantum}), which lies outside the scope of the present work.
Also, with our framework, existing results from various works, including those in 
Refs.~\cite{Hall2007,Takagi2019-2,Takagi2019,Uola2019,Uola2020} can be recovered.
But given the versatility of resource marginal problems, it should be clear that there remain many other possibilities that are worth exploring beyond those explicitly discussed here.
For instance, it is rewarding to study whether $R$-free incompatibility can be characterized in certain types of exclusion tasks and hence provide advantages in encryption tasks~\cite{Hsieh2023-2} and economy~\cite{Ducuara2022,Ducuara2023,Ducuara2023-2}.

One may also wonder whether it is possible to have some $R$-dependent conditions on each $\sigma_{\rm X}$ that can be used to certify $R$-free incompatibility of $\bmsl$. 
Specifically, could it be that by verifying that all $\sigma_{\rm X}\in\bmsl$ are $R$-resourceful, one can already conclude that $\sigma_\rmt$ is also $R$-resourceful? 
Note that such conditions are {\em generally} neither necessary nor sufficient for demonstrating $R$-free incompatibility.
For instance, with $R$ being entanglement, it is known~\cite{Tabia} that separable $\sigma_{\rm X}$'s can also be constraining enough to force some other marginal $\sigma_\rmt$ to be entangled --- a peculiar phenomenon dubbed {\em meta-transitivity of entanglement} in Ref.~\cite{Tabia}. Hence, such $R$-dependent conditions are generally not necessary for demonstrating $R$-free incompatibility. To see their insufficiency, note that entangled 
$\bmsl$ may not even be compatible with entangled $\sigma_\rmt$. As an example, let $\bmsl=\{\sigma_{\rm AB},\sigma_{\rm BC}\}$ be bipartite marginals obtained from the global state $\proj{\Psi^-}_{\rm AB_1}\otimes\proj{\Psi^-}_{\rm B_2C}$, where ${\rm B=B_1B_2}$ and $\ket{\Psi^-}$ is a Bell state. Then, it is easy to verify that both $\sigma_{\rm X}$ are entangled while the target marginal $\sigma_{\rm AC}$ is separable. Still, to make the certification of $R$-free incompatibility experimental friendly, it will be worth developing some sufficient $R$-dependent conditions for certifying $R$-free incompatibility.

\section*{Acknowledgements}
We thank (in alphabetical order) Antonio Ac\'in, Swati Kumari, Matteo Lostaglio, Shiladitya Mal, Paul Skrzypczyk, and Roope Uola for fruitful discussions and comments. We are also grateful to two anonymous referees for their very helpful remarks and questions on an earlier version of this manuscript.
We acknowledge support from the ICFOstepstone (the Marie Sk\l odowska-Curie Co-fund GA665884), the Spanish MINECO (Severo Ochoa SEV-2015-0522), the Government of Spain (FIS2020-TRANQI and Severo Ochoa CEX2019-000910-S), Fundaci\'o Cellex, Fundaci\'o Mir-Puig, Generalitat de Catalunya (SGR1381 and CERCA Programme), the ERC AdG (on grants CERQUTE and FLQuant), the AXA Chair in Quantum Information Science, the Royal Society through Enhanced Research Expenses (on grant NFQI), the National Science and Technology Council, Taiwan (Grants No.~109-2112-M-006-010-MY3, 112-2628-M-006-007-MY4, 111-2628-E-A49-024-MY2), the National Center for Theoretical Sciences, Taiwan, and Foxconn Quantum Computing Research Center.

\appendix

\section{Preliminary Notions on Topological Properties}\label{App:MathPreliminary}
In this section, we briefly discuss various topological properties of $\CRTL$ and $\mathbfcal{C}_{R|\rmt}$.
To avoid over-complicating this Appendix, we omit proofs of results that are probably well-known to readers with related mathematical backgrounds.
For readers who are not familiar with convex analysis and related ingredients, we refer the readers to Appendix~A of the earlier arXiv version of this paper~\cite{RMPv1}, which contains detailed proofs of every result for pedagogical reasons.

To start with, we clarify the notations.
With a given global system $\rms$ and a set of local systems $\bm{\Lambda}$, we define
\begin{align}\label{Eq:CompactibilitySet}
\CL\coloneqq\{\bmsl\,|\,\exists\;\rhos\;{\rm s.t.}\;\sigma_{\rm X} = \rho_{\rm X}\;\forall\;{\rm X\in}\bm{\Lambda}\},
\end{align}
which is the set of all compatible $\bmsl$.
Here we adopt the notation $\rho_{\rm X}\coloneqq\tr_{\rm S\setminus X}(\rhos)$.
The set of all quantum states on $\rms$ is denoted by
\begin{align}\label{Eq:StateSpace}	
\mathcal{S}_\rms\coloneqq\{\rhos\,|\,\rhos\oge0,\tr(\rhos) = 1\}.
\end{align}
For the convenience of subsequent discussions, we define
\begin{align}\label{Eq:STR}
\mathcal{S}_{\RT}\coloneqq\left\{\rhos\in\mathcal{S}_\rms\,|\,\rho_\rmt\in\mathcal{F}_\RT\right\},
\end{align}
which is a set of global states with free marginals in the target system $\rmt$.
Moreover, for a given set $Q\subseteq\mathcal{S}_\rms$, let the unnormalized versions of $Q$, i.e., the cone corresponding to $Q$, be:
\begin{align}\label{Eq:CQ}
\C_Q\coloneqq\{\alpha\rhos\,|\,\alpha\ge0,\rhos\in Q\}.
\end{align} 
Then, Eq.~\eqref{Eq:C-Cone} can be rewritten as
\begin{align}\label{Eq:CQ-STR}
\mathbfcal{C}_\RT\coloneqq\{\alpha\rhos\,|\,\alpha\ge0,\rhos\in\mathcal{S}_\rms,\rho_\rmt\in\mathcal{F}_\RT\} = \{\alpha\rhos\,|\,\alpha\ge0,\rhos\in\mathcal{S}_{\RT}\} = \C_{\mathcal{S}_\RT}.
\end{align}
In these notations, we have $\mathcal{F}_\RT = \mathcal{F}_R\cap\mathcal{S}_\rmt$.
For convenience, we first list important observations needed for subsequent proofs below.
Recall that a cone $\C$ is {\em pointed} if $x\in\C$ and $-x\in\C$ imply $x = 0$; it is {\em proper} if it is nonempty, convex, compact, and pointed.
\begin{atheorem}\label{Result:PhysicalMessage}
Given a state resource $R$, then
\begin{enumerate}
\item $\mathcal{F}_\RT$ is convex and compact implies that $\CRTL$ is convex and compact.
\item $\mathcal{F}_\RT$ is nonempty, convex, and compact if and only if $\mathbfcal{C}_\RT$ is a proper cone.
\end{enumerate}
\end{atheorem}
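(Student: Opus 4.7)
My overall approach is to reduce both parts to standard topological facts about partial traces and conic hulls, exploiting the identity $\mathbfcal{C}_\RT = \C_{\mathcal{S}_\RT}$ from Eq.~\eqref{Eq:CQ-STR}, which expresses $\mathbfcal{C}_\RT$ as the nonnegative hull of the normalized ``slice'' $\mathcal{S}_\RT = \{\rho_\rms \in \mathcal{S}_\rms : \rho_\rmt \in \FRT\}$. This reduces both parts to a study of $\mathcal{S}_\RT$, which is naturally the content of $\mathfrak{C}_\RT$ in Part 1.

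For Part 1, convexity of $\mathcal{S}_\RT$ is a one-line check: for $\rho,\rho' \in \mathcal{S}_\RT$ and $\lambda \in [0,1]$, the state $\lambda\rho + (1-\lambda)\rho'$ has T-marginal $\lambda\rho_\rmt + (1-\lambda)\rho'_\rmt \in \FRT$ by convexity of $\FRT$ and linearity of the partial trace. Compactness follows from finite-dimensionality: the set is bounded (its elements are trace-one density operators) and closed, since it is the intersection of the closed state space $\mathcal{S}_\rms$ with the preimage of the closed set $\FRT$ under the continuous linear map $\tr_{\rm S\setminus T}$.

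For Part 2, the implication ``$\FRT$ convex/compact/nonempty $\Rightarrow$ $\mathbfcal{C}_\RT$ proper'' builds on Part 1: $\mathcal{S}_\RT$ inherits compactness, convexity, and nonemptiness, and crucially does not contain the origin because of the trace-one constraint. Then $\mathbfcal{C}_\RT = \C_{\mathcal{S}_\RT}$ is nonempty and convex by direct inspection, closed by the standard fact that the conic hull of a compact set bounded away from zero is closed, and pointed because every element of $\mathbfcal{C}_\RT$ is positive semidefinite (so $V, -V \in \mathbfcal{C}_\RT$ forces $V = 0$). Conversely, I would extract nonemptiness, convexity, and compactness of $\FRT$ from the corresponding properties of $\mathbfcal{C}_\RT$ using the continuous linear embedding $\eta \mapsto \id_{\rm S\setminus T}/d_{\rm S\setminus T} \otimes \eta$, which maps $\FRT$ into $\mathbfcal{C}_\RT$ (with $\alpha=1$) and whose inverse is simply the partial trace $\tr_{\rm S\setminus T}$ restricted to product states of this form; convex combinations and limits of free states then lift to convex combinations and limits inside the cone, and projecting back gives the desired closure properties of $\FRT$.

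The main technical care is in the two closedness arguments. In (i), for $\C_{\mathcal{S}_\RT}$ closed, given $\alpha_n \rho_n \to V$ with $\alpha_n \ge 0$ and $\rho_n \in \mathcal{S}_\RT$, one argues that $\alpha_n$ admits a bounded subsequence---the alternative $\alpha_n \to \infty$ is incompatible with $\tr(\alpha_n\rho_n)=\alpha_n$ converging to the finite value $\tr V$---after which compactness of $\mathcal{S}_\RT$ lets one pass to a further subsequence with $\rho_n\to\rho_\infty\in\mathcal{S}_\RT$ and conclude $V=\alpha_\infty\rho_\infty\in\C_{\mathcal{S}_\RT}$. In (ii), for the converse of Part 2, one takes $\eta_n\to\eta$ with $\eta_n\in\FRT$, lifts to $\id/d\otimes\eta_n\in\mathbfcal{C}_\RT$, invokes closedness of the cone to get $\id/d\otimes\eta\in\mathbfcal{C}_\RT$, and reads off $\eta\in\FRT$ from the definition of $\mathbfcal{C}_\RT$. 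Once these two points are in hand, the rest of the proof is routine bookkeeping.
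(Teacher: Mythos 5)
Your Part 1 rests on a misidentification of the object in the statement. The set $\mathfrak{C}_\RT$ appearing there is the paper's $\CRTL$ defined in \eqref{Eq:CTR}: the set of \emph{marginal tuples} $\bm{\tau}_{\bm{\Lambda}}=\{\tau_{\rm X}\}_{{\rm X}\in\bm{\Lambda}}$ that are $R$-free compatible in $\rmt$ (this is what the paper's Lemmas~\ref{Fact:Convex} and~\ref{Coro:CTR} address). It is \emph{not} the set $\SRT$ of global states with a free $\rmt$-marginal; the two live in different spaces. Your argument correctly establishes convexity and compactness of $\SRT$, but then asserts that this "is naturally the content of $\mathfrak{C}_\RT$ in Part 1," which skips the step that actually produces the claimed set: $\CRTL=f(\SRT)$ where $f(\rhos)=\{\tr_{\rm S\setminus X}(\rhos)\}_{{\rm X}\in\bm{\Lambda}}$ is linear and continuous (continuity follows from the data-processing inequality for $\norm{\cdot}_1$), so it maps the convex compact set $\SRT$ onto a convex compact set. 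That one additional line --- exactly the content of the paper's Lemma~\ref{Coro:CTR} and of the implication from Statement~1 to Statement~2 in Lemma~\ref{Fact:Convex} --- is needed to close Part 1; without it you have proved properties of the wrong set.

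Part 2 is correct and follows essentially the paper's route. Convexity and pointedness of $\mathbfcal{C}_{R|\rmt}=\C_{\SRT}$ are handled as in Lemma~\ref{Fact:Convex} and the remark after Theorem~\ref{Result:PhysicalMessage}; your closedness argument (boundedness of $\alpha_n=\tr(\alpha_n\rho_n)$, then a compactness/subsequence extraction in $\SRT$) is a clean variant of the paper's Lemma~\ref{Fact:ClosednessCF}, which instead runs an explicit $\epsilon$-estimate but also treats the degenerate case $\tr(\zeta)=0$ separately --- your version handles that case automatically since $\alpha_\infty=0$ gives $V=0\in\C_{\SRT}$. For the converse, your lifting $\eta\mapsto\frac{\id_{\rm S\setminus T}}{d_{\rm S\setminus T}}\otimes\eta$ followed by $\tr_{\rm S\setminus T}$ is the same device the paper uses with an arbitrary $\zeta_{\rm S\setminus T}$ in place of the maximally mixed state, and the trace-normalization step you need to read off $\eta\in\FRT$ from membership of the lift in the cone goes through. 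So the only repair required is the pushforward step in Part 1.
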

The first statement is a combination of Lemma~\ref{Fact:Convex} and Lemma~\ref{Coro:CTR}, and
the second statement is a consequence of Lemma~\ref{Fact:Convex}, Lemma~\ref{Fact:Compact}, and the fact that $\mathbfcal{C}_\RT$ is by definition pointed; namely, when we have an element $x\in\mathbfcal{C}_\RT$ such that $-x\in\mathbfcal{C}_\RT$, then we must have $x = 0$, since we must have $x\ge0$ and $x\le0$ simultaneously.

Next, we note the following facts, whose proof can be found in Ref.~\cite{RMPv1} (throughout, we assume that $\rms,\bm{\Lambda},\rmt$ are given and fixed):

\begin{alemma}\label{Fact:Convex}
The following statements are equivalent:
\begin{enumerate}
\item\label{Conv1} $\mathcal{F}_\RT$ is convex.
\item\label{Conv2} $\CRTL$ is convex.
\item\label{Conv3} $\mathcal{S}_\RT$ is convex.
\item\label{Conv4} $\mathbfcal{C}_\RT$ is convex.
\end{enumerate}
\end{alemma}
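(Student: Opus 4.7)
The plan is to run through the equivalences by exploiting that all four sets are linked by natural linear operations---partial trace, conic hull, and their preimages/slices---so convexity transfers back and forth. I would organize the argument as a star of equivalences anchored on statement (3): prove $(1)\Leftrightarrow(3)$, $(3)\Leftrightarrow(4)$, and then $(1)\Leftrightarrow(2)$, closing the loop.

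For $(1)\Leftrightarrow(3)$, note that $\mathcal{S}_\RT = \mathcal{S}_\rms\cap \tr_{\rm S\setminus T}^{-1}(\mathcal{F}_\RT)$, so convexity of $\mathcal{F}_\RT$ lifts through the linear partial trace (and the convex state set $\mathcal{S}_\rms$) to $\mathcal{S}_\RT$. Conversely, the trivial extension $\eta_\rmt\mapsto \eta_\rmt\otimes \id_{\rm S\setminus T}/d_{\rm S\setminus T}$ shows that $\tr_{\rm S\setminus T}(\mathcal{S}_\RT)=\mathcal{F}_\RT$, and linear images of convex sets are convex. For $(3)\Leftrightarrow(4)$, the key observation is that $\mathbfcal{C}_\RT$ is the conic hull of $\mathcal{S}_\RT$, so a convex $\mathcal{S}_\RT$ yields a convex cone by the standard rescaling trick; conversely, $\mathcal{S}_\RT=\mathbfcal{C}_\RT\cap\{A:\tr A=1\}$ is the intersection of $\mathbfcal{C}_\RT$ with an affine hyperplane.

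The implication $(1)\Rightarrow(2)$ is straightforward: for $\bmsl^{(1)},\bmsl^{(2)}\in\CRTL$ pick witnesses $\rho_\rms^{(1)},\rho_\rms^{(2)}\in\mathcal{S}_\RT$ and observe that $\rho_\rms\coloneqq\lambda\rho_\rms^{(1)}+(1-\lambda)\rho_\rms^{(2)}$ has $\rmt$-marginal in $\mathcal{F}_\RT$ by (1), and its $\bm{\Lambda}$-marginals are exactly $\lambda\bmsl^{(1)}+(1-\lambda)\bmsl^{(2)}$, hence the latter is in $\CRTL$. The delicate step is the converse $(2)\Rightarrow(1)$: the linear map $\mathcal{S}_\RT\to\CRTL$ is many-to-one, so convexity on the marginal side does not obviously pull back to convexity of $\mathcal{F}_\RT$. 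My plan for the obstacle is to proceed by contrapositive---assume $\eta_1,\eta_2\in\mathcal{F}_\RT$ with $\bar\eta\coloneqq\lambda\eta_1+(1-\lambda)\eta_2\notin\mathcal{F}_\RT$, build the canonical extensions $\rho_\rms^{(i)}=\eta_i\otimes\id_{\rm S\setminus T}/d_{\rm S\setminus T}$, read off $\bmsl^{(i)}\in\CRTL$, and then argue that the convex combination $\bmsl$ forces every compatible global state to have $\rmt$-marginal equal to $\bar\eta$ so that $\bmsl\notin\CRTL$. Since this rigidity can fail when $\rmt$ is ``disconnected'' from $\bm{\Lambda}$, I expect the clean route is actually to prove the cycle $(1)\Rightarrow(3)\Rightarrow(4)\Rightarrow(3)\Rightarrow(2)\Rightarrow(1)$ by feeding $(2)$ back into $(4)$: use the positive scalings built into $\mathbfcal{C}_\RT$ together with trace-$1$ normalization to recover a convex combination of free $\rmt$-marginals inside $\CRTL$ that reconstructs the one on $\mathcal{F}_\RT$, thereby forcing $(1)$. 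This last bridging step from marginal-level convexity to $\mathcal{F}_\RT$-level convexity is where I would spend the most care.
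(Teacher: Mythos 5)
Your treatment of $(1)\Leftrightarrow(3)$, $(3)\Leftrightarrow(4)$, and $(1)\Rightarrow(2)$ is correct and essentially coincides with the paper's: the paper likewise mixes the witnessing global states for $(1)\Rightarrow(2)$, pushes convexity through the partial trace for $(1)\Rightarrow(3)$, and uses the rescaling trick together with the trace-one slice for $(3)\Leftrightarrow(4)$. The only difference there is cosmetic (you phrase $(3)\Rightarrow(1)$ as ``linear image of a convex set,'' while the paper chases elements through the product extension $\zeta_{\rm S\setminus T}\otimes\eta_\rmt$).

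The genuine gap is $(2)\Rightarrow(1)$, which you flag but never close. The paper's route is the non-contrapositive version of your first idea: fix a common ancilla state $\zeta_{\rm S\setminus T}$, observe that the $\bm{\Lambda}$-marginal tuples of $\zeta_{\rm S\setminus T}\otimes\rho_\rmt$ and $\zeta_{\rm S\setminus T}\otimes\eta_\rmt$ lie in $\CRTL$, invoke convexity of $\CRTL$ to place the mixed tuple in $\CRTL$, and conclude $p\rho_\rmt+(1-p)\eta_\rmt\in\FRT$. That last inference is exactly the ``rigidity'' you distrust: membership of the mixed tuple in $\CRTL$ only asserts that \emph{some} compatible global state has a free $\rmt$-marginal, not that the particular product state $\zeta_{\rm S\setminus T}\otimes\left[p\rho_\rmt+(1-p)\eta_\rmt\right]$ does. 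Your worry is well-founded: if $\rms={\rm AB}$, $\bm{\Lambda}=\{{\rm A}\}$, $\rmt={\rm B}$, and $\FRT$ is nonempty, then every $\sigma_{\rm A}$ admits the extension $\sigma_{\rm A}\otimes\eta_{\rm B}$ with $\eta_{\rm B}$ free, so $\CRTL$ equals all of $\mathcal{S}_{\rm A}$ and is convex no matter whether $\FRT$ is. The implication therefore needs a hypothesis tying $\rmt$ to $\bm{\Lambda}$ (e.g., that the $\bm{\Lambda}$-marginals determine the $\rmt$-marginal, as when ${\rm T}\in\bm{\Lambda}$ or ${\rm T}={\rm S}$), and your fallback of ``feeding $(2)$ back into $(4)$ via scalings and trace-one normalization'' cannot supply it: $\CRTL$ simply does not record which extension witnessed membership, so no amount of rescaling recovers the $\rmt$-marginal of a specific global state. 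As written, the proposal does not establish $(2)\Rightarrow(1)$; you must either add the rigidity argument in the regime where it holds or restrict the claimed equivalence.
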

To apply Slater's condition~\cite{Boyd:Book}, we need the notion of {\em relative interior}, which we now introduce.
Formally, for a set $Q\subseteq\mathbb{R}^N$ with a given $N\in\mathbb{N}$, its {\em relative interior} is defined by (see, e.g., Refs.~\cite{LectureNote,Bertsekas-book})
\begin{align}\label{Eq:relintDefinition}
{\rm relint}(Q)\coloneqq\left\{x\in Q\,|\,\exists\,\epsilon>0\;{\rm s.t.}\;\mathcal{B}(x;\epsilon)\cap{\rm aff}(Q)\subseteq Q\right\},
\end{align}
where $\mathcal{B}(x;\epsilon)\coloneqq\{y\in\mathbb{R}^N\,|\,\norm{x-y}<\epsilon\}$ is an open ball centering at $x$ with radius $\epsilon$ induced by the usual distance for vectors $\norm{\cdot}$ (note that one can also choose it as one norm or sup norm, since they induce the same topology\footnote{\label{footnote3}This can be seen by the fact that $\norm{\sum_{i,j}a_{ij}\ket{i}\bra{j}}_\infty\le\sum_{i,j}|a_{ij}|\norm{\ket{i}\bra{j}}_\infty\le\sum_{i,j}|a_{ij}|\lesssim\norm{\sum_{i,j}a_{ij}\ket{i}\bra{j}}_\infty$, where the second estimate follows from the fact that $|\bra{i}O\ket{j}|\le4\norm{O}_\infty$ for every linear operator $O$ (see, e.g., Fact F.1 in Ref.~\cite{HsiehMasterThesis}).}), and ${\rm aff}(Q)$ is the affine hull of $Q$~\cite{LectureNote,Bertsekas-book}.
When $Q$ is convex, its relative interior is also given by~\cite{LectureNote,Bertsekas-book}
\begin{align}\label{Eq:Relint}
{\rm relint}(Q)=\left\{x\in Q\;\middle|\;\forall\,y\in Q, \exists\,p\in(0,1)\;\&\;\exists\, z\in Q\;{\rm s.t.}\;x = pz + (1-p)y\right\}.
\end{align}
In other words, $x$ can be written as a ``strict'' convex combination of two members in $Q$.
We then have the following fact (see Ref.~\cite{RMPv1} for a proof):
\begin{alemma}\label{Lemma:RelativeInterior}
Let $Q\subseteq\mathcal{S}_\rms$ be a nonempty convex set.
If $\eta_\rms\in{\rm relint}(Q)$, then $\alpha\eta_\rms\in{\rm relint}(\C_Q)$ for every $\alpha>0$.
\end{alemma}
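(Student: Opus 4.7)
My approach will be to verify the strict-convex-combination characterization of the relative interior given by Eq.~\eqref{Eq:Relint}. To apply this, I first observe that $\C_Q$ is convex: any combination $t(\alpha\rho_\rms) + (1-t)(\beta\rho_\rms')$ rewrites as $[t\alpha + (1-t)\beta]$ times a convex mixture of $\rho_\rms,\rho_\rms' \in Q$, which lies in $\C_Q$ by definition of the conic hull. Given this, it suffices to show that for every $y \in \C_Q$ there exist $p \in (0,1)$ and $z \in \C_Q$ with $\alpha \eta_\rms = p z + (1-p) y$.

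Writing $y = \beta\rho_\rms$ with $\beta \geq 0$ and $\rho_\rms \in Q$, I split into cases. If $\beta = 0$ so $y = 0$, then one may simply take $p = 1/2$ and $z = 2\alpha\eta_\rms \in \C_Q$. If $\beta > 0$, the plan is to use $\eta_\rms \in {\rm relint}(Q)$ to extract, via Eq.~\eqref{Eq:Relint} applied inside $Q$ itself, a decomposition $\eta_\rms = p'\zeta_\rms + (1-p')\rho_\rms$ with $p' \in (0,1)$ and $\zeta_\rms \in Q$, and then rescale: the identity $\alpha\eta_\rms = \alpha p'\zeta_\rms + \alpha(1-p')\rho_\rms$ matches the required form once I set $1 - p \defeq \alpha(1-p')/\beta$ and $z \defeq (\alpha p'/p)\zeta_\rms \in \C_Q$. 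A direct substitution then recovers $\alpha\eta_\rms = p z + (1-p) y$.

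The delicate step, which I expect to be the main obstacle, is ensuring that $p \in (0,1)$; this amounts to the inequality $\alpha(1-p') < \beta$, and when $\beta < \alpha$ it forces $p'$ to be chosen sufficiently close to $1$. The weaker formulation in Eq.~\eqref{Eq:Relint} only guarantees the existence of \emph{some} $p' \in (0,1)$, without control on its size, so I will instead appeal to the ball-based definition in Eq.~\eqref{Eq:relintDefinition}. The candidate $\zeta_\rms = \eta_\rms + \tfrac{1-p'}{p'}(\eta_\rms - \rho_\rms)$ is an affine combination of $\eta_\rms,\rho_\rms \in {\rm aff}(Q)$ and converges to $\eta_\rms$ as $p' \to 1^-$; since relative interiority provides an $\epsilon$-ball around $\eta_\rms$ whose intersection with ${\rm aff}(Q)$ lies inside $Q$, one can force $\zeta_\rms \in Q$ by taking $p'$ close enough to $1$. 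Choosing $p'$ to additionally satisfy $p' > 1 - \beta/\alpha$ (the larger of the two thresholds) then delivers the required $p \in (0,1)$ and completes the verification.
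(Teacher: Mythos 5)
Your proof is correct, and it shares the overall skeleton of the paper's argument (verify the strict-convex-combination characterization of Eq.~\eqref{Eq:Relint} for $\C_Q$, treat $y=0$ separately, and otherwise start from a decomposition $\eta_\rms = p'\zeta_\rms + (1-p')\rho_\rms$ supplied by $\eta_\rms\in{\rm relint}(Q)$), but it resolves the key difficulty differently. You correctly identify that the obstruction is making $p = 1-\alpha(1-p')/\beta$ land in $(0,1)$, and you fix it \emph{topologically}: you push $p'$ toward $1$, using the $\epsilon$-ball form of Eq.~\eqref{Eq:relintDefinition} to keep $\zeta_\rms = \eta_\rms + \tfrac{1-p'}{p'}(\eta_\rms-\rho_\rms)\in{\rm aff}(Q)$ inside $Q$. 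The paper instead fixes it \emph{algebraically}: it keeps the originally given $p'$ and sets $q=\min\{p\alpha/\beta,\tfrac12\}$ (roughly speaking), absorbing the surplus multiple $(\tfrac{p\alpha}{\beta}-q)\beta\phi_\rms$ of the boundary point into the cone element $z$, which is possible precisely because $\C_Q$ admits arbitrary nonnegative rescaling and $Q$ is convex. Your route costs an appeal to the metric definition of the relative interior (harmless in finite dimension, where all norms are equivalent, as the paper itself notes), while the paper's route stays entirely within the line-segment characterization; both are complete, and your explicit flagging of why the unmodified Eq.~\eqref{Eq:Relint} is insufficient is a point the paper leaves implicit.
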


Finally, we briefly discuss compactness.
From now on, the topology defined for states will be understood as the one induced by the trace norm $\norm{\cdot}_1$.
Also, to talk about topology of sets of states $\bm{\sigma}$, we consider the distance measure 
\begin{align}\label{Eq:VectorNorm}
\norm{\bm{\sigma} - \bm{\tau}}\coloneqq\sum_{\rm X\in{\bm\Lambda}}\norm{\sigma_{\rm X} - \tau_{\rm X}}_1.
\end{align}
One can check that this gives a metric since the triangle inequality of Eq.~\eqref{Eq:VectorNorm} directly follows from the triangle inequality of $\norm{\cdot}_1$~\cite{QCI-book}.
In what follows, the topological properties are understood as induced by this norm.

Now we note the following facts, whose proofs follow from standard facts in mathematical analysis:
\begin{alemma}\label{Fact:CompactC}
$\mathfrak{C}_{\bm\Lambda}$, $\mathcal{S}_\rms$, and $\mathcal{S}_{\rm S|pure}$ are compact, where $\mathcal{S}_{\rm S|pure}$ is the set of all pure states in $\rms$.
\end{alemma}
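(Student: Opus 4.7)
The plan is to treat the three sets separately, in each case exploiting the fact that the ambient space of Hermitian operators is finite-dimensional, so all norms induce the same topology (footnote~\ref{footnote3}), and Heine--Borel reduces compactness to closedness plus boundedness under the trace-norm topology for states, or under the product norm~\eqref{Eq:VectorNorm} for tuples of marginals.

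For $\mathcal{S}_\rms$, boundedness is immediate since $\norm{\rhos}_1 = \tr(\rhos) = 1$ for every $\rhos \in \mathcal{S}_\rms$, while closedness follows by writing $\mathcal{S}_\rms$ as the intersection of $\tr^{-1}(\{1\})$ with the positive semidefinite cone. The former is closed as the preimage of a point under the continuous linear functional $\tr$, and the latter is closed, e.g., as the intersection over unit vectors $\ket{\psi}$ of the closed half-spaces $\{\rhos : \bra{\psi}\rhos\ket{\psi} \geq 0\}$. For $\mathcal{S}_{\rm S|pure}$ I would use the characterisation $\mathcal{S}_{\rm S|pure} = \{\rhos \in \mathcal{S}_\rms : \tr(\rhos^2) = 1\}$: this is the preimage of a point under the continuous map $\rhos \mapsto \tr(\rhos^2)$, intersected with the (now known to be compact) set $\mathcal{S}_\rms$. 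Since a closed subset of a compact set is compact, the claim follows.

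For $\mathfrak{C}_{\bm\Lambda}$ I would exhibit it as the image of $\mathcal{S}_\rms$ under the map $\Phi(\rhos) \coloneqq (\tr_{\rm S\setminus X}(\rhos))_{{\rm X}\in\bm\Lambda}$ taking values in the product space of marginal operator spaces equipped with the norm~\eqref{Eq:VectorNorm}. Each coordinate is a partial trace, hence a quantum channel, and is therefore continuous with respect to the trace norm by the data-processing inequality~\eqref{Eq:Data-Processing-Inequality-TraceNorm}; summing the finitely many continuous coordinate contributions preserves continuity under~\eqref{Eq:VectorNorm}. By the definition~\eqref{Eq:CompactibilitySet}, $\mathfrak{C}_{\bm\Lambda} = \Phi(\mathcal{S}_\rms)$, so it is the continuous image of a compact set and is therefore compact. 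I expect no real obstacle here: the only care needed is bookkeeping the product-norm topology in~\eqref{Eq:VectorNorm} and verifying that the coordinate maps are continuous with respect to it, both of which are routine once the data-processing inequality is invoked.
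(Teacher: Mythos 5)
Your proof is correct and follows essentially the same route as the paper's: Heine--Borel in the finite-dimensional ambient space of Hermitian operators for the two state sets, and the continuous image of the compact set $\mathcal{S}_\rms$ under the marginalization map (continuity via the data-processing inequality) for $\mathfrak{C}_{\bm\Lambda}$. The only differences are cosmetic: the paper proves closedness of the positive-semidefinite cone by a contradiction argument using a negative eigenvector rather than as an intersection of closed half-spaces, and characterizes pure states via $\norm{\rhos}_\infty = 1$ rather than $\tr(\rhos^2) = 1$; both variants are equally valid.
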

\begin{alemma}\label{Fact:Compact}
The following statements are equivalent:
\begin{enumerate}
\item\label{Comp1} $\FRT$ is compact.
\item\label{Comp2} $\SRT$ is compact.
\item\label{Closed3} $\mathbfcal{C}_{R|{\rm T}}$ is closed.
\end{enumerate}
\end{alemma}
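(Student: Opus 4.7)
My plan is to establish the chain (\ref{Comp2}) $\Leftrightarrow$ (\ref{Closed3}) by direct invocation of Lemma~\ref{Fact:ClosednessCF}, and then to close the loop by proving (\ref{Comp1}) $\Leftrightarrow$ (\ref{Comp2}) using the fact that the partial trace is a continuous surjection from $\SRT$ onto $\FRT$.

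First, note that $\mathbfcal{C}_{R|\rmt} = \C_{\SRT}$ by the identification in \cref{Eq:CQ-STR}, and $\SRT \subseteq \mathcal{S}_\rms$. Therefore Lemma~\ref{Fact:ClosednessCF} applies verbatim with $Q = \SRT$, giving (\ref{Comp2}) $\Leftrightarrow$ (\ref{Closed3}) without any additional work.

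For (\ref{Comp2}) $\Rightarrow$ (\ref{Comp1}), I would observe that the partial trace $\ttr_{\rm S\setminus T}: \rho_\rms \mapsto \rho_\rmt$ is a continuous linear map (continuity follows from the data processing inequality \cref{Eq:Data-Processing-Inequality-TraceNorm}, exactly as used in the proof of Lemma~\ref{Fact:CompactC}). By definition of $\SRT$, we have $\ttr_{\rm S\setminus T}(\SRT) \subseteq \FRT$. Conversely, any $\eta_\rmt \in \FRT$ is the T-marginal of the global state $\frac{\id_{\rm S\setminus T}}{d_{\rm S\setminus T}} \otimes \eta_\rmt \in \SRT$, so $\ttr_{\rm S\setminus T}(\SRT) = \FRT$. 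Compactness is preserved by continuous images, so $\FRT$ is compact.

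For (\ref{Comp1}) $\Rightarrow$ (\ref{Comp2}), I would write
\begin{equation}
\SRT \,=\, \mathcal{S}_\rms \,\cap\, \ttr_{\rm S\setminus T}^{-1}(\FRT).
\end{equation}
Since $\FRT$ is compact and lives in a Hausdorff space, it is closed; continuity of $\ttr_{\rm S\setminus T}$ then makes $\ttr_{\rm S\setminus T}^{-1}(\FRT)$ closed. Intersecting with the compact set $\mathcal{S}_\rms$ (Lemma~\ref{Fact:CompactC}) yields a closed subset of a compact set, hence compact. Combining these three implications completes the equivalence.

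The only mildly subtle point is verifying that $\ttr_{\rm S\setminus T}(\SRT) = \FRT$ rather than just an inclusion; the embedding $\eta_\rmt \mapsto \frac{\id_{\rm S\setminus T}}{d_{\rm S\setminus T}} \otimes \eta_\rmt$ handles this in one line, so I do not anticipate any real obstacle. Everything else is a standard point-set topology argument on top of the continuity of partial trace and the already-established Lemma~\ref{Fact:ClosednessCF}.
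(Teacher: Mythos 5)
Your proposal is correct and follows essentially the same route as the paper: the equivalence (\ref{Comp2})~$\Leftrightarrow$~(\ref{Closed3}) is obtained by applying Lemma~\ref{Fact:ClosednessCF} to $Q=\SRT$, and (\ref{Comp2})~$\Rightarrow$~(\ref{Comp1}) uses that the continuous partial trace maps $\SRT$ onto $\FRT$, exactly as in the paper. For (\ref{Comp1})~$\Rightarrow$~(\ref{Comp2}) you phrase the closedness of $\SRT$ as the preimage of the closed set $\FRT$ under the (continuous) partial trace intersected with $\mathcal{S}_\rms$, whereas the paper runs the equivalent sequential argument via the data-processing inequality; these are the same topological fact, so no substantive difference (only note that the paper reserves $\ttr$ with a ``$\cdot\to\cdot$'' subscript for marginal channels, so the map you use should be written $\tr_{\rm S\setminus T}$).
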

\begin{alemma}\label{Coro:CTR}
$\CRTL$ is compact when $\FRT$ is compact.
\end{alemma}
Hence, the compactness of the set $\CRTL$ is determined by the compactness of $\FRT$, which shows again the role of Assumption~\ref{Assumption:Convex} in our approach.

\subsection{Compactness of $\mathcal{F}_R$: Some Illustrative Examples}\label{App:footnote4}
This section constitutes the proof of the following statement (recall that we always consider the topology induced by the trace norm $\norm{\cdot}_1$): 
\begin{alemma}
For every finite-dimensional system ${\rm S}$, $\FRS$ is compact if $R\in\{$athermality, entanglement, coherence, asymmetry, nonlocality, and steerability$\}$.
\end{alemma}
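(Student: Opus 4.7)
The plan is to invoke the compactness of $\mathcal{S}_\rms$ (Lemma~\ref{Fact:CompactC}) and the fact that $\FRS\subseteq\mathcal{S}_\rms$ is automatically bounded. It then suffices to verify that $\FRS$ is closed in the trace-norm topology for each of the listed resources, so the compactness question reduces to a case-by-case closedness check.

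Three cases follow from direct continuity arguments. For athermality with a fixed Hamiltonian $H$ and inverse temperature $\B$, $\FRS$ is the singleton $\{e^{-\B H}/\tr(e^{-\B H})\}$, which is trivially closed. For coherence in a fixed basis $\{\ket{i}\}$, $\FRS$ is the preimage of $\{0\}$ under the continuous off-diagonal projection $\rho\mapsto\rho-\sum_i\bra{i}\rho\ket{i}\,\proj{i}$. For asymmetry under a compact group $G$ with unitary representation $g\mapsto U_g$, one has $\FRS=\bigcap_{g\in G}\{\rho:U_g\rho U_g^\dagger-\rho=0\}$, an intersection of closed sets cut out by the continuous operator-valued functions $\rho\mapsto U_g\rho U_g^\dagger-\rho$.

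For entanglement (fully separable states), I would observe that the set of product states is the continuous image of the finite Cartesian product $\mathcal{S}_{\rA_1}\times\cdots\times\mathcal{S}_{\rA_n}$ (each factor compact by Lemma~\ref{Fact:CompactC}) under the tensor-product map, hence compact. Carathéodory's theorem then yields that the convex hull of a compact set in a finite-dimensional vector space is again compact, in particular closed. The same blueprint covers biseparability and $k$-separability, whenever $\mathcal{F}_R$ is the convex hull of a compact family of product-like states.

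The main obstacle lies with nonlocality and steerability, since the defining property quantifies over (in principle) arbitrary local measurements. For nonlocality, I would use the characterization that $\rho$ is Bell local iff for every \emph{finite} scenario of local POVMs $\{M^x_a\},\{N^y_b\}$ the induced correlation $P(a,b|x,y)=\tr[(M^x_a\otimes N^y_b)\rho]$ lies in the (closed, polyhedral) local polytope of that scenario; for each fixed scenario the state-to-correlation map is linear and continuous, so the preimage of the local polytope is closed, and intersecting over all finite scenarios preserves closedness. For steerability the argument is analogous: $\rho_{\rA\rB}$ is unsteerable from $\rA$ to $\rB$ iff for every finite POVM $\{M^x_a\}$ on $\rA$ the assemblage $\sigma_{a|x}\coloneqq\tr_{\rA}[(M^x_a\otimes\id)\rho_{\rA\rB}]$ admits a local-hidden-state decomposition $\sigma_{a|x}=\sum_\lambda D(a|x,\lambda)\,\rho_\lambda$ over the finitely many deterministic response strategies $D$. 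For fixed finite scenario the LHS assemblages form the image of the compact set $\{(\rho_\lambda):\rho_\lambda\oge0,\,\sum_\lambda\tr(\rho_\lambda)=1\}$ under a linear map, hence are closed; their preimage under the continuous assemblage map is closed in $\rho_{\rA\rB}$, and the intersection over all finite scenarios remains closed.
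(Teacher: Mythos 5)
Your proof is correct and reaches the same conclusion, but the route differs from the paper's in several of the cases. For athermality and entanglement you do essentially what the paper does (singleton; continuous image of a product of compact sets plus compactness of the convex hull). For coherence and asymmetry, however, the paper argues via a \emph{continuous resource destroying map} — the dephasing channel and the $G$-twirling, respectively — so that $\FRS$ is the \emph{forward} image of the compact set $\mathcal{S}_\rms$ under a continuous map; you instead exhibit $\FRS$ as the \emph{preimage} of a closed set (or an intersection of such preimages) under continuous linear maps, then use boundedness in finite dimensions. Both are sound; your version avoids having to know that a resource destroying map exists and acts as the identity on the free set, while the paper's version packages coherence and asymmetry into one general fact. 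For nonlocality the paper proves closedness directly by a Bell-operator estimate (Cauchy--Schwarz on the Hilbert--Schmidt inner product) along a convergent sequence, whereas you observe that for each fixed finite scenario the state-to-behavior map is linear and the local polytope is closed, so the local set is an intersection of closed preimages; these are equivalent, and yours is arguably cleaner since continuity of a linear map on a finite-dimensional space is automatic. Finally, you supply an explicit argument for steerability (compactness of the LHS-assemblage set for each finite scenario, then preimages and intersections), which the paper only asserts ``has the same structure'' as the nonlocality case — a genuine, if small, addition. I see no gaps.
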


First, it is straightforward to see the compactness for athermality, since, in this resource theory, there is only one free state, which is the thermal state (for some given Hamiltonian and temperature).
On the other hand, the compactness of the set of separable states in finite dimensions is a well-known fact~\cite{Ent-RMP}
which explains the validity of Assumption~\ref{Assumption:Convex} for entanglement.
Now, we note the following observation: 
\begin{afact}
	For every finite-dimensional ${\rm S}$, $\FRS$ is compact if a continuous resource-destroying map of $R$ exists. 
\end{afact}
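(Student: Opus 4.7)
The plan is to exploit the two defining properties of a resource destroying map $\lambda_R$: namely, (i) $\lambda_R(\rho) \in \FRS$ for every state $\rho \in \mathcal{S}_\rms$ (it always outputs a free state), and (ii) $\lambda_R(\eta) = \eta$ for every $\eta \in \FRS$ (it leaves free states invariant). Together, these properties imply that $\FRS$ admits two equivalent characterizations: it is both the image $\lambda_R(\mathcal{S}_\rms)$ of the full state space under $\lambda_R$, and the set of fixed points $\{\rho \in \mathcal{S}_\rms \,|\, \lambda_R(\rho) = \rho\}$. Either characterization will suffice, and I would present whichever is cleaner.

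First, I would recall from Lemma~\ref{Fact:CompactC} that the ambient state space $\mathcal{S}_\rms$ is compact in the trace-norm topology, and that partial traces (and more generally, quantum channels) are continuous with respect to $\norm{\cdot}_1$ thanks to the data-processing inequality~\eqref{Eq:Data-Processing-Inequality-TraceNorm}. By hypothesis, $\lambda_R$ is continuous. I would then invoke the standard topological fact that a continuous image of a compact set is compact (see, e.g., Refs.~\cite{Apostol-book,Munkres-book}) to conclude that $\lambda_R(\mathcal{S}_\rms)$ is compact in the finite-dimensional space of Hermitian operators on $\rms$. Combining with property (i), $\lambda_R(\mathcal{S}_\rms) \subseteq \FRS$, and with property (ii) applied to any $\eta \in \FRS$, which gives $\eta = \lambda_R(\eta) \in \lambda_R(\mathcal{S}_\rms)$, we obtain the equality $\FRS = \lambda_R(\mathcal{S}_\rms)$, and hence compactness of $\FRS$.

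Alternatively, for a more direct argument, I would write $\FRS = \{\rho \in \mathcal{S}_\rms \,|\, (\lambda_R - \mathcal{I})(\rho) = 0\}$ as the preimage of the singleton $\{0\}$ under the continuous map $\lambda_R - \mathcal{I}$, where $\mathcal{I}$ denotes the identity. Since $\{0\}$ is closed, this preimage is closed; as a closed subset of the compact set $\mathcal{S}_\rms$, it must itself be compact. This line of argument avoids having to verify that property (ii) indeed forces the inclusion $\FRS \subseteq \lambda_R(\mathcal{S}_\rms)$.

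Since the proof relies only on elementary topological facts together with the two defining properties of a resource destroying map, I do not anticipate any substantive obstacle; the content of the statement is essentially that the existence of such a map endows $\FRS$ with enough structural rigidity (closedness under the induced topology, together with boundedness inherited from $\mathcal{S}_\rms$) to guarantee compactness. The only subtlety to flag is the implicit assumption that $\lambda_R$ is defined on the full state space $\mathcal{S}_\rms$ and takes values in $\FRS$, which is standard in the literature on resource destroying maps and should be stated explicitly for clarity.
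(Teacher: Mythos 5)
Your proposal is correct and follows essentially the same route as the paper: the paper defines a resource destroying map by $\calL(\eta)=\eta$ for all $\eta\in\FRS$ and $\calL(\mathcal{S}_\rms)=\FRS$, and then concludes compactness of $\FRS$ from the compactness of $\mathcal{S}_\rms$ together with the fact that a continuous map sends compact sets to compact sets. Your alternative fixed-point/preimage argument is a valid minor variation, but the core reasoning is identical to the paper's.
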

To be precise, a {\em resource destroying map}~\cite{Liu2017} $\calL$ of the given state resource $R$ in a system $\rms$ is a (not necessarily linear) map such that $\calL(\eta) = \eta\;\forall\;\eta\in\FRS$ and $\calL(\mathcal{S}_\rms) = \FRS$.
Thus, if $\calL$ is continuous, then $\FRS$ is compact since a continuous function maps a compact set to a compact set (see, e.g., Refs.~\cite{Apostol-book,Munkres-book}).
One can check that this is indeed the case for coherence and asymmetry: For the former, one can use the dephasing channel $(\cdot)\mapsto\sum_{i}\proj{i}\cdot\proj{i}$, and for the latter one can use $G$-twirling operation, where $G$ is the group defining the symmetry.
Hence, Assumption~\ref{Assumption:Convex} is indeed satisfied by coherence and asymmetry.

It remains to address Bell-nonlocality~\cite{Bell-RMP} and steerability, and we focus on the former since the structure of the proof is the same.
We shall use a bipartite system ${\rm AB}$ to illustrate the idea.
Let us first recapitulate the notion of a Bell inequality. In a bipartite system ${\rm AB}$, a probability distribution ${\bf P} = \{P(ab|xy)\}$ is said to be describable by a {\em local-hidden-variable} (LHV) model, denoted by ${\bf P}\in{\rm LHV}$, if $P(ab|xy) = \sum_\lambda p_\lambda P(a|x,\lambda)P(b|y,\lambda)$
for some probability distributions $\{p_\lambda\},\{P(a|x,\lambda)\},\{P(b|y,\lambda)\}$.
A linear {\em Bell inequality} in a constraint satisfied by all ${\bf P}\in{\rm LHV}$ and may be characterized by a vector ${\bf B} = \{B_{ab|xy}\}$ and a real number $\omega({\bf B})$ such that
\begin{align}
\langle{\bf B},{\bf P}\rangle\coloneqq\sum_{a,b,x,y}B_{ab|xy}P(ab|xy)\le\omega({\bf B}),
\end{align}
where each $B_{ab|xy}\in\mathbb{R}$ and $\omega({\bf B})\coloneqq\sup_{{\bf P}\in{\rm LHV}}\langle{\bf B},{\bf P}\rangle$
is the largest value of $\langle{\bf B},{\bf P}\rangle$ achievable by members in ${\rm LHV}$.
A violation of such an inequality certifies the non-classical nature of the given probability distribution ${\bf P}$, and it is an intriguing fact that this can be attained by certain ${\bf P}$ given by quantum theory.

Formally, ${\bf P}$ is called {\em quantum} if there exists a state $\rho_{\rm AB}$ and a set of local POVMs ${\bf E}_{\rm AB}\coloneqq\{E^{a|x}_{\rm A},E^{b|y}_{\rm B}\}$ (i.e., $\sum_aE^{a|x}_{\rm A} = \id_{\rm A}\;\forall\;x$ and $\sum_bE^{b|y}_{\rm B} = \id_{\rm B}\;\forall\;y$) such that
\begin{align}
P(ab|xy) = \tr\left[\left(E^{a|x}_{\rm A}\otimes E^{b|y}_{\rm B}\right)\rho_{\rm AB}\right]\quad\forall\;a,b,x,y.
\end{align}
We write ${\bf P}_{\rho_{\rm AB}|{\bf E}_{\rm AB}}\coloneqq\left\{\tr\left[\left(E^{a|x}_{\rm A}\otimes E^{b|y}_{\rm B}\right)\rho_{\rm AB}\right]\right\}$ to be the probability distribution induced by the state $\rho_{\rm AB}$ and POVMs ${\bf E}_{\rm AB}$.
In these notations, one can define the set of states that cannot violate {\em any} Bell inequality as :
\begin{align}
\calL_{\rm AB}\coloneqq\left\{\eta_{\rm AB}\;|\;\langle{\bf B},{\bf P}_{\eta_{\rm AB}|{\bf E}_{\rm AB}}\rangle\le\omega({\bf B})\;\forall\;{\bf B}\;\&\;{\bf E}_{\rm AB}\right\}.
\end{align}
We call them {\em local} states and states that are not local are called {\em nonlocal}.
Now we can show that:
\begin{alemma}
$\calL_{\rm AB}$ is compact.
\end{alemma}
\begin{proof}
It suffices to show the closedness since it is a subset of $\mathcal{S}_{\rm AB}$ (see Lemma~\ref{Fact:CompactC}). 
Let $\rho_{\rm AB}\in\mathcal{S}_{\rm AB}$ and $\{\eta_{\rm AB}^{(k)}\}_{k=1}^\infty\subseteq\calL_{\rm AB}$ be a sequence of states such that $\lim_{k\to\infty}\norm{\rho_{\rm AB} - \eta_{\rm AB}^{(k)}}_1 = 0$. 
For every Bell inequality specified by ${\bf B}$ and local POVMs ${\bf E}_{\rm AB}$, we may define the (Hermitian) Bell operator~\cite{Braunstein:PRL:1992} as $\calB:=\sum_{a,b,x,y} B_{ab|xy} E_{\rm A}^{a|x}\otimes E_{\rm B}^{b|y}$, then
\begin{align}
	\left\langle{\bf B},{\bf P}_{\rho_{\rm AB}|{\bf E}_{\rm AB}}\right\rangle&= \left\langle{\bf B},{\bf P}_{\rho_{\rm AB}-\eta_{\rm AB}^{(k)}|{\bf E}_{\rm AB}}	\right\rangle  + \left\langle{\bf B},{\bf P}_{\eta_{\rm AB}^{(k)}|{\bf E}_{\rm AB}}\right\rangle\nonumber\\
	&= \tr\left[\calB \left(\rho_{\rm AB}-\eta_{\rm AB}^{(k)}\right)\right] + \left\langle{\bf B},{\bf P}_{\eta_{\rm AB}^{(k)}|{\bf E}_{\rm AB}}\right\rangle\nonumber\\
	&\le \norm{\calB}_\infty\times\norm{\rho_{\rm AB}-\eta_{\rm AB}^{(k)}}_1 + \omega({\bf B}),\label{Eq:CompactBell}
\end{align}
where the second equality follows from the fact that the Bell value $\left\langle{\bf B},{\bf P}_{\rho_{\rm AB}-\eta_{\rm AB}^{(k)}|{\bf E}_{\rm AB}}\right\rangle$ can be written as the Hilbert-Schmidt inner product between the Bell operator $\calB$ and the difference in the density matrices $\rho_{\rm AB}-\eta_{\rm AB}^{(k)}$, 
and the inequality follows from H\"older's inequality [see, e.g., Eq.~(6) in Ref.~\cite{Baumgartner2011} and Section 1.1 in Ref.~\cite{WatrousBook}) 
  and the assumption that $\eta_{\rm AB}^{(k)}\in\calL_{\rm AB}$.
Since this is true for every $k$, we learn that in the limit of $k\to\infty$,  $\langle{\bf B},{\bf P}_{\rho_{\rm AB}|{\bf E}_{\rm AB}}\rangle\le\omega({\bf B})$, which shows that  $\rho_{\rm AB}\in\calL_{\rm AB}$.
\end{proof}
Hence, Assumption~\ref{Assumption:Convex} holds when the underlying state resource is nonlocality.
Note that here, the nonlocality is not specified to a particular Bell inequality.

\section{Remarks on Strong Duality of Conic Program}\label{App:CPIntro}
As mentioned in the main text, the Slater's condition is equivalent to checking whether there exists a point $x\in{\rm relint}\left(\C\right)$ such that $\mathfrak{L}(x) \prec B$.
When the primal problem is convex, Slater's condition guarantees the strong duality~\cite{Boyd:Book}.
However, there exist examples where the primal problem is convex, yet strong duality does not hold. For example, the primal may be infeasible, and the dual is unbounded.
\begin{afact}\label{DualityExample}
There exist examples where strong duality does not hold, even though the primal problem is convex.
\end{afact}
\begin{proof}
For instance, consider $\C = \{\alpha\proj{0}\,|\,\alpha\ge0\}$ in a qubit system.
Then the following conic program
\begin{eqnarray}
\begin{aligned}
	\min_{V}\quad&\tr(V)\\
	{\rm s.t.}\quad&\frac{\id_2}{2}\ole V;\;V\in\{\alpha\proj{0}\,|\,\alpha\ge0\}
\end{aligned}
\end{eqnarray}
is clearly convex but has no feasible $V$.
Indeed, regardless of the value of $\alpha$, the (matrix) inequality constraint can never be satisfied. Meanwhile, its dual reads
\begin{eqnarray}
\begin{aligned}
	\max_{Y}\quad&\frac{\tr(Y)}{2}\\
	{\rm s.t.}\quad&Y\oge0;\;\bra{0}Y\ket{0}\le1,
\end{aligned}
\end{eqnarray}
which is infinite since $\bra{1}Y\ket{1}$ can be arbitrarily large.
\end{proof}

\section{Conic Programming for $R$-Free Incompatibility}\label{App:RConeProgram}

\subsection{Dual Problem of $\RRT$}
Recall that $\rms$ is always assumed to be finite-dimensional.
Then, we start with the following result, which explicitly explains our motivation for imposing Assumptions~\ref{Assumptions}.
\begin{alemma}\label{Fact:FiniteRobustness}
Suppose Assumptions~\ref{Assumptions} hold, then for every $\bmsl = \{\sigma_{\rm X}\}_{\rm X\in{\bm\Lambda}}$, we have
\begin{enumerate}
\item $\RRT(\bmsl)<\infty$.
\item {\em (Slater's Condition)} There exists $V_*\in{\rm relint}(\mathbfcal{C})$ such that $\sigma_{\rm X}\prec\tr_{\rm S\setminus X}(V_*)\;\forall\;{\rm X}\in\bm{\Lambda}.$
\end{enumerate}
\end{alemma}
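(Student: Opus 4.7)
The plan is to use Assumption~\ref{Assumption:Static} as the source of a reference state with full-rank marginals on $\bml$, and to combine it with the standard convex-analytic fact (valid in finite dimensions) that every nonempty convex set has nonempty relative interior. Assumption~\ref{Assumption:Convex} enters through the convexity of $\SRT$ guaranteed by Lemma~\ref{Fact:Convex}, and through Lemma~\ref{Lemma:RelativeInterior}, which lifts relative-interior membership from $\SRT$ up to the cone $\mathbfcal{C}_{R|\rmt}$.

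For Part 1, let $\rho^{(0)}_\rms$ be the global state provided by Assumption~\ref{Assumption:Static}, and set $\lambda_0 := \min_{{\rm X}\in\bml}\lambda_{\min}(\rho^{(0)}_{\rm X}) > 0$. Since $\sigma_{\rm X}\preceq\id_{\rm X}$ for every ${\rm X}\in\bml$, the choice $p=\lambda_0$ gives $\rho^{(0)}_{\rm X}-p\sigma_{\rm X}\succeq(\lambda_0-p)\id_{\rm X}=0$, so each $\tau_{\rm X}:=(\rho^{(0)}_{\rm X}-p\sigma_{\rm X})/(1-p)$ is a valid density matrix. The state $\rho^{(0)}_\rms$ is then compatible with the mixture $p\bmsl+(1-p)\bm{\tau}_{\bml}$, and since $\rho^{(0)}_\rmt\in\FRT$ this mixture lies in $\CRTL$. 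The definition of $\RRT$ then yields $\RRT(\bmsl)\le -\log_2\lambda_0<\infty$.

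For Part 2, Assumption~\ref{Assumption:Convex} together with Lemma~\ref{Fact:Convex} makes $\SRT$ a nonempty convex subset of a finite-dimensional real vector space, so ${\rm relint}(\SRT)\ne\emptyset$; pick any $\eta^*_\rms\in{\rm relint}(\SRT)$. For some $t\in(0,1)$ define $\tilde\eta_\rms:=t\eta^*_\rms+(1-t)\rho^{(0)}_\rms$. By the standard fact that convex-combining a relative-interior point with any point of a convex set lands one in the relative interior, $\tilde\eta_\rms\in{\rm relint}(\SRT)$; simultaneously, $\tilde\eta_{\rm X}\succeq(1-t)\lambda_0\id_{\rm X}$ for each ${\rm X}\in\bml$, so these marginals are strictly positive definite. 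Applying Lemma~\ref{Lemma:RelativeInterior} yields $V_*:=\alpha\tilde\eta_\rms\in{\rm relint}(\mathbfcal{C}_{R|\rmt})$ for every $\alpha>0$, and choosing $\alpha>1/[(1-t)\lambda_0]$ forces $\tr_{\rm S\setminus X}(V_*)-\sigma_{\rm X}\succeq[\alpha(1-t)\lambda_0-1]\id_{\rm X}\succ 0$ for all ${\rm X}\in\bml$, as required.

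The principal obstacle is this Slater-type construction. Assumption~\ref{Assumption:Static} supplies a reference with nice marginals but offers no a priori relative-interior guarantee, whereas Lemma~\ref{Lemma:RelativeInterior} demands a relative-interior element of $\SRT$; conversely, the abstract $\eta^*_\rms$ produced by general convex geometry carries no full-rank marginal control. Taking a strict convex combination of the two objects repairs both deficits at once, and this is the crux of the argument. After that, scaling by a sufficiently large $\alpha$ to enforce the strict operator inequalities is routine.
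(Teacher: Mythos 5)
Your proof is correct and follows essentially the same route as the paper's: mix a relative-interior point of $\SRT$ with the full-rank-marginal state supplied by Assumption~\ref{Assumption:Static}, lift the result into the cone via Lemma~\ref{Lemma:RelativeInterior}, and scale by a large enough $\alpha$ to enforce the strict operator inequalities. The only (harmless) deviations are that you make Part~1 fully explicit with the bound $\RRT(\bmsl)\le-\log_2\lambda_0$, whereas the paper merely asserts finiteness, and that you invoke the standard line-segment principle for relative interiors where the paper proves the same membership by an explicit $\epsilon$-ball argument.
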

\begin{proof}
From the form of the constraint given in \cref{Eq:Robustness}, we see that when Assumption~\ref{Assumption:Static} holds, the optimization of Eq.~\eqref{Eq:Robustness} must be feasible with some finite value of the objective function {\em for every $\bmsl$}.
This implies that $\RRT(\bmsl)<\infty$ for every $\bmsl$.
So it suffices to prove Slater's condition. 
Suppose $\eta_\rms\in\SRT$ satisfies that $\eta_{\rm X}$ is full-rank $\forall\;{\rm X}\in\bm{\Lambda}$, which is guaranteed by Assumption~\ref{Assumption:Static}.
Together with Assumption~\ref{Assumption:Convex} and Lemma~\ref{Fact:Convex}, we learn that $\SRT$ is nonempty and convex.
Hence, its relative interior ${\rm relint}(\SRT)$ is also nonempty and convex~\cite{LectureNote,Bertsekas-book}.
Let $\tau_\rms\in{\rm relint}(\SRT)$.
From the definition Eq.~\eqref{Eq:relintDefinition} we learn that there exist $\epsilon>0$ such that
$
\mathcal{B}(\tau_\rms;\epsilon)\cap{\rm aff}(\SRT)\subseteq\SRT.
$
Here, one can choose the open ball with the trace norm $\norm{\cdot}_1$ (see also \cref{footnote3}).
Now, pick $p\in(0,1)$ to be small enough and define $\kappa_\rms = (1-p)\tau_\rms + p\eta_\rms$.

Then, when $p$ is sufficiently small, we have
\begin{itemize}
\item $\kappa_{\rm X}$ is full-rank $\forall\;{\rm X}\in\bm{\Lambda}$.
\item $\norm{\kappa_\rms-\tau_\rms}_1<\frac{\epsilon}{2}$.
\item $\kappa_\rms\in\SRT$.
\end{itemize}
The last condition is due to the convexity of $\SRT$ (Lemma~\ref{Fact:Convex}).
Using the triangle inequality of $\norm{\cdot}_1$~\cite{QCI-book}, we have $\mathcal{B}\left(\kappa_\rms;\frac{\epsilon}{2}\right)\subseteq\mathcal{B}(\tau_\rms;\epsilon)$; namely,
\begin{align}
\mathcal{B}\left(\kappa_\rms;\frac{\epsilon}{2}\right)\cap{\rm aff}(\SRT)\subseteq\SRT.
\end{align}
Thus, from Eq.~\eqref{Eq:relintDefinition} we learn that $\kappa_\rms\in{\rm relint}(\SRT)$.

From Eq.~\eqref{Eq:CQ-STR} we recall that $\mathbfcal{C}_{R|{\rm T}} = \C_{\SRT}$.
Using Lemma~\ref{Lemma:RelativeInterior}, we have 
\begin{align}
\alpha\kappa_\rms\in{\rm relint}(\mathbfcal{C}_{R|{\rm T}})\quad\forall\;\alpha>0.
\end{align}
Let $p_{\rm min}(\kappa_{\rm X})$ denote the smallest eigenvalue of $\kappa_{\rm X}$.
Being a full-rank state in a finite-dimensional system ${\rm X}$, we have $p_{\rm min}(\kappa_{\rm X})>0\;\forall\;{\rm X}\in\bm{\Lambda}$.
By choosing 
\begin{align}
\alpha_{\rm X} > \frac{1}{p_{\rm min}(\kappa_{\rm X})}\ge1, 
\end{align}
which is finite, we have $\alpha_{\rm X}\kappa_{\rm X}\succ\id_{\rm X}$.
Define 
\begin{align}
V_*\coloneqq\left(\max_{{\rm X}\in\bm{\Lambda}}\alpha_{\rm X}\right)\kappa_\rms\in{\rm relint}(\mathbfcal{C}_{R|{\rm T}}),
\end{align} 
then we have
\begin{align}
\sigma_{\rm X}\ole\id_{\rm X}\prec\alpha_{\rm X}\kappa_{\rm X}\ole\tr_{\rm S\setminus X}(V_*)\quad\forall\;\bmsl\;\&\;{{\rm X}\in\bm{\Lambda}}.
\end{align}
This verifies Slater's condition for the (primal) problem given in Eq.~\eqref{Eq:RConicProgram}, completing the proof.
\end{proof}
We now summarize the conic program of $\RRT$, its dual problem, and a sufficient condition of strong duality in the following theorem.

\newpage

\begin{atheorem}\label{Result:ConicProgrammingFacts}
	Given a state resource $R$, a target system $\rmt$ in a finite-dimensional global system $\rms$, and a set of marginal systems in $\bm{\Lambda}$, then
\begin{enumerate}
\item\label{Statement:Primal}  $2^{\RRT(\bmsl)}$ is given by 
\begin{equation}\tag{\ref{Eq:trV}}
\begin{split}
\min_{V}\quad&\tr(V)\\
{\rm s.t.}\quad&V\in\mathbfcal{C}_{R|{\rm T}};\;\sigma_{\rm X}\ole\tr_{\rm S\setminus X}(V)\quad\forall\;{\rm X}\in\bm{\Lambda},
\end{split}
\end{equation}
which is a conic program whenever $\mathbfcal{C}_{R|{\rm T}}$ defined in Eq.~\eqref{Eq:C-Cone} is a proper cone, i.e., whenever Assumption~\ref{Assumption:Convex} holds.
\item\label{Statement:Finite} $\RRT(\bmsl)<\infty$ for every $\bmsl$ if Assumptions~\ref{Assumptions} hold.
\item\label{Statement:Dual} The optimization problem dual to \cref{Eq:RConicProgram} is given by
\begin{eqnarray}\label{Eq:DualRConeProgram}
\begin{aligned}
\max_{\{Y_{\rm X}\}}\quad&\sum_{{\rm X}\in\bm{\Lambda}}\tr(\sigma_{\rm X}Y_{\rm X})\\
{\rm s.t.}\quad&\sum_{{\rm X}\in\bm{\Lambda}}\tr(\tau_{\rm X}Y_{\rm X})\le1\quad\forall\;{\bm{\tau_\Lambda}\in\CRTL;}\\
&Y_{\rm X}\succeq0\quad\forall\;{\rm X}\in\bm{\Lambda}.
\end{aligned}
\end{eqnarray}
\item\label{Statement:StrongDuality} Strong duality holds if Assumptions~\ref{Assumptions} hold.
\end{enumerate}
\end{atheorem}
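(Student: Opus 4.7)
The plan is to treat the four claims in order, since Statements~\ref{Statement:Dual} and~\ref{Statement:StrongDuality} rest on the primal formulation derived in Statement~\ref{Statement:Primal}. For Statement~\ref{Statement:Primal}, I would reparametrize the infimum in~\cref{Eq:Robustness} via $t \defeq 1/p \ge 1$, so that $2^{\RRT(\bmsl)} = \inf t$ subject to the existence of a state $\eta_\rms \in \SRT$ (which by Lemma~\ref{Fact:Convex} corresponds to the membership of the mixed marginals in $\CRTL$) and states $\tau_{\rm X}$ with $\tr_{\rm S\setminus X}(\eta_\rms) = \sigma_{\rm X}/t + (1-1/t)\tau_{\rm X}$ for every ${\rm X}\in\bml$. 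Setting $V \defeq t\eta_\rms$ then gives $V \in \mathbfcal{C}_{R|{\rm T}}$ with $\tr(V) = t$ and $\tr_{\rm S\setminus X}(V) = \sigma_{\rm X} + (t-1)\tau_{\rm X} \oge \sigma_{\rm X}$, producing a feasible point of~\cref{Eq:trV}. For the converse, given any feasible $V$ of~\cref{Eq:trV}, write $V = \alpha\eta_\rms$ with $\alpha = \tr(V) \ge 1$ (this lower bound follows from the majorization of $\sigma_{\rm X}$ after tracing). Taking $p = 1/\alpha$ and $\tau_{\rm X} \defeq [\tr_{\rm S\setminus X}(V) - \sigma_{\rm X}]/(\alpha - 1)$ (which is a state by construction) gives a feasible triple in~\cref{Eq:Robustness}. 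The claim that $\mathbfcal{C}_{R|\rmt}$ is a proper cone under Assumption~\ref{Assumption:Convex} (together with non-emptiness, which is anyway needed for the program to make sense) is precisely Theorem~\ref{Result:PhysicalMessage}.

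Statement~\ref{Statement:Finite} is the easiest: invoke Assumption~\ref{Assumption:Static} to obtain $\rhos \in \SRT$ whose marginals $\rhox$ are all full-rank, then observe that for any $\bmsl$ the operator $V_\alpha \defeq \alpha\rhos$ is feasible in~\cref{Eq:trV} as soon as $\alpha \ge \max_{\rm X\in\bml}\|\sigma_{\rm X}\|_\infty / \lambda_{\min}(\rho_{\rm X})$, which is a finite bound because $\rms$ is finite-dimensional. Hence $\RRT(\bmsl) \le \log_2\alpha < \infty$.

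For Statement~\ref{Statement:Dual}, I would construct the Lagrangian $\mathcal{L}(V,\{Y_{\rm X}\}) = \tr(V) - \sum_{{\rm X}\in\bml}\tr[Y_{\rm X}(\tr_{\rm S\setminus X}(V) - \sigma_{\rm X})]$ with dual variables $Y_{\rm X} \succeq 0$. Defining $\widetilde Y \defeq \sum_{\rm X} Y_{\rm X}\otimes \id_{\rm S\setminus X}$, the Lagrangian rewrites as $\tr[(\id - \widetilde Y)V] + \sum_{\rm X} \tr(\sigma_{\rm X}Y_{\rm X})$. Minimising over $V \in \mathbfcal{C}_{R|\rmt}$ yields $-\infty$ unless $\id - \widetilde Y \in \mathbfcal{C}_{R|\rmt}^*$, in which case the infimum equals $\sum_{\rm X} \tr(\sigma_{\rm X} Y_{\rm X})$. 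The final bookkeeping step is to translate the dual-cone condition into the form displayed in~\cref{Eq:DualRConeProgram}: unfolding the definition of $\mathbfcal{C}_{R|\rmt}^*$ with representatives $\alpha\eta_\rms$ ($\alpha \ge 0$, $\eta_\rms \in \SRT$) gives $\tr[\widetilde Y\,\eta_\rms] \le 1$ for all $\eta_\rms \in \SRT$, and since $\tr[\widetilde Y\,\eta_\rms] = \sum_{\rm X}\tr(\eta_{\rm X}Y_{\rm X})$ and every $\bm{\tau}_\bml \in \CRTL$ is realized as the marginal tuple of some $\eta_\rms \in \SRT$ (by~\cref{Eq:CTR}), this is equivalent to $\sup_{\bm{\tau}_\bml \in \CRTL}\sum_{\rm X}\tr(\tau_{\rm X}Y_{\rm X}) \le 1$.

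Finally, Statement~\ref{Statement:StrongDuality} is a direct consequence of the general conic programming theorem~\cite{Gartner2012}: when the primal is finite-valued and Slater's condition is satisfied, strong duality holds. Finiteness was established in Statement~\ref{Statement:Finite}, and Slater's condition, i.e.\ the existence of $V_* \in \mathrm{relint}(\mathbfcal{C}_{R|\rmt})$ with $\sigma_{\rm X} \prec \tr_{\rm S\setminus X}(V_*)$ for all ${\rm X}\in\bml$, is exactly the content of Lemma~\ref{Fact:FiniteRobustness}. I expect Statement~\ref{Statement:Dual}, specifically the step that identifies $\id - \widetilde Y \in \mathbfcal{C}_{R|\rmt}^*$ with the supremum constraint, to be the main obstacle, because one has to be careful that the supremum over $\CRTL$ (marginal tuples) really exhausts the set of test vectors $\eta_\rms$ against which the dual cone is defined; this relies on the characterization $\mathbfcal{C}_{R|\rmt} = \C_{\SRT}$ from~\cref{Eq:CQ-STR} and the surjectivity of the marginalization onto $\CRTL$.
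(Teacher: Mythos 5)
Your proposal is correct and follows essentially the same route as the paper: the change of variables $V=t\,\eta_\rms$ (equivalently $V=\lambda\eta_\rms$) for Statement~\ref{Statement:Primal}, an explicit feasible point from Assumption~\ref{Assumption:Static} for Statement~\ref{Statement:Finite}, and finiteness plus Slater's condition (Lemma~\ref{Fact:FiniteRobustness}) for Statement~\ref{Statement:StrongDuality}. Your Lagrangian derivation of Statement~\ref{Statement:Dual}, including the identification of $\id-\widetilde{Y}\in\mathbfcal{C}_{R|\rmt}^*$ with the normalized constraint over $\CRTL$ via the surjectivity of marginalization from $\SRT$, is exactly the computation the paper performs by instantiating \cref{Eq:CPPrimal}--\cref{Eq:CPDual}.
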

\begin{proof}
Statement~\ref{Statement:Finite} follows directly from  Lemma~\ref{Fact:FiniteRobustness}.
The proof of the other Statements proceeds as follows.

{\em Proof of Statement~\ref{Statement:Primal}.--}
From Eq.~\eqref{Eq:Robustness}, it follows that
\begin{eqnarray}
\begin{aligned}
	2^{\RRT(\bmsl)} = \min_{\lambda,\eta_\rms}\quad&\lambda\\
	{\rm s.t.}\quad&\sigma_{\rm X}\preceq\lambda\,\tr_{\rms\setminus\rmx}(\eta_{\rms})\quad \forall\;{\rm X}\in\bm{\Lambda};\\
	&\lambda\ge0; \eta_\rms\succeq0;\;\tr(\eta_\rms) = 1;\eta_\rmt\in\FRT.
\end{aligned}
\end{eqnarray}
Let $V = \lambda\eta_\rms$ and recall from Eq.~\eqref{Eq:CQ-STR} the definition of $\mathbfcal{C}_{R|{\rm T}}$, then the optimization problem becomes Eq.~\eqref{Eq:RConicProgram}.
Finally, note that the cone $\mathbfcal{C}_{R|{\rm T}}$ is nonempty, convex, and closed if and only if $\FRT$ is so (Theorem~\ref{Result:PhysicalMessage}; see also Lemmas~\ref{Fact:Convex} and~\ref{Fact:Compact}),
which is guaranteed from Assumption~\ref{Assumption:Convex}. 
Finally, that \cref{Eq:RConicProgram} is in the primal form of a conic program, cf.~\cref{Eq:CPPrimal}, can be seen by rewriting~\eqref{Eq:trV} as
\begin{equation}\label{Eq:Primal2}
\begin{split}
-\max_{V}\quad&-\tr(V)\\
{\rm s.t.}\quad& V\in\mathbfcal{C}_{R|{\rm T}};\;\mathfrak{L}(V)\preceq -\bigoplus_{\rmx\in\bm{\Lambda}}\sigma_{\rm X},
\end{split}
\end{equation}
where 
$\mathfrak{L}(V)\coloneqq - \bigoplus_{{\rm X}\in\bm{\Lambda}}\tr_{\rm S\setminus X}(V)$ and adopting the Hilbert-Schmidt inner product $\langle x,y\rangle = \tr(x^\dagger y)$.

{\em Proof of Statement~\ref{Statement:StrongDuality}.--} \cref{Eq:RConicProgram} is a convex optimization problem, since the objective function is linear in the variable $V$, and the feasible set is convex, as guaranteed by Assumption~\ref{Assumption:Convex} and Fact~\ref{Fact:Compact}. Together with Slater's condition from Lemma~\ref{Fact:FiniteRobustness}, strong duality holds for~\cref{Eq:RConicProgram}.

{\em Proof of Statement~\ref{Statement:Dual}.--}
Using \cref{Eq:Primal2} and applying the dual form of a conic program, cf. \cref{Eq:CPDual}, from~\cref{Eq:CPPrimal}, we arrive at the following dual program:
\begin{eqnarray}\label{Eq:dual0}
\begin{aligned}
-\min_{Y}\quad&\tr\left[ \left(-\bigoplus_{{\rm X}\in\bm{\Lambda}}\sigma_{\rm X}\right)Y\right]\\
{\rm s.t.}\quad&Y\succeq0;\;\langle Y,\mathfrak{L}(Z)\rangle\ge\langle -\id,Z\rangle\;\forall\;Z\in\mathbfcal{C}_{R|{\rm T}}.
\end{aligned}
\end{eqnarray}
Given the block-diagonal form of the operator involved in the objective function and the constraint, without loss of generality, we may write $Y = \bigoplus_{{\rm X}\in\bm{\Lambda}}Y_{\rm X}$ and define $Z_{\rm X}\coloneqq\tr_{\rm S\setminus X}(Z)$:
\begin{eqnarray}
\begin{aligned}
\max_{\{Y_{\rm X}\}_{\rm X}}\quad&\sum_{{\rm X}\in\bm\Lambda}\tr(\sigma_{\rm X}Y_{\rm X})\\
{\rm s.t.}\quad&\sum_{{\rm X}\in\bm\Lambda}\tr(Z_{\rm X}Y_{\rm X})\le\tr(Z)\\
&\forall\;Z\in\{\alpha\rhos\,|\,\alpha\ge0,\rhos\in\mathcal{S}_\rms,\rho_\rmt\in\FRT\};\\
&Y_{\rm X}\succeq0\quad\forall\;{\rm X}\in\bm{\Lambda}.
\end{aligned}
\end{eqnarray}
Note that this optimization equals the one when we only consider $\alpha>0$.
This is because $\alpha = 0$ gives no constraint on $Y$ (more precisely, it gives the constraint ``$\sum_{{\rm X\in}{\bm\Lambda}}0\le0$''), so the maximization must always be constrained by cases with $\alpha>0$.
This means
\begin{eqnarray}
\begin{aligned}
\max_{\{Y_{\rm X}\}_{\rm X}}\quad&\sum_{{\rmx}\in\bm{\Lambda}}\tr(\sigma_{\rm X}Y_{\rm X})\\
{\rm s.t.}\quad&\sum_{\rmx\in\bm{\Lambda}}\tr(\rho_{\rm X}Y_{\rm X})\le1\quad\forall\;\rhos\in\mathcal{S}_\rms,\, \rho_\rmt\in\FRT;\\
&Y_{\rm X}\succeq0\quad\forall\;\rmx\in\bm{\Lambda}.
\end{aligned}
\end{eqnarray}
This is equivalent to Eq.~\eqref{Eq:DualRConeProgram}, and the proof is completed.
\end{proof}

\section{Other Proofs}
\subsection{Proof of Theorem~\ref{Result:RTWitness}}\label{App:Proof-Result:RTWitness}
\begin{proof}
Let $\bmsl\notin\mathfrak{C}_{R|{\rm T}}$.
In Theorem~\ref{Result:ConicProgrammingFacts}, the dual problem of Eq.~\eqref{Eq:RConicProgram} is shown to be Eq.~\eqref{Eq:DualRConeProgram}.
Furthermore, with the validity of Assumptions~\ref{Assumptions}, Theorem~\ref{Result:ConicProgrammingFacts} implies that the optimization Eq.~\eqref{Eq:RConicProgram} outputs a solution that is finite and strictly larger than $1$.
In other words, there exist $Y_{\rm X}\succeq0, \rmx\in\bm{\Lambda}$ such that
$\sum_{{\rm X}\in\bm{\Lambda}}\tr(\tau_{\rm X}Y_{\rm X})<\sum_{{\rm X}\in\bm{\Lambda}}\tr(\sigma_{\rm X}Y_{\rm X})\;\forall\;\bm{\tau_\Lambda}\in\mathfrak{C}_{R|{\rm T}}.$
Taking $W_{\rmx} = Y_{\rmx}$ completes one direction of the proof. The proof in the other direction is obvious and is thus omitted.
\end{proof}

\subsection{Proof of Proposition~\ref{Result:Equivalence}}\label{App:Proof-Result:Equivalence}
\begin{proof}
First, we note that Assumption~\ref{Assumption:Convex} is necessary and sufficient for the convexity and closedness of $\mathbfcal{C}_{R|{\rm T}}$ (Theorem~\ref{Result:PhysicalMessage}).
Hence, Statement~\ref{Condition2}(i) $\Leftrightarrow$ Assumption~\ref{Assumption:Convex}.
Now suppose $\RRT(\bmsl)<\infty\;\forall\;\bmsl$ {\em but} Assumption~\ref{Assumption:Static} fails, namely, there is no state $\rhos$ in $\rms$ such that $\rho_\rmt\in\FRT$ and $\rho_{\rm X}$ is full-rank $\forall\;{\rm X}\in\bm{\Lambda}$.
This means that for every $\eta_\rms\in\SRT$, we must have that $\eta_{\rm X}$ is {\em not} full-rank for some ${\rm X}\in\bm{\Lambda}$.
Now we choose $0\prec\sigma_{\rm X} = \frac{\id_{\rm X}}{d_{\rm X}}\;\forall\;{\rm X}\in\bm{\Lambda}$.
Then for every $\eta_\rms\in\SRT$, there exists no finite $\alpha>0$ that can achieve
$
\frac{\id_{\rm X}}{d_{\rm X}}\ole\alpha\eta_{\rm X}\;\forall\;{\rm X}\in\bm{\Lambda},
$
since there must be some ${\rm X}$ where $\eta_{\rm X}$ is not full-rank, thereby forbidding the inequality with a finite $\alpha$.
In other words, there is no $V\in\mathbfcal{C}_{R|{\rm T}} = \C_{\SRT}$ that can achieve
$
\frac{\id_{\rm X}}{d_{\rm X}}\ole\tr_{\rm S\setminus X}(V)\;\forall\;{\rm X}\in\bm{\Lambda}.
$
This implies that the (primal) problem of Eq.~\eqref{Eq:RConicProgram} has no feasible point, which gives a contradiction.
Hence, Statement~\ref{Condition2}(ii) $\implies$ Assumption~\ref{Assumption:Static}.

To show the converse, recall from Theorem~\ref{Result:ConicProgrammingFacts} that when Assumptions~\ref{Assumptions} hold, $\RRT(\bmsl)$ for every $\bmsl$ can be cast as a conic program, c.f.~\cref{Eq:RConicProgram}, that evaluates to a finite value and where strong duality holds.
This completes the proof.
\end{proof}

\subsection{Remark on Assumption~\ref{Assumption:Static}}\label{App:RemarkAssumption2}

A question that follows from the above result is whether one can relax Assumption~\ref{Assumption:Static}  into Assumption~\ref{Asssumption:SufficientCondition}; namely, the existence of a full-rank $\eta_\rmt\in\FRT$.
This is, however, not necessary for Statement~\ref{Condition2}, as we will show using the following counterexample.
Consider a bipartite system ${\rm S = T = AB}$ with equal local dimension $d<\infty$, and ${\bm{\Lambda}} = \{{\rm A,B}\}$ with $\mathcal{F}_{R|{\rm AB}} = \{\ket{\Psi_{\rm AB}^+}\}$, where $\ket{\Psi_{\rm AB}^+}\coloneqq\frac{1}{\sqrt{d}}\sum_{i=0}^{d-1}\ket{i}_{\rm A}\otimes\ket{i}_{\rm B}$ is a maximally entangled state.
This can be understood as the resource theory of athermality at zero temperature where the thermal state for some non-degenerate Hamiltonian is the entangled ground state $\ket{\Psi_{\rm AB}^+}$.
Now, since the single party marginal of $\ket{\Psi_{\rm AB}^+}$ is maximally mixed, one can check that Assumption~\ref{Assumption:Static} is satisfied.
On the other hand, it is clear that the only free state is not full-rank.

\subsection{Proof of Theorem~\ref{Result:DiscriminationTask}}\label{App:Proof-Result:DiscriminationTask}
\begin{proof}
From Theorem~\ref{Result:RTWitness}, $\bmsl\notin\CRTL$ if and only if there exist $\{W_{\rmx}\succeq0\}_{{\rmx}\in\bm{\Lambda}}$ such that
\begin{align}\label{INeq}
\sup_{{\bm{\tau_\Lambda}\in\CRTL}}\sum_{{\rm X}\in\bm{\Lambda}}\tr(\tau_{\rm X}W_{\rm X})<\sum_{{\rm X}\in\bm{\Lambda}}\tr(\sigma_{\rm X}W_{\rm X}).
\end{align}
Without loss of generality, we may assume that for every ${\rmx}\in\bm{\Lambda}$, $W_{\rm X}$ is strictly positive ($W_{\rm X}\succ0$), i.e., having only positive eigenvalues.
This is because we can add $\Delta = \Delta\tr(\tau_{\rm X}) = \Delta\tr(\sigma_{\rm X})$ on both sides and still preserve the strict inequality, where $\Delta>0$ is a positive number.
Now, for each ${\rm X}$, the spectral decomposition of $W_{\rmx}\succ0$ can be written as $W_{\rm X} = \sum_{i=1}^{d_{\rm X}}\omega_{i|{\rm X}}\proj{\psi_{i|{\rm X}}}$, where $d_{\rm X}$ is the dimension of the system ${\rm X}$ and  $\omega_{i|{\rm X}}>0\;\forall\;i,{\rm X}$.
For any set of unitary channels in ${\rm X}$ given by $\mathbfcal{U} = \{\mathcal{U}_{i|{\rm X}}\}_{i=1;{\rm X}\in\bm{\Lambda}}^{d_{\rm X}}$, with $\U_{i|{\rm X}}(\cdot)\coloneqq U_{i|\rmx}(\cdot) U_{i|\rmx}^\dag$ for a given unitary operator $U_{i|\rmx}$, we can write
\begin{align}
\tr\left(\sigma_{\rm X}W_{\rm X}\right) = \sum_{i=1}^{d_{\rm X}}\omega_{i|{\rm X}}\tr\left[\mathcal{U}_{i|{\rm X}}(\sigma_{\rm X})\mathcal{U}_{i|{\rm X}}(\proj{\psi_{i|{\rm X}}})\right]=\frac{1}{d_{\rm X}}\sum_{i=1}^{d_{\rm X}}\tr\left[M_{i|{\rm X}}\mathcal{U}_{i|{\rm X}}(\sigma_{\rm X})\right],
\end{align}
where 
$
M_{i|{\rm X}}\coloneqq d_{\rm X}\omega_{i|{\rm X}}\mathcal{U}_{i|{\rm X}}(\proj{\psi_{i|{\rm X}}}),
$
which is again a non-zero positive semi-definite operator.
From here we obtain
\begin{align}\label{Eq:PDminusIneq}
\sup_{{\bm{\tau_\Lambda}\in\CRTL}}\sum_{{\rm X}\in\bm{\Lambda}}\frac{1}{d_{\rm X}}\sum_{i=1}^{d_{\rm X}}\tr\left[M_{i|{\rm X}}\mathcal{U}_{i|{\rm X}}(\tau_{\rm X})\right]<\sum_{{\rm X}\in\bm{\Lambda}}\frac{1}{d_{\rm X}}\sum_{i=1}^{d_{\rm X}}\tr\left[M_{i|{\rm X}}\mathcal{U}_{i|{\rm X}}(\sigma_{\rm X})\right].
\end{align}
Note that the inequality remains valid if we perform the mapping $M_{i|\rmx} \to \alpha(M_{i|\rmx} + \Delta_0\id)$ for $\alpha, \Delta_0>0$. In particular, for judiciously chosen positive $\alpha<1$, we then have
\begin{align}\label{Eq:POVM1}
M_{i|{\rm X}}\succ0\quad\forall\;i,{\rm X}\quad\&\quad\sum_iM_{i|{\rm X}}\prec\id_{\rm X}\quad\forall\;{\rm X},
\end{align}
thus allowing one to interpret $\{M_{i|{\rm X}}\}_i$, for each ${\rm X}\in\bm{\Lambda}$, as an incomplete POVM.
For any set of states ${\bm{\kappa_\Lambda}} = \{\kappa_{\rm X}\}_{{\rm X}\in\bm{\Lambda}}$ we define the probability of success in the task $D_-$ as:
\begin{align}\label{Eq:PD-}
P_{D_-} ({\bm{\kappa_\Lambda}},\mathbfcal{U})\coloneqq\frac{1}{|{\bm{\Lambda}}|}\sum_{{\rm X}\in\bm{\Lambda}}\frac{1}{d_{\rm X}}\sum_{i=1}^{d_{\rm X}}\tr\left[M_{i|{\rm X}}\mathcal{U}_{i|{\rm X}}(\kappa_{\rmx})\right],
\end{align}
which can be understood as the success probability of using ${\bm{\kappa_\Lambda}}$ to discriminate $\mathbfcal{U}$ in the ``non-deterministic'' task $D_-\coloneqq\left(\{p_{\rm X} = \frac{1}{|{\bm{\Lambda}}|}\},\{p_{i|{\rm X}} = \frac{1}{d_{\rm X}}\},\{M_{i|{\rm X}}\}\right)$.
With this notation, Eq.~\eqref{Eq:PDminusIneq} reads
$
\sup_{\bm{\tau_\Lambda}\in\CRTL}P_{D_-}(\bm{\tau_\Lambda},\mathbfcal{U})<P_{D_-}(\bm{\sigma_\Lambda},\mathbfcal{U}).
$
This means that there exists a finite value $\Delta_1>0$ such that
\begin{align}\label{Eq:Delta}
	P_{D_-}(\bmsl,\mathbfcal{U}) = \Delta_1 + \sup_{\bm{\tau_\Lambda}\in\CRTL}P_{D_-}(\bm{\tau_\Lambda},\mathbfcal{U}).
\end{align}
Now consider the task $D = \left(\{p_{\rm X}\},\{p_{i|{\rm X}}\},\{E_{i|{\rm X}}\}\right)$ with $\mathbfcal{E}=\{\E_{i|{\rm X}}\}_{i=1;{\rm X}\in\bm{\Lambda}}^{d_{\rm X}+1}$ defined as:
\begin{subequations}\label{Eq:D:Components}
\begin{align}
&p_{\rm X} = \frac{1}{|{\bm{\Lambda}}|};\\
&p_{i|{\rm X}} = \frac{1-\epsilon}{d_{\rm X}}\quad{\rm if}\;i\le d_{\rm X}\quad\&\quad p_{d_{\rm X} +1|{\rm X}} = \epsilon;\label{Eq:eps}\\
&\E_{i|{\rm X}} = \mathcal{U}_{i|{\rm X}}\quad{\rm if}\;i\le d_{\rm X}\quad\&\quad \E_{d_{\rm X} +1|{\rm X}} = \calL_{\rm X}\;\\
&E_{i|{\rm X}} = M_{i|{\rm X}}\quad{\rm if}\;i\le d_{\rm X}\quad\&\quad E_{d_{\rm X} +1|{\rm X}} = \id_{\rm X} - \sum_{\rm i=1}^{d_{\rm X}}M_{i|{\rm X}},
\end{align}
\end{subequations}
where $\epsilon\in[0,1]$ is a parameter that will be specified later, and $\calL_{\rm X}$ is an arbitrary channel (hence, we can choose it to be unitary).
From here we learn that $\{E_{i|{\rm X}}\}_{i=1}^{d_{\rm X} +1}$ is a POVM for every ${\rm X}$, which implies that $D$ is a channel-discrimination task.
Furthermore, $D$ is strictly positive when $0<\epsilon<1$ [see also Eq.~\eqref{Eq:POVM1}]. 
Hence, $D$ and $\mathbfcal{E}$ satisfy the description of Theorem~\ref{Result:DiscriminationTask}.

As with \cref{Eq:PD-}, a probability of success can be defined for the task $D$:
\begin{align}\label{Eq:PD}
P_D({\bm{\kappa_\Lambda}},\mathbfcal{E})
\coloneqq\frac{1}{|\bm{\Lambda}|}\sum_{{\rm X}\in\bm{\Lambda}}\sum_{i=1}^{d_{\rm X}+1}p_{i|{\rm X}}\tr\left[E_{i|{\rm X}}\E_{i|{\rm X}}(\kappa_{\rm X})\right]
\end{align}
which can be decomposed as (see also Ref.~\cite{Hsieh2021}):
\begin{align}\label{Eq:Composition}
P_D({\bm{\kappa_\Lambda}},\mathbfcal{E})  = P_{D_-} ({\bm{\kappa_\Lambda}},\mathbfcal{U}) + \epsilon\,\Gamma({\bm{\kappa_\Lambda}},\mathbfcal{E}),
\end{align}
where the second term is defined via Eq.~\eqref{Eq:PD-} and Eq.~\eqref{Eq:D:Components} as:
\begin{align}
\Gamma({\bm{\kappa_\Lambda}},\mathbfcal{E})\coloneqq\frac{1}{|{\bm{\Lambda}}|}\sum_{{{\rm X}\in\bm{\Lambda}}}\tr\left[E_{d_{\rmx}+1|\rmx}\calL_{\rm X}(\kappa_{\rm X})\right]-P_{D_-}({\bm{\kappa_\Lambda}},\mathbfcal{U}).
\end{align}
It then follows from \cref{Eq:Delta} and \cref{Eq:Composition} that
\begin{align}
\sup_{{\bm{\tau_\Lambda}\in\CRTL}}P_{D}({\bm{\tau_\Lambda}},\mathbfcal{E})&\le\sup_{{\bm{\tau_\Lambda}\in\CRTL}}P_{D_-} ({\bm{\tau_\Lambda}},\mathbfcal{U}) + \epsilon\times\sup_{{\bm{\tau_\Lambda}\in\CRTL}}\Gamma({\bm{\tau_\Lambda}},\mathbfcal{E})\nonumber\\
&=P_{D_-}({\bm{\sigma_\Lambda}},\mathbfcal{U}) - \Delta_1 + \epsilon\times\sup_{{\bm{\tau_\Lambda}\in\CRTL}}\Gamma({\bm{\tau_\Lambda}},\mathbfcal{E})\nonumber\\
&=P_D({\bm{\sigma_\Lambda}},\mathbfcal{E}) - \Delta_1 + \epsilon\Delta_2,
\end{align}
where $\Delta_2 \coloneqq \sup_{{\bm{\tau_\Lambda}\in\CRTL}}\Gamma({\bm{\tau_\Lambda}},\mathbfcal{E}) - \Gamma({\bm{\sigma_\Lambda}},\mathbfcal{E})$ is finite since $\Gamma$ is bounded for every $({\bm{\sigma_\Lambda}},\mathbfcal{E})$.
Therefore, we can write
\begin{align}
\sup_{{\bm{\tau_\Lambda}\in\CRTL}}P_{D}({\bm{\tau_\Lambda}},\mathbfcal{E})\le P_D({\bm{\sigma_\Lambda}},\mathbfcal{E}) - \Delta_1 + \epsilon\Delta_2.
\end{align}
Then if $\Delta_2\le0$, we have $\sup_{{\bm{\tau_\Lambda}\in\CRTL}}P_{D}({\bm{\tau_\Lambda}},\mathbfcal{E})< P_D({\bm{\sigma_\Lambda}},\mathbfcal{E})$ $\forall\;\epsilon\in[0,1]$.
When $\Delta_2>0$, one can take $\epsilon<\min\left\{\frac{\Delta_1}{\Delta_2},1\right\}$, c.f.~\cref{Eq:eps}, to conclude that $\sup_{{\bm{\tau_\Lambda}\in\CRTL}}P_{D}({\bm{\tau_\Lambda}},\mathbfcal{E})< P_D({\bm{\sigma_\Lambda}},\mathbfcal{E})$.
The result follows.
\end{proof}

\subsection{Proof of Eq.~(\ref{Eq:GoldenRuleTest})}\label{Proof-Eq:GoldenRuleTest}
\begin{proof}
For every ${\bm{\tau_\Lambda}\in\CRTL}$ and $\mathbfcal{E}_{\bm{\Lambda}}\in\mathfrak{O}_{R|{\rm T}}$, we have $\mathbfcal{E}_{\bm{\Lambda}}(\bm{\tau_\Lambda})\coloneqq\{\mE_{\rm X}(\tau_{\rm X})\}_{{\rm X}\in\bm{\Lambda}}$, where, according to the definitions,
\begin{align}
&\exists\;\E_\rms\;{\rm compatible\;with}\;\mathbfcal{E}_{\bm{\Lambda}}\;{\rm s.t.}\;\ttr_{\rm S\setminus T\to S\setminus T}\E_\rms\in\mathcal{O}_{R|{\rm T}};\\
&\exists\,\eta_\rms\;{\rm compatible\;with}\;{\bm{\tau_\Lambda}}\;{\rm s.t.}\;\tr_{\rm S\setminus T}(\eta_\rms)\in\FRT.
\end{align}
In other words, we have $\ttr_{\rm S\setminus X\to S\setminus X}\E_\rms = \mE_{\rm X}$ and $\tr_{\rm S\setminus X}(\eta_\rms) = \tau_{\rm X}$ for every ${\rm X}\in\bm{\Lambda}$.
This means that
\begin{align}
&\tr_{\rm S\setminus X}\left[\E_\rms(\eta_\rms)\right] =\mE_{\rm X}\left(\tau_{\rm X}\right)\quad\forall\; {\rm X}\in\bm{\Lambda};\\
&\tr_{\rm S\setminus T}\left[\E_\rms(\eta_\rms)\right] = \ttr_{\rm S\setminus T\to S\setminus T}\E_\rms\left[\tr_{\rm S\setminus T}(\eta_\rms)\right]\in\FRT,
\end{align}
where we have used Eq.~\eqref{Eq:Tr}.
Thus, there exists a global state, $\E_\rms(\eta_\rms)$, whose marginal state in $\rmt$, $\tr_{\rm S\setminus T}\left[\E_\rms(\eta_\rms)\right]$, is free, such that it is compatible with $\mathbfcal{E}_{\bm{\Lambda}}(\bm{\tau_\Lambda})$.
Hence, we conclude that $\mathbfcal{E}_{\bm{\Lambda}}(\bm{\tau_\Lambda})\in\CRTL$.
\end{proof}

\subsection{Proof of Theorem~\ref{Result:R-CompatibilityMonotone}}\label{Proof-Result:R-CompatibilityMonotone}

\begin{proof}
The first statement follows directly from the definition of $\RRT$, it thus suffices to prove the second statement.
For every $\mathbfcal{E}_{\bm{\Lambda}}\in\mathfrak{O}_{R|{\rm T}}$ and $\bmsl$, we have
\begin{align}
\RRT(\bmsl)& = \inf_{\lambda, \bm{\kappa_\Lambda}} \log_2\left\{\lambda\,|\,\sigma_{\rm X}\preceq\lambda\kappa_{\rm X}\;\forall\;{\rm X}\in\bm{\Lambda};\bm{\kappa_\Lambda}\in\CRTL\right\}\nonumber\\
&\ge\inf_{\lambda, \bm{\kappa_\Lambda}}\log_2\left\{\lambda\,|\,0\preceq\mE_{\rm X}\left(\lambda\kappa_{\rm X} - \sigma_{\rm X}\right)\;\forall\;{\rm X}\in\bm{\Lambda};\bm{\kappa_\Lambda}\in\CRTL\right\}\nonumber\\
&\ge\inf_{\lambda, \bm{\eta_\Lambda}}\log_2\left\{\lambda\,|\,0\preceq\lambda\eta_{\rm X} - \mE_{\rm X}\left(\sigma_{\rm X}\right)\;\forall\;{\rm X}\in\bm{\Lambda};\bm{\eta_\Lambda}\in\CRTL\right\}\nonumber\\
&=\RRT\left[\mathbfcal{E}_{\bm{\Lambda}}(\bm{\sigma_\Lambda})\right].
\end{align}
The first inequality holds because each $\mE_{\rmx}$ maps positive operators (and possibly some non-positive operators) to positive operators, thereby making the range of minimization larger. The second inequality follows from the linearity of $\mE_{\rmx}$ and the fact that $\mathbfcal{E}_{\bm{\Lambda}}\in\mathfrak{O}_{R|{\rm T}}$ implies $\{\mE_{\rm X}(\kappa_{\rm X})\}_{{\rm X}\in\bm{\Lambda}}\in\CRTL$ for every $\{\kappa_{\rm X}\}_{{\rm X}\in\bm{\Lambda}}\in\CRTL$ [i.e., Eq.~\eqref{Eq:GoldenRuleTest}], thereby (potentially) making the minimization range larger.
\end{proof}

\subsection{Proof of Theorem~\ref{Result:Convertibility}}\label{App:Proof-Result:Convertibility}
\begin{proof}
We first show that Statement~\ref{Free_op_transformation} implies Statement~\ref{Psucc_allD}.
For every given $D$, a direct computation shows that
\begin{align}
	\Scr\left(\bm{\sigma}_{\bm{\Lambda}},{D}\right)& = \Scr\left[\mathbfcal{E}_{\bm{\Lambda}}\left(\bm{\tau}_{\bm{\Lambda}}\right),{D}\right]\coloneqq\sup_{\mathbfcal{E}_{\bm{\Lambda}}'\in\mathfrak{O}_{R|\rmt,\bm{\Lambda}}}\sum_{{\rm X},i}p_{\rm X}p_{i|{\rm X}}{\rm tr}\left[E_{i|{\rm X}}\left(\mathcal{E}_{\rm X}'\circ\mathcal{E}_{\rm X}\right)(\tau_{\rm X})\right]\nonumber\\
	&\le\sup_{\mathbfcal{E}_{\bm{\Lambda}'}\in\mathfrak{O}_{R|\rmt,\bm{\Lambda}}}\sum_{{\rm X},i}p_{\rm X}p_{i|{\rm X}}{\rm tr}\left[E_{i|{\rm X}}\mathcal{E}_{\rm X}'(\tau_{\rm X})\right] = \Scr(\bm{\tau}_{\bm{\Lambda}},{D}),
\end{align}
where we have used $\mathbfcal{E}_{\bm{\Lambda}}'\circ\mathbfcal{E}_{\bm{\Lambda}}\in\mathfrak{O}_{R|\rmt,\bm{\Lambda}}$ if $\mathbfcal{E}_{\bm{\Lambda}}',\mathbfcal{E}_{\bm{\Lambda}}\in\mathfrak{O}_{R|\rmt,\bm{\Lambda}}$. 

It remains to show that Statement~\ref{Psucc_allD} implies Statement~\ref{Free_op_transformation}.
First, we note that Statement~\ref{Psucc_allD} can be rewritten as
\begin{align}\label{Eq:non-negative-computation01}
	0&\le \inf_{{D}}\left[\Scr(\bm{\tau}_{\bm{\Lambda}},{D})- \Scr(\bm{\sigma}_{\bm{\Lambda}},{D})\right]\le \inf_{{D}}\left[\Scr(\bm{\tau}_{\bm{\Lambda}},{D})- \sum_{{\rm X},i}p_{\rm X}p_{i|{\rm X}}{\rm tr}\left(E_{i|{\rm X}}\sigma_{\rm X}\right)\right]\nonumber\\
	&=\inf_{{D}}f({D}),
\end{align}
where the minimization is taken over all non-deterministic $D$, 
\begin{align}
f({D})\coloneqq\sup_{\mathbfcal{E}_{\bm{\Lambda}}\in\mathfrak{O}_{R|\rmt,\bm{\Lambda}}}\sum_{{\rm X},i}p_{\rm X}p_{i|{\rm X}}{\rm tr}\left[E_{i|{\rm X}}\left(\mathcal{E}_{\rm X}(\tau_{\rm X})-\sigma_{\rm X}\right)\right],
\end{align}
and we have used the fact that identity map is a free operation in arriving at the second inequality.
Consider another function
\begin{align}
g\left(\{L_{\rm X}\}_{\rm X}\right)\coloneqq\max_{\mathbfcal{E}_{\bm{\Lambda}}\in\mathfrak{O}_{R|\rmt,\bm{\Lambda}}}\sum_{{\rm X}}{\rm tr}\left[L_{\rm X}\left(\mathcal{E}_{\rm X}(\tau_{\rm X})-\sigma_{\rm X}\right)\right].
\end{align}
Then one can check that Eq.~\eqref{Eq:non-negative-computation01} holds if and only if 
\begin{align}\label{Eq:non-negative-computation02}
0\le\min_{\substack{0\le L_{\rm X}\\\sum_{\rm X}{\rm tr}(L_{\rm X})\le d_{\rm max}}}g\left(\{L_{\rm X}\}_{\rm X}\right),
\end{align} 
where $d_{\rm max}\coloneqq\max_{\rm X}d_{\rm X}$.
To see why this is true, consider $L_{\rm X} = \sum_ip_{\rm X}p_{i|{\rm X}}E_{i|{\rm X}}$, we have $\min_{\substack{0\le L_{\rm X}\\\sum_{\rm X}{\rm tr}(L_{\rm X})\le d_{\rm max}}}g\left(\{L_{\rm X}\}_{\rm X}\right)\le\inf_{{D}}f({D})$.
Conversely, for every feasible $\{L_{\rm X}\}_{\rm X}$ of the minimization in Eq.~\eqref{Eq:non-negative-computation02}, one can define
\begin{align}
E_{0|{\rm X}} = \frac{L_{\rm X}}{\max_{\rm X}\norm{L_{\rm X}}_\infty+\delta}\quad\&\quad p_{\rm X} = \frac{1}{|{\bm{\Lambda}}|},
\end{align}
where $\delta>0$ is some positive value.
Then one can see that 
\begin{align}
\max_{\mathbfcal{E}_{\bm{\Lambda}}\in\mathfrak{O}_{R|\rmt,\bm{\Lambda}}}\sum_{{\rm X}}\frac{{\rm tr}\left[L_{\rm X}\left(\mathcal{E}_{\rm X}(\tau_{\rm X})-\sigma_{\rm X}\right)\right]}{|{\bm{\Lambda}}|(\max_{\rm X}\norm{L_{\rm X}}_\infty+\delta)}\ge\inf_{{D}}f({D})
\end{align} 
for every feasible $\{L_{\rm X}\}_{\rm X}$, which implies Eq.~\eqref{Eq:non-negative-computation02}.

Using the above observation, we apply Sion's minimax theorem to Eq.~\eqref{Eq:non-negative-computation02} (see also, e.g., Ref.~\cite{Paul}) and obtain
\begin{align}
0\le\max_{\mathbfcal{E}_{\bm{\Lambda}}\in\mathfrak{O}_{R|\rmt,\bm{\Lambda}}}\min_{\substack{0\le L_{\rm X}\\\sum_{\rm X}{\rm tr}(L_{\rm X})\le d_{\rm max}}}\sum_{{\rm X}}{\rm tr}\left[L_{\rm X}\left(\mathcal{E}_{\rm X}(\tau_{\rm X})-\sigma_{\rm X}\right)\right].
\end{align}
Let $\mathbfcal{E}_{\bm{\Lambda}}^{\rm opt}$ be the one achieving the maximisation.
Then we have
\begin{align}
0\le\min_{\substack{0\le L_{\rm X}\\\sum_{\rm X}{\rm tr}(L_{\rm X})\le d_{\rm max}}}\sum_{{\rm X}}{\rm tr}\left[L_{\rm X}\left(\mathcal{E}_{\rm X}^{\rm opt}(\tau_{\rm X})-\sigma_{\rm X}\right)\right].
\end{align}
This means
$
\mathcal{E}_{\rm X}^{\rm opt}(\tau_{\rm X})\ge\sigma_{\rm X}
$
for every ${\rm X}$.
Since a semi-definite positive trace-less operator must be a zero operator, we conclude that
$
\mathbfcal{E}_{\bm{\Lambda}}^{\rm opt}(\bm{\tau}_{\bm{\Lambda}}) = \bm{\sigma}_{\bm{\Lambda}}.
$
The proof is thus completed.
\end{proof}

\subsection{POVM Elements for the Numerical Example}\label{App:ExamplePOVMs}
For each $\rmx\in\{{\rm AB, AC}\}$, the POVM elements that we use in the example in the main text are constructed from the operators 
$M_{i|{\rm X}}\coloneqq d_{\rm X}\omega_{i|{\rm X}}\mathcal{U}_{i|{\rm X}}(\proj{\psi_{i|{\rm X}}})$ 
where (with respect to the computational basis $\{\ket{00},\ket{01},\ket{10},\ket{11}\}$)
\begin{align}
    \omega_{1|{\rm X}} &= 0.010000027026545,\quad  
    \omega_{2|{\rm X}} = 0.010000058075968, \nonumber  \\
    \omega_{3|{\rm X}} &= 0.458638621962197,\quad
    \omega_{4|{\rm X}} = 0.537143367559183.
\end{align}
and
\begin{align}
    \ket{\psi_{1|{\rm X}}} &= 0.668877697040469 \ket{01} -0.743372468148935 \ket{10}, \nonumber \\
    \ket{\psi_{2|{\rm X}}} &= \ket{11}, \quad
    \ket{\psi_{3|{\rm X}}} = \ket{00}, \nonumber \\
    \ket{\psi_{4|{\rm X}}} &= 0.743372468148935 \ket{01} + 0.668877697040469 \ket{10}.
\end{align}
These are taken from the eigenstates and eigenvalues of $W_{\rm X} + 0.01\times\id_{\rm X}$, where $\{W_{\rm X}\}_{\rm X}$ are the witness operators from Theorem~\ref{Result:RTWitness} that can be found by SDP (using the terminology in the proof of Theorem~\ref{Result:DiscriminationTask}, it also means that we choose $\Delta = 0.01$).
The actual POVMs used in the calculation are given by
\begin{subequations}
\begin{align}
&E_{i|{\rm X}} = \frac{M_{i|{\rm X}} + 0.01\times\id_{\rm X}}{\mu_{\rm X} + 0.01}\;\;{\rm for}\;i=1,2,3,4\quad\&\quad E_{5|{\rm X}} = \id_{\rm X} - \sum_{i=1}^{4} E_{i|{\rm X}},
\end{align}
\end{subequations} 
where $\mu_{\rm X} = {\lVert \sum_{i=1}^{4} M_{i|{\rm X}} + 0.01\times\id_{\rm X}  \rVert_{\infty}}$.
These POVM elements are obtained following the construction given in the proof of \cref{Result:DiscriminationTask}.
The witness operators $W_{\bm{\Lambda}}$, in turn, were obtained by solving a semidefinite program (see Ref.~\cite{Tabia}) that can be used to certify the entanglement transitivity of $\bm{\sigma}_{\bm{\Lambda}}^{W}$.

\end{document}